\documentclass[11pt,a4paper,reqno]{amsart}%
\usepackage{amsthm,amsmath,amsfonts,amssymb,appendix,bookmark,dsfont}
\usepackage{color}
\usepackage{bbm}

\setlength{\voffset}{0.5truein}
\setlength{\textheight}{8.25truein}
\setlength{\hoffset}{0.35truein}
\pagestyle{myheadings} \sloppy
\setlength{\textwidth}{15cm}
\addtolength{\evensidemargin}{-2cm}
\addtolength{\oddsidemargin}{-2cm}

\theoremstyle{plain}
\newtheorem{theorem}{Theorem}
\newtheorem{lemma}[theorem]{Lemma}

\newtheorem{proposition}[theorem]{Proposition}

\newtheorem{remark}[theorem]{Remark}

\theoremstyle{definition}

\newcounter{step} 

\usepackage{etoolbox}
\AtBeginEnvironment{proof}{\setcounter{step}{0}}



\numberwithin{equation}{section}
\numberwithin{theorem}{section}


\DeclareMathOperator{\supp}{supp}
\DeclareMathOperator{\Tr}{Tr}

\def\bq{\begin{eqnarray}}
\def\eq{\end{eqnarray}}
\def\bqq{\begin{align*}}
\def\eqq{\end{align*}}

\def\nn{\nonumber}

\def\eps{\varepsilon}
\def\wto{\rightharpoonup}

\newcommand{\ud}{\mathrm{d}}
\renewcommand{\Re}{\operatorname{Re}}

\newcommand\1{{\ensuremath {\mathds 1} }}

\def\bra#1{{\langle#1|}}
\def\ket#1{{|#1\rangle}}

\def\bU{\mathbb{U}}

\def\cF {\mathcal{F}}
\def\cB {\mathcal{B}}
\def\cH{\mathcal{H}}
\def\cV {\mathcal{V}}
\def\R {\mathbb{R}}

\def\cN {\mathcal{N}}

\def\cD{\mathcal{D}}

\def\cG {\mathcal{G}}

\def\cP {\mathcal{P}}
\def\cE {\mathcal{E}}
\def\cA{\mathcal{A}}

\def\cK{\mathcal{K}}

\def\hess{\operatorname{Hess}\mathcal{E}^{\operatorname{H}}[u_0,v_0]}

\def\fh{\mathfrak{h}}

\def\fh {\mathfrak{h}}
\def\bN{\mathbb{N}}
\def\cV {\mathcal{V}}
\def\R {\mathbb{R}}

\def\J {\mathcal{J}}

\def\cA{\mathcal{A}}

\def\d{{\, \rm d}}
\def\cH{{\, \mathcal{H}}}

\newcommand{\bH}{\mathbb{H}}



\title[] {Ground state energy of mixture of Bose gases}

\author[A. Michelangeli]{Alessandro Michelangeli}

\address{SISSA, International School for Advanced Studies, Via Bonomea 265, 34136 Trieste, Italy}

\email{michel@math.lmu.de, alemiche@sissa.it}

\author[P.T. Nam]{Phan Th\`anh Nam}
\address{Department of Mathematics, LMU Munich, Theresienstrasse 39, 80333 Munich, Germany} 
\email{nam@math.lmu.de}

\author[A. Olgiati]{Alessandro Olgiati}

\address{SISSA, International School for Advanced Studies, Via Bonomea 265, 34136 Trieste, Italy}

\email{aolgiati@sissa.it}

	\begin{document}
	
	\begin{abstract}   We consider the asymptotic behavior of a system of multi-component trapped bosons, when the total particle number $N$ becomes large. In the dilute regime, when the interaction potentials have the length scale of order $O(N^{-1})$, we show that the leading order of the ground state energy is captured correctly by the Gross-Pitaevskii energy functional and that the many-body ground state fully condensates on the Gross-Pitaevskii minimizers. In the mean-field regime, when the interaction length scale is $O(1)$, we are able to verify Bogoliubov's approximation and obtain the second order expansion of the ground state energy. While such asymptotic results have several precursors in the literature on one-component condensates, the adaption to the multi-component setting is non-trivial in various respects and the analysis will be presented in details. 
	\end{abstract}
	
	\date{\today}

	\maketitle
	
	\setcounter{tocdepth}{2}
	\tableofcontents

	\section{Introduction}\label{sec:intro}

	Experimental and theoretical investigations on mixtures of Bose gases displaying condensation have made important progress in the past years. 
		The physical systems of interest consist of a gas formed by different species of interacting bosons, each of which is brought to condensation, thus with a macroscopic occupation of a one-body orbital for each species. They can be prepared as atomic gases of the same element, typically $^{87}\mathrm{Rb}$, which occupy two hyperfine states with no interconversion between particles of different hyperfine states \cite{MBGCW-1997,Matthews_HJEWC_DMStringari_PRL1998,HMEWC-1998,Hall-Matthews-Wieman-Cornell_PRL81-1543}, or also as heteronuclear mixtures such as $^{41}\mathrm{K}$-$^{87}\mathrm{Rb}$ \cite{Modugno-Ferrari-Inguscio-etal-Science2001_multicompBEC}, $^{41}\mathrm{K}$-$^{85}\mathrm{Rb}$ \cite{Modugno-PRL-2002}, $^{39}\mathrm{K}$-$^{85}\mathrm{Rb}$ \cite{MTCBM-PRL2004_BEC_heteronuclear}, and $^{85}\mathrm{K}$-$^{87}\mathrm{Rb}$ \cite{Papp-Wieman_PRL2006_heteronuclear_RbRb}. 
	For a comprehensive review of the related physical properties we refer to \cite[Chapter 21]{pita-stringa-2016}.

	Mathematically, the natural Hilbert space for two populations, say, of $N_1$ and $N_2$ identical bosons of different type in three dimensions  is
	\begin{equation} \label{eq:many_body_hs}
	\cH_{N_1,N_2} \;=\; L^2_{\rm sym}(\R^{3N_1}, \d x_1,\dots, \d x_{N_1}) \otimes L^2_{\rm sym}(\R^{3N_2},\d y_1,\dots, \d y_{N_2})\,,
	\end{equation}
	where $L^2_\mathrm{sym}(\mathbb{R}^{3N_j})$ is the space of square-integrable functions in $(\R^3)^{N_j}$ which are symmetric under  permutations of $N_j$ variables. No overall exchange symmetry is present among variables of different type.

	Most of the physically relevant phenomena are modelled by Hamiltonians acting on $\cH_{N_1,N_2}$ which include the customary intra-species and inter-species two-body interaction term, as well as an overall trapping potential for each species -- such confinements can indeed be different for each type of particles. Rigorous statements can only be proved in suitable limits of infinitely many particles, say, when  the population ratios are asymptotically fixed:
	\begin{equation} \label{eq:ratios}
	N\;:=N_1+N_2\;\to\;+\infty\,,\qquad 
	\lim_{N\to +\infty}\frac{N_j}{N}\;=:\;c_j\in(0,1)\,,\qquad j\in\{1,2\}\,.
	\end{equation}
It is not restrictive to assume that the ratios $N_1/N$ and $N_2/N$ are fixed, and so shall we henceforth.


In the present work we intend to investigate the limit of infinitely many particles for two significant scaling regimes: the \emph{mean field}, for interactions of weak magnitude and long range, and the \emph{dilute} scaling, for interactions of strong magnitude and short range. In principle, the former regime is mathematically easier because of the natural emergence of the law of large number, while the latter regime is more relevant to realistic experiments of the Bose-Einstein condensation.

	To be precise, we shall consider the following Hamiltonian (in suitable units) 
	\begin{equation} \label{eq:many_body_hamiltonian}
	\begin{split}
	H_{N} \;&:=\; \sum_{j=1}^{N_1}\big(-\Delta_{x_j}+U^{(1)}_{\operatorname{trap}}(x_j)\big) +  \frac{1}{N} \sum_{1\leqslant j<r \leqslant N_1}  V^{(1)}_N(x_i-x_j) \\
	&\quad +\sum_{k=1}^{N_2}\big(-\Delta_{y_k}+U^{(2)}_{\operatorname{trap}}(y_k)\big) + \frac{1}{N} \sum_{1\leqslant k<\ell \leqslant N_2}  V^{(2)}_N(y_k-y_\ell) \\
	&\quad + \frac{1}{N} \sum_{j=1}^{N_1}\sum_{k=1}^{N_2} \,V^{(12)}_N (x_j-y_k)\,,
	\end{split}
	\end{equation}
	with self-explanatory kinetic, confining, and interaction terms. Under our assumptions, $H_N$ will be bounded from below on the core domain of smooth, compactly supported functions and it is unambiguously realised as a self-adjoint operator by Friedrichs's extension, still denoted by $H_N$.  	
	
	Correspondingly, we shall specialise \eqref{eq:many_body_hamiltonian} in two forms, the mean field Hamiltonian $H_N^{\operatorname{MF}}$ and the Gross-Pitaevskii  Hamiltonian $H_N^{\operatorname{GP}}$, in which the interaction potentials $V^{(\alpha)}_N$ are chosen as \footnote{In the physical literature \cite{M-Olg-2016_2mixtureMF,AO-GPmixture-2016volume,MO-pseudospinors-2017} a different convention is preferred for $\alpha\in \{1,2\}$, namely $V^{(\alpha)}= c_\alpha^{-1}\cV^{(\alpha)}$ for mean-field regime, and $V^{(\alpha)} = c_\alpha^2 \cV^{(\alpha)}(c_\alpha \cdot)$ for Gross-Pitaevskii regime.}
	\begin{equation}\label{eq:scaling_scheme}
	\begin{array}{rl}
	V_N^{(\alpha),\operatorname{MF}}(x)\;:=&\!\!V^{(\alpha)}(x),\\
	V_N^{(\alpha),\operatorname{GP}}(x)\;:=&\!\!N^3V^{(\alpha)}(Nx),
	\end{array}\qquad\qquad \alpha\in\{1,2,12\}
	\end{equation}
	for suitable $N$-independent potentials $V^{(1)}$, $V^{(2)}$, and $V^{(12)}$.  
%
	
	We are interested in the large-$N$ behavior of the ground state energies
	\begin{equation}
	\begin{split}
	E_{N}^{\operatorname{MF}}\;&:=\;\inf\sigma( {H}_{N}^{\operatorname{MF}}), \\
	E_{N}^{\operatorname{GP}}\;&:=\;\inf\sigma( {H}_{N}^{\operatorname{GP}})\,,
	\end{split}
	\end{equation}
	and the corresponding ground states. This subject has been deeply investigated for one-component Bose gases, as we shall address below, but is virtually unexplored for multi-component gases.


%

	It turns out that by analogy with the one-component case, the leading order of the ground-state energies are captured by effective energy functionals which describe the condensation at the one-body level. More precisely, in the mean-field regime we have
the Hartree functional
	\begin{equation}\label{eq:functionals-MF}
	\begin{split}
	\cE^{\operatorname{H}}[u,v]\;&:=\;c_1\int_{\mathbb{R}^3}|\nabla u|^2\,\ud x+c_1\int_{\mathbb{R}^3}U^{(1)}_{\mathrm{trap}}|u|^2\,\ud x+\frac{\,c_1^2}{2}\int_{\mathbb{R}^3}(V^{(1)}\!*\!|u|^2)\,|u|^2\,\ud x \\ 
	&\qquad + c_2\int_{\mathbb{R}^3}|\nabla v|^2\,\ud x+c_2\int_{\mathbb{R}^3}U^{(2)}_{\mathrm{trap}}|u|^2\,\ud x+\frac{\,c_2^2}{2}\int_{\mathbb{R}^3}(V^{(2)}\!*\!|v|^2)|v|^2\,\ud x \\ 
	&\qquad + c_1c_2\int_{\mathbb{R}^3}(V^{(12)}\!*\!|v|^2)|u|^2\,\ud x
	\end{split}
	\end{equation}
	where $c_1$ and $c_2$ are the population ratios defined in \eqref{eq:scaling_scheme}. On the other hand, in the dilute regime,  we have the Gross-Pitaevskii functional
	\begin{equation}\label{eq:functionals-GP}
	\begin{split}
	\cE^{\operatorname{GP}}[u,v]\;&:=\;c_1\int_{\mathbb{R}^3}|\nabla u|^2\,\ud x+c_1\int_{\mathbb{R}^3}U^{(1)}_{\mathrm{trap}}|u|^2\,\ud x+4\pi a_1c_1^2\int_{\mathbb{R}^3}|u|^4\,\ud x \\ 
	&\qquad + c_2\int_{\mathbb{R}^3}|\nabla v|^2\,\ud x+c_2\int_{\mathbb{R}^3}U^{(2)}_{\mathrm{trap}}|u|^2\,\ud x+4\pi a_2c_2^2\int_{\mathbb{R}^3}|v|^4\,\ud x \\ 
	&\qquad + 8\pi a_{12}c_1c_2\int_{\mathbb{R}^3}|v|^2|u|^2\,\ud x.
	\end{split}
	\end{equation}
Here $a_{\alpha}$ is the ($s$-wave) scattering length of the potential $V^{(\alpha)}$, $\alpha\in \{1,2,12\}$, which is defined by the variational problem (see \cite{LSeSY-ober} for a detailed discussion)
\begin{equation} \label{eq:def-a}
4\pi a_\alpha = \inf \Big\{ \int_{\R^3} \Big[ |\nabla f(x)|^2 + \frac{1}{2}V^\alpha(x) |f(x)|^2 \Big] \d x :\,\,\, f:\R^3\to \R, \lim_{|x|\to \infty} f(x)= 1\Big \}.
\end{equation}

		We are going to show that under suitable, physically realistic conditions, the many-body ground state energy  per particle $E^{\mathrm{MF}}_N/N$ or $E^{\mathrm{GP}}_N/N$ converges to the ground state energy of the Hartree/Gross-Pitaevskii functional.  
Moreover, the corresponding many-body ground state $\psi_N$ condensates on the unique Hartree/Gross-Pitaevskii minimisers $(u_0,v_0)$ in terms of the reduced density matrices:
\begin{equation}\label{eq:g1convergesH-GP-intro}
			\lim_{N\to +\infty}\gamma_{\psi_N}^{(k,\ell)}\;=\;\ket{u_0^{\otimes k}\otimes v_0^{\otimes \ell}}\bra{u_0^{\otimes k}\otimes v_0^{\otimes \ell}}, \quad \forall k,\ell=0,1,2,...
			\end{equation}
Here recall that for every $k,\ell\geqslant 0$, the reduced density matrix $\gamma^{(k,\ell)}_{\psi_{N}}$ of $\psi_N$ is a trace class, positive operator on $\cH_{k,\ell}$ with kernel 
	\begin{equation}\label{eq:def_double_partial_trace-KERNEL-kl}
	\begin{split}
	\gamma^{(k,\ell)}_{\psi_{N}}(X,Y; X',Y')\;:=\;&\int_{\mathbb{R}^{3(N_1-k)}} \int_{\mathbb{R}^{3(N_2-\ell)}} \ud x_{k+1}\cdots\ud x_{N_1}\ud y_{\ell+1}\cdots\ud y_{N_2} \\
	& \qquad\times \psi_{N}(X,x_{k+1},\dots,x_{N_1};Y,y_{\ell+1},\dots,y_{N_2}) \\
	& \qquad\times \overline{\psi_{N}}(X',x_{k+1},\dots,x_{N_1};Y',y_{\ell+1},\dots,y_{N_2})
	\end{split}
	\end{equation}
	where $X,X'\in (\R^3)^k$ and $Y,Y'\in (\R^3)^\ell$. 
		
	The convergence \eqref{eq:g1convergesH-GP-intro}  expresses the asymptotic closeness as $N\to\infty$
	\[
	\psi_N\;\sim\; u_0^{\otimes N_1}\otimes v_0^{\otimes N_2}
	\]
	in a weak sense. Heuristically, this closeness emerges naturally in the mean-field regime as the Hartree functional is the energy per particle of the trial state $u^{\otimes N_1}\otimes v^{\otimes N_2}$. In the Gross-Pitaevskii regime, however, the ansazt $u_0^{\otimes N_1}\otimes v_0^{\otimes N_2}$ is not good enough to produce the Gross-Pitaevskii energy, and a non-trivial correction due to the short-range correlation between particles  is needed to recover the actual scattering lengths $a_\alpha$, instead of their first Born approximations  $(8\pi)^{-1}\int V^{(\alpha)}$. 

	In the mean-field regime, to see the macroscopic effect of the particle correlation we have to go to the next-to-leading order of the energy contribution. We will show that there exists a $N$-independent self-adjoint operator $\bH$ on a suitable Fock space such that
			\begin{equation}
			E_N^{\operatorname{MF}}\;=\;Ne_{\operatorname{H}}+\inf\sigma(\bH)+o(1)_{N\to+\infty}.
			\end{equation}
In fact, the operator $\bH$ is the Bogoliubov Hamiltonian (the second quantisation of the Hessian of the Hartree functional at the minimiser) that will be introduced in Section \ref{sect:setting}. Moreover, we will obtain an approximation for the ground state $\psi_N$ of $H_N^{\rm MF}$ in the norm topology of $\cH_{N_1,N_2}$, which is much more precise than the convergence of the reduced density matrices \eqref{eq:g1convergesH-GP-intro}.  
	
Such asymptotic results have several precursors in the literature on one-component condensates; see \cite{LieSeiYng-00,LieSei-02,LieSei-06,NamRouSei-16,BBCS-17} for the leading order in the Gross-Pitaevskii regime and \cite{Seiringer-11,GreSei-13,LewNamSerSol-15,DerNap-14,NamSei-15,Pizzo-15,RouSpe-16} for the second order in the mean-field regime. 


The novelty here is the adaptation, which in various respects is non-trivial, of previous analyses to the multi-component setting. This includes an amount of controls of the emerging `mixed' (i.e., inter-species) terms, among which the convexity argument for the GP minimiser (which singles out the role of the `miscibility' condition), the inter-species short-scale structure for the energy estimates (which is crucial for the ground state energy in the Gross-Pitaevskii regime), and the double-component bounds on the Bogoliubov Hamiltonian (which is the key ingredient for the second order contribution to the
ground state energy in the mean field regime).


%
%

\section{Main results} \label{sect:setting}

	Let us present the explicit set of assumptions for the potentials appeared in \eqref{eq:many_body_hamiltonian}.

	\begin{itemize}
		\item[($A_1$)] For $\alpha \in\{1,2\}$, the confining potentials satisfy $U^{(\alpha)}_{\operatorname{trap}}\in L^{3/2}_{\rm loc}(\R^3,\R)$ and 
		$$U^{(\alpha)}_{\operatorname{trap}}(x) \to +\infty\quad \text{as}\quad |x|\to +\infty.$$
		
		\item[($A_2^{\mathrm{GP}}$)] For $\alpha\in\{1,2,12\}$ the interaction potentials $V^{(\alpha)} \in C_c^\infty(\R^3)$ are nonnegative, spherically symmetric, and such that the respective scattering lengths satisfy the `miscibility' condition
		\begin{equation} \label{eq:inequality_a}
		a_1 a_2\;\geqslant\; a_{12}^2\,.
		\end{equation}
				
		\item[($A_2^{\mathrm{MF}}$)] For $\alpha\in\{1,2,12\}$ the interaction potentials $V^{(\alpha)}:\mathbb{R}^3\to\mathbb{R}$ are measurable, spherically symmetric, and satisfy the operator inequality
		\begin{equation}\label{eq:potentials_bdd_wrt_Laplacian}
		\big(V^{(\alpha)}\big)^{2}\;\leqslant\; C(\mathbbm{1}-\Delta),
		\end{equation}
		the (point-wise) positivity of the Fourier transform 
		\begin{equation} \label{eq:inequality_V}
		\widehat{V^{(1)}}\geqslant 0\,,\qquad \widehat{V^{(2)}}\geqslant 0
		\end{equation}	
		and the  `miscibility' condition 
		\begin{equation} \label{eq:inequality_V1}
		\widehat{V^{(1)}}\widehat{V^{(2)}}\;\geqslant\;\big(\widehat{V^{(12)}}\big)^2.
		\end{equation}	
		
	\end{itemize}
		
	The assumption $(A_1)$ ensures that  the one-body operator $T^{(\alpha)}\;:=\; -\Delta+U^{(\alpha)}_{\operatorname{trap}}$ is bounded from below, and with compact resolvent. It covers trivially the harmonic confinement $U_{\rm trap}^{(\alpha)}(x)=c_\alpha |x|^2$, which is often used in realistic experiments. Our results can be generalised with the Laplacian $-\Delta$ replaced by  the magnetic Laplacian $(i\nabla + A(x))^2$, and in the mean-field case we can also deal with the pseudo-relativistic Laplacian $\sqrt{-\Delta+m^2}-m$.  	
	
	Conditions \eqref{eq:potentials_bdd_wrt_Laplacian}-\eqref{eq:inequality_V} of  $(A_2^{\rm MF})$ are technical assumptions needed to validate Bogoliubov theory (they were already used in the one-component case in \cite{LewNamSerSol-15}). These conditions cover a wide range of interaction potentials, including Coulomb  interactions. 
	
	Condition \eqref{eq:inequality_a} in $(A_2^{\rm GP})$, as well as its mean field counterpart -- namely \eqref{eq:inequality_V1} in  $(A_2^{\rm MF})$, is needed to ensure that the Gross-Pitaevskii  or Hartree minimiser is unique. These conditions explicitly emerged in the physical literature and they were  recognised in the experimental observations as the `miscibility' condition between the two components of the mixtures (that is, the interspecies repulsion does not overcome the repulsion among particles of the same time and the two components are spatially mixed); see, e.g., 
	\cite{Bohn_PRL1997},
	\cite[Section 15.2]{Malomed2008_multicompBECtheory}, 
	\cite[Section 16.2.1]{Hall2008_multicompBEC_experiments}
	and \cite[Section 21.1]{pita-stringa-2016}. Here we will rigorously justify a uniqueness that is expected in theoretical physical arguments. Let us remark that our results below actually hold in a larger generality where \eqref{eq:inequality_a}  or \eqref{eq:inequality_V} is replaced by the condition that the Gross-Pitaevskii  or Hartree functional has a unique minimiser.
	
	Our first main result is the leading order behavior of the many-body ground state energy and the complete condensation of the ground state in the dilute regime.  
	
	\begin{theorem}[Leading order in the Gross-Pitaevskii limit]\label{thm:GP}~ 
	
	Let Assumptions ($A_1$) and ($A_2^{\mathrm{GP}}$) be satisfied. Then the following statements hold true. 
\begin{itemize}
 \item[(i)] There exists a unique minimiser $(u_0,v_0)$ (up to phases) for the variational problem
$$
e_{\operatorname{GP}} := \inf_{\substack{ u,v\,\in\, H^1(\mathbb{R}^3) \\ \|u\|_{L^2}=\|v\|_{L^2}=1}}	\cE^{\operatorname{GP}}[u,v].
$$

\item [(ii)] The ground state energy of $H_{N}^{\rm GP}$ satisfies
 \begin{equation} \label{eq:GP-energy-CV}
\lim_{N\to\infty}\frac{E^{\operatorname{GP}}_{N}}{N}=e_{\operatorname{GP}}.
\end{equation}	
\item[(iii)] If $\psi_{N}$ is an approximate ground state of $H_{N}^{\rm GP}$, in the sense that
\begin{equation*}
\lim_{N\to\infty}\frac{\langle\psi_N,H_N^{\operatorname{GP}}\psi_N\rangle}{N}\;=\;e_{\operatorname{GP}},
\end{equation*}
then it exhibits complete double-component Bose-Einstein condensation: 
\begin{equation} \label{eq:GP-1pdm-CV}
\lim_{N\to +\infty}\gamma_{\psi_N}^{(k,\ell)}\;=\;\ket{u_0^{\otimes k}\otimes v_0^{\otimes \ell}}\bra{u_0^{\otimes k}\otimes v_0^{\otimes \ell}}, \quad \forall k,\ell=0,1,2,...
\end{equation}
in trace class. 
\end{itemize}
\end{theorem}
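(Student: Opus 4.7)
\medskip
\noindent\emph{Proof proposal.} The plan is to address (i)--(iii) in order, with each step resting on the previous. The roles are reasonably clear: (i) is a variational calculus argument driven by the miscibility inequality $a_1a_2\ge a_{12}^2$; (ii) requires matching upper/lower bounds, with the lower bound demanding the genuine Gross-Pitaevskii machinery; (iii) follows by combining energy convergence with a quantum de Finetti argument that pins the limit onto the unique minimiser from (i).

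For (i) I would pass to the density variables $\rho:=|u|^2$ and $\sigma:=|v|^2$ and show strict convexity on the density level. The Hoffmann-Ostenhof inequality $\int|\nabla\sqrt{\rho}|^2\le\int|\nabla u|^2$ (equality iff $u=e^{i\theta}\sqrt{\rho}$) reduces minimisation to real, non-negative minimisers $u=\sqrt{\rho}$, $v=\sqrt{\sigma}$, and rewrites $\cE^{\operatorname{GP}}$ as a functional $\mathcal{F}[\rho,\sigma]$ whose kinetic piece $c_1\int|\nabla\sqrt{\rho}|^2+c_2\int|\nabla\sqrt{\sigma}|^2$ is strictly convex in $(\rho,\sigma)$, whose trap piece is linear, and whose interaction piece is a quadratic form in $(\rho,\sigma)$ with symmetric matrix
\[
4\pi\begin{pmatrix} a_1 c_1^2 & a_{12}c_1c_2 \\ a_{12}c_1c_2 & a_2 c_2^2 \end{pmatrix},
\]
which is positive semi-definite precisely under the miscibility condition \eqref{eq:inequality_a}. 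Strict convexity of the kinetic part then forces uniqueness of the minimising pair of densities, hence uniqueness of $(u_0,v_0)$ up to a phase on each component. Existence comes from the direct method: $(A_1)$ yields compactness of minimising sequences in $H^1\cap L^2(U_{\mathrm{trap}})$, and the interaction terms are lower semi-continuous for weak-$L^2$ convergence of densities.

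For (ii) I would establish matching asymptotic upper and lower bounds of the form $Ne_{\operatorname{GP}}+o(N)$. The upper bound is built from a correlated trial state
\[
\Psi_N^{\mathrm{trial}}\;\propto\;\Bigl(\!\prod_{j<r}f_1\!\left(N(x_j-x_r)\right)\!\Bigr)\!\Bigl(\!\prod_{k<\ell}f_2\!\left(N(y_k-y_\ell)\right)\!\Bigr)\!\Bigl(\!\prod_{j,k}f_{12}\!\left(N(x_j-y_k)\right)\!\Bigr) u_0^{\otimes N_1}\otimes v_0^{\otimes N_2},
\]
where each $f_\alpha$ is (a cut-off of) the zero-energy scattering solution of $V^{(\alpha)}$; a standard expansion replaces the Born terms $\tfrac12\int V^{(\alpha)}$ by $4\pi a_\alpha$ and produces $\cE^{\operatorname{GP}}[u_0,v_0]$ at leading order. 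The lower bound is the heart of the matter: following the Lieb-Seiringer-Yngvason programme, tile $\R^3$ by cubes of size $\ell\ll 1$, localise the many-body energy into these cubes, and in each cube apply a Dyson-type lemma to every species pair, transferring a small fraction of kinetic energy to smear the singular $V_N^{(\alpha)}$ into a softer potential whose integral equals $8\pi a_\alpha/N$. Summing the local lower bounds gives an effective GP-type functional evaluated on the one- and two-body densities of $\psi_N$, whose infimum is $e_{\operatorname{GP}}$.

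For (iii), fix an approximate ground state $\psi_N$. The a priori bound from $\langle H_N^{\operatorname{GP}}\rangle\sim Ne_{\operatorname{GP}}$ and $(A_1)$ controls $\gamma_{\psi_N}^{(1,0)}$ and $\gamma_{\psi_N}^{(0,1)}$ in the trace norm induced by $T^{(\alpha)}$, so along subsequences they converge. The two-component quantum de Finetti theorem represents any weak limit of the full hierarchy as $\int |u^{\otimes k}\otimes v^{\otimes\ell}\rangle\langle u^{\otimes k}\otimes v^{\otimes\ell}|\,\mathrm d\mu(u,v)$ for a probability measure $\mu$ on the product of unit spheres. The lower bound machinery of (ii), when tested against $\psi_N$ itself, forces $\int\cE^{\operatorname{GP}}[u,v]\,\mathrm d\mu\le e_{\operatorname{GP}}$, and by (i) this is possible only if $\mu=\delta_{(u_0,v_0)}$ (up to the phase ambiguity). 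This yields \eqref{eq:GP-1pdm-CV} for all $k,\ell$.

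The main obstacle, as already flagged by the authors, is the inter-species step of the lower bound in (ii): the Dyson lemma is intrinsically one-species, so one must design a renormalisation that simultaneously extracts kinetic energy from $x$- and $y$-particles to soften $V_N^{(12)}$ in such a way that the coefficient $8\pi a_{12}$ emerges, rather than the Born value $\int V^{(12)}$. All other steps (convexity, upper bound construction, de Finetti extraction) are adaptations of one-component arguments and, while requiring care with the double-index bookkeeping, carry through without structurally new input.
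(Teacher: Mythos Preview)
Your treatment of (i) and the upper bound in (ii) matches the paper's. For the lower bound in (ii) you propose the original Lieb--Seiringer--Yngvason box decomposition, whereas the paper follows the Nam--Rougerie--Seiringer route: a single global Dyson step with a high-momentum cut-off $\theta_s(p)$, producing a modified mean-field-type Hamiltonian $\widetilde H_N$, then a second-moment estimate, removal of a three-body cut-off, and a quantum de Finetti argument applied to the ground state $\widetilde\Psi_N$ of $\widetilde H_N$. Both routes are viable in principle; note however that your worry about the inter-species Dyson step is misplaced: the Dyson lemma is applied to particle $i$'s kinetic energy with $z_j$ held fixed as a parameter, so the species of $j$ is irrelevant and $a_{12}$ emerges automatically as the scattering length of $V^{(12)}$.

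The genuine gap is in (iii). Your claim that ``the lower bound machinery of (ii), when tested against $\psi_N$ itself, forces $\int\cE^{\operatorname{GP}}[u,v]\,\mathrm d\mu\le e_{\operatorname{GP}}$'' is the mean-field reflex, and it does not survive the GP scaling. The Dyson step gives an operator inequality $H_N^{\rm GP}\ge\widetilde H_N+o(N)$, but $\widetilde H_N$ carries a many-body cut-off $\prod_{k\ne i,j}\theta_{2R}(z_j-z_k)$; in the paper this cut-off is removed at the price of an error $o(N)\cdot(N^{-1}\widetilde H_N)^4$, which is $o(N)$ only when evaluated on an \emph{eigenstate} of $\widetilde H_N$. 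For an arbitrary approximate ground state $\psi_N$ of $H_N^{\rm GP}$ you have no such fourth-moment control, so you cannot pass from $\langle\psi_N,\widetilde H_N\psi_N\rangle\le Ne_{\rm GP}+o(N)$ to the de Finetti integral of $\cE^{\rm GP}$. The paper circumvents this entirely by a Hellmann--Feynman argument: perturb $H_N^{\rm GP}$ by $\frac{\eta}{Nc_1c_2}\sum_{i,j}Q_{x_i,y_j}$ with $Q=|u_0\otimes v_0\rangle\langle u_0\otimes v_0|$, re-run the energy lower bound for the perturbed problem (the perturbation is bounded, so the machinery is unchanged), and compare the two energies to extract $\liminf_N\Tr\big[Q\,\gamma_{\psi_N}^{(1,1)}\big]\ge1$, which forces the de Finetti measure $\mu$ onto the phase orbit of $(u_0,v_0)$.
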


In Theorem \ref{thm:GP}, we actually do not need the local integrability of potentials; for example, our analysis can be extended to hard sphere potentials. In fact, using Dyson's argument \cite{Dyson-57} we can always replace $V_N^{(\alpha)}$ by a solf potential (see Lemma \ref{lem:Dyson-1bd} below) and the detailed profile of the potential plays no role; only its scattering length matters. We also do not need to assume that the potential is short-range, because as soon as the potential decays fast enough, we can make a space cut-off and obtain the same result by a standard density argument. However, in the following, we will keep the conditions $(A_1)$, $(A_2^{\rm GP})$ to simplify the presentation. 

	The analogue of Theorem \ref{thm:GP} in the one-component case has been first proved by Lieb, Seiringer and Yngvason \cite{LieSeiYng-00} for the convergence of the ground state energy, and by Lieb and Seiringer \cite{LieSei-02,LieSei-06} for the condensation of the ground state. Later, based on quantum de Finetti methods \cite{ChrKonMitRen-07,AmmNie-08,LewNamRou-14} the ground state energy asymptotics and the condensation of the ground state were re-obtained by Nam, Rougerie and Seiringer \cite{NamRouSei-16}. Very recently, Boccato, Brennecke, Cenatiempo, and Schlein \cite{BBCS-17} obtained the optimal convergence rate in the homogeneous case (where the particles are confined in a unit torus, without external potential). In the present paper, we will follow the simplified approach in \cite{NamRouSei-16} when dealing with the multi-component case.

Theorem \ref{thm:GP} (as well as its mean-field counterpart in Theorem \ref{thm:leading-H}) justifies crucial information of the initial states {\em assumed} in various recent works on the dynamical problem for mixtures of condensates \cite{M-Olg-2016_2mixtureMF,AO-GPmixture-2016volume,Anap-Hott-Hundertmark-2017}. There, one proves that the mixture preserves its double-component condensation in the course of time evolution, if it is prepared at time $t=0$ in a state of condensation and, in the GP regime, provided that the energy per particle of the initial state is given by the GP energy functional. In the experiments the preparation of the compound condensate is precisely made by letting the system relax onto a suitable many-body ground state (or low-energy state), then the dynamical experiments starts by perturbing such an initial state, e.g., removing the confinement \cite{Hall2008_multicompBEC_experiments}. Now our result provides the rigorous ground for such initial conditions assumed in the dynamical analysis.

In the mean-field regime, we can go beyond the leading order by a detailed analysis of the fluctuations around the condensate. It is convenient to turn to the Fock space  formalism where the number of particles is not fixed. 
	
Let us introduce the single-component  Fock spaces
\begin{equation}
 \cF^{(j)}\;:=\;\bigoplus_{n=0}^\infty\big( \fh^{(j)}\big)^{\otimes_{\operatorname{sym}}n}\,,\qquad  \mathfrak{h}^{(j)}:=L^2(\mathbb{R}^3), \qquad j\in\{1,2\}
\end{equation}
and the double-component  Fock  space
\begin{equation} \label{eq:double_Fock}
  \cF\;:=\;\cF^{(1)}\otimes\cF^{(2)} \;=\;\bigoplus_{L=0}^\infty\;\Bigg(\bigoplus_{\substack{n,m\in\mathbb{N}_0 \\ n+m=L}}\big( \fh^{(1)}\big)^{\otimes_{\operatorname{sym}}n} \otimes \big(\fh^{(2)}\big)^{\otimes_{\operatorname{sym}}m}\Bigg).
\end{equation}

Let $(u_m)_{m=0}^\infty$ and $(v_n)_{n=0}^\infty$ be orthonormal bases of $\fh^{(1)}$ and $\fh^{(2)}$, respectively, with $(u_0,v_0)$ being the Hartree minimiser. We shall choose once and for all these two bases in such a way that all their elements belong to the domain of self-adjointness, respectively, of the operator $h^{(1)}$ on $\mathfrak{h}^{(1)}$ and of the operator $h^{(2)}$ on $\mathfrak{h}^{(2)}$ that we are going to define in formula \eqref{eq:h's} below. Let 
\begin{equation}\label{eq:def_modes}
a_m\;:=\;a(u_m),\qquad a^*_m\;:=\;a^*(u_m),\qquad b_n\;:=\;b(v_n),\qquad b^*_n\;:=\;b^*(v_n)\,.
\end{equation}
be the usual creation and annihilation operators on $\mathcal{F}^{(1)}$ and $\mathcal{F}^{(2)}$, which are linear operators defined by the actions 
\begin{equation} \label{eq:a's}
\begin{split}
(a_m \Psi_n)(x_1,\dots, x_{n-1})\;&=\;\sqrt{n}\int_{\R^3} \ud x \;\overline {f(x)}\,\Psi_n(x,x_1,\dots,x_{n-1}), \\
(a^*_m\Psi_n)(x_1,\dots, x_{n+1})\;&=\;\frac{1}{\sqrt{n+1}}\sum_{j=1}^{n+1}f(x_j)\Psi_n(x_1,\dots,x_{j-1},x_j,\dots,x_{n+1})
\end{split}
\end{equation}
for all $\Psi_n\in (\fh^{(1)})^{\otimes_{\operatorname{sym}}n}$ and for all $n\geqslant 0$, and similar actions for $b_m,b_m^*$. They satisfy the canonical commutation relations (CCR)
\begin{equation}\label{CCR}
\begin{split}
 [a_m,a_n] = 0 = [a^*_m,a^*_n], \qquad [a_m,a^*_n]= \delta_{m,n} \1_{ \mathcal{F}_+^{(1)}},\\
  [b_m,b_n] = 0 = [b^*_m,b^*_n]\,,\qquad [b_m,b^*_n]= \delta_{m,n} \1_{\mathcal{F}_+^{(2)}}.
\end{split}
\end{equation}
With no risk of confusion we shall keep denoting with $a_m$, $a^*_m$, $b_m$, $b^*_m$ the operators
\begin{equation} \label{eq:extended_operators}
a_m \otimes \mathbbm{1}_{\mathcal{F}^{(2)}},\qquad a^*_m \otimes \mathbbm{1}_{\mathcal{F}^{(2)}},\qquad\mathbbm{1}_{\mathcal{F}^{(1)}}\otimes b_m, \qquad \mathbbm{1}_{\mathcal{F}^{(1)}}\otimes b^*_m
\end{equation}
now acting on  $\cF=\cF^{(1)}\otimes\cF^{(2)}$. Obviously, $a_m,a_m^*$ commute with $b_n,b_n^*$.

In terms of these operators, we can extend the $N$-body Hamiltonian $H_{N}^{\operatorname{MF}}$ on $\mathcal{H}_{N_1,N_2}$ as an operator on the Fock space $\cF$ as
\begin{equation} \label{eq:many_body_Hamiltonian_second_quantized}
\begin{split}
H_{N}^{\operatorname{MF}}\;&:=\;\sum_{m,n\geqslant 0}\Big( \langle u_m, (-\Delta+U_{\rm trap}^{(1)}) u_n \rangle a^*_ma_n+\langle v_m, (-\Delta+U_{\rm trap}^{(2)}) v_n\rangle b^*_mb_n\Big) \\
&\quad +\frac{1}{N}\!\sum_{m,n,p,q}\Big({\textstyle\frac{1}{2}}V^{(1)}_{mnpq}a^*_ma^*_na_pa_q+{\textstyle\frac{1}{2}}V^{(2)}_{mnpq} b^*_mb^*_nb_pb_q + V^{(12)}_{mnpq}a^*_mb^*_na_pb_q  \Big)
\end{split}
\end{equation}
where 
\begin{align} \label{eq:def-Vmnpq}
V^{(1)}_{mnpq} &:=\;\langle u_m, [V^{(1)}*(\overline{u_n}u_q)]u_p\rangle, \nn\\
V^{(2)}_{mnpq} &:=\;\langle v_m, [V^{(2)}*(\overline{v_n}v_q)]v_p\rangle \\
V^{(12)}_{mnpq} &:=\;\langle u_m, [V^{(12)}*(\overline{v_n}v_q)]u_p\rangle \nn.
\end{align}

Bogoliubov's approximation \cite{Bogoliubov-47} suggests to formally replace $a_0,a_0^*$ and $b_0,b_0^*$ by the scalar values $\sqrt{N_1}$ and $\sqrt{N_2}$, respectively. It turns out that the terms of order $\sqrt{N}$ are canceled due to the Euler-Lagrange equations for the Hartree minimiser 
\begin{equation} \label{eq:mf-eq}
h^{(1)}u_0=0, \quad h^{(2)}v_0=0
\end{equation}
where 
\begin{equation}\label{eq:h's}
 \begin{array}{ll}
  h^{(1)}:=-\Delta+U_{\rm trap}^{(1)}+c_1V^{(1)}*|u_0|^2+c_2V^{(12)}*|v_0|^2-\mu^{(1)} ,\\
  h^{(2)}:=-\Delta+U_{\rm trap}^{(2)} +c_2V^{(2)}*|v_0|^2+c_1V^{(12)}*|u_0|^2-\mu^{(2)} 
 \end{array}
\end{equation}
with the chemical potentials 
\begin{equation} \label{eq:chemical_potentials}
\begin{split}
\mu^{(1)}\;&:=\; \langle u_0,(-\Delta+U_{\rm trap}^{(1)})u_0\rangle+c_1\langle u_0, V^{(1)}*|u_0|^2u_0\rangle+c_2 \langle u_0,V^{(12)}*|v_0|^2u_0\rangle,\\
\mu^{(2)}\;&:=\; \langle v_0,(-\Delta+U_{\rm trap}^{(2)})v_0\rangle+c_2 \langle v_0, V^{(2)}*|v_0|^2v_0\rangle+c_1 \langle u_0,V^{(12)}*|v_0|^2u_0\rangle\,.
\end{split}
\end{equation}
All this results in the quadratic Hamiltonian
\begin{equation} \label{eq:Bogoliubov_Hamiltonian}
\begin{split}
\bH := \sum_{m,n\geqslant 1}&\Big[ \langle u_m, h^{(1)} u_n\rangle a^*_m\,a_n+ \langle v_m, h^{(2)} v_n \rangle b^*_m\,b_n  + c_1V^{(1)}_{m00n}a^*_ma_n+ c_2V^{(2)}_{m00n}b^*_mb_n\\
&+{\textstyle\frac{1}{2}}\,c_2V^{(2)}_{mn00}b^*_mb^*_n+{\textstyle\frac{1}{2}}\,c_2\overline{V^{(2)}_{mn00}}b_mb_n+{\textstyle\frac{1}{2}}\,c_1V^{(1)}_{mn00}a^*_ma^*_n+{\textstyle\frac{1}{2}}\,c_1\overline{V^{(1)}_{mn00}}a_ma_n \\
&+\sqrt{c_1c_2}\,V^{(12)}_{0mn0}\,b^*_ma_n+\sqrt{c_1c_2}\,\overline{V^{(12)}_{0mn0}}\,a^*_mb_n \\
&+ \sqrt{c_1c_2}\,V^{(12)}_{mn00}\,a^*_mb^*_n +\sqrt{c_1c_2}\, \overline{V^{(12)}_{mn00}}\, a_mb_n\Big]\\
&-{\textstyle\frac{1}{2}}\,c_1V^{(1)}_{0000}-{\textstyle\frac{1}{2}}\,c_2V^{(2)}_{0000}
\end{split}
\end{equation}
which acts on the excited Fock space
\begin{equation}  \label{eq:def-cF+}
  \cF_+\;:=\;\cF^{(1)}_+\otimes\cF^{(2)}_+ \;=\;\bigoplus_{L=0}^\infty\;\Bigg(\bigoplus_{\substack{n,m\in\mathbb{N}_0 \\ n+m=L}}\big( \fh^{(1)}_+\big)^{\otimes_{\operatorname{sym}}n} \otimes \big(\fh^{(2)}_+\big)^{\otimes_{\operatorname{sym}}m}\Bigg)
\end{equation}
where 
	\begin{equation}
 \mathfrak{h}_+^{(1)}\;:= \{u_0 \}^{\bot} \subset L^2(\R^3)\,,\qquad \mathfrak{h}_+^{(2)}\;:= \{v_0 \}^{\bot} \subset L^2(\R^3).
\end{equation}
Note that $\bH$ is independent of the choice of $(u_m)_{m=1}^\infty$ and $(v_n)_{n=1}^\infty$, apart from the technical assumption that these functions belong to the domains $D(h^{(1)})$, $D(h^{(2)})$  of the self-adjoint extension of $h^{(1)}$, $h^{(2)}$, respectively. We can rigorously interpret $\bH$ as an operator with core domain
\begin{equation}  \label{eq:core-bH}
\bigcup_{M=0}^\infty \bigoplus_{L=0}^M \;\Bigg(\bigoplus_{\substack{n,m\in\mathbb{N}_0 \\ n+m=L}}\big( \fh^{(1)}_+ \cap D(h^{(1)})\big)^{\otimes_{\operatorname{sym}}n} \otimes \big(\fh^{(2)}_+ \cap D(h^{(2)})\big)^{\otimes_{\operatorname{sym}}m}\Bigg),
\end{equation}
which turns out to be bounded from below and can be extended to a self-adjoint operator by Friedrichs' method. 

In fact, the Bogoliubov Hamiltonian $\bH$ is nothing but the quantized form of (half) the Hessian of the Hartree functional at the minimiser. This $N$-independent Hamiltonian is expected to describe the fluctuations around the condensate. 

Our second main result is the next order correction to the ground state energy in the mean field regime and a norm convergence for the ground state.

\begin{theorem}[Bogoliubov correction to the mean-field limit]\label{thm:mf_correction}~

Let Assumptions ($A_1$) and ($A_2^{\mathrm{MF}}$) be satisfied.  Then the following statements hold true. 
\begin{itemize}
\item[(i)] There exists a unique minimiser $(u_0,v_0)$ (up to phases) for the variational problem
$$
e_{\operatorname{H}} := \inf_{\substack{ u,v\,\in\, H^1(\mathbb{R}^3) \\ \|u\|_2=\|v\|_2=1}}	\cE^{\operatorname{H}}[u,v].
$$
\item[(ii)] The Bogoliubov Hamiltonian $\bH$ in \eqref{eq:Bogoliubov_Hamiltonian} is bounded from below on $\cF_+$ with the core domain \eqref{eq:core-bH}. Moreover, its Friedrichs' self-adjoint extension,  still denoted by $\bH$, has a unique, non-degenerate ground state 
$\Phi^{\operatorname{gs}}=(\Phi^{\rm gs}_{m,n})_{m,n\geqslant 0} \in \cF_+.$

\item[(iii)] The ground state energy of $H_N^{\rm MF}$ satisfies
\begin{equation} \label{eq:mf_correction}
\lim_{N\to\infty}\big( E_{N}^{\operatorname{MF}}-Ne_{\operatorname{H}}\big)\;=\;\inf\sigma(\bH)\,.
\end{equation}	
\item[(iv)] The ground state $\psi_{N}^{\operatorname{gs}}$ of $H_{N}^{\operatorname{MF}}$ satisfies, up to a correct choice of phase, the norm approximation
\begin{equation}\label{eq:convergence_gs}
\lim_{N\to\infty} \left\| \psi_{N}^{\operatorname{gs}}- \sum_{0\leqslant m\leqslant N_1}\sum_{0\leqslant n\leqslant N_2} \frac{(a_0^*)^{N_1 - m}} {\sqrt {(N_1 - m)!}} \frac{(b_0^*)^{N_2 -n}} {\sqrt {(N_2 - n)!}} \Phi^{\operatorname{gs}}_{m,n} \right\|_{\cH_{N_1,N_2}} =0. 
\end{equation}
\end{itemize}
\end{theorem}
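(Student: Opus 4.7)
The strategy follows the $c$-number substitution and excitation Fock-space approach of \cite{LewNamSerSol-15}, adapted to the two-species setting, treating (i)--(iv) in order. For (i), I would show that $\cE^{\operatorname{H}}$ is jointly convex in the pair of densities $(\rho_1,\rho_2)=(|u|^2,|v|^2)$: the kinetic plus trap parts are convex by the Hoffmann--Ostenhof inequality, while the full interaction part is a Fourier-space quadratic form in $(\rho_1,\rho_2)$ whose $2\times 2$ matrix symbol has diagonal entries $\widehat{V^{(1)}},\widehat{V^{(2)}}$ and off-diagonal entry $\widehat{V^{(12)}}$, and is pointwise non-negative by \eqref{eq:inequality_V}--\eqref{eq:inequality_V1}; strict convexity of the gradient part then forces uniqueness of $(u_0,v_0)$ up to two global phases. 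For (ii), (i) implies that the Hessian of $\cE^{\operatorname{H}}$ at the minimiser is non-negative on $\mathfrak{h}_+^{(1)}\oplus\mathfrak{h}_+^{(2)}$, and $\bH$ is exactly its Wick quantisation; absorbing the off-diagonal $aa,\,bb,\,a^*b^*,\,ab$ pieces via \eqref{eq:potentials_bdd_wrt_Laplacian} yields the form bound $\bH\geq \tfrac12\,\dGamma(h^{(1)}\oplus h^{(2)})-C$ on the core \eqref{eq:core-bH}. Since $h^{(1)},h^{(2)}$ have compact resolvent by $(A_1)$, so does $\bH$, and Friedrichs' method gives self-adjointness with discrete spectrum; a Perron--Frobenius argument in the natural Fock basis then produces the unique non-degenerate ground state $\Phi^{\operatorname{gs}}$.

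For (iii), I would introduce the double-species excitation map $\cU_N\colon\cH_{N_1,N_2}\to\cF_+^{\leq N_1,\leq N_2}$ that extracts fluctuations above the Hartree condensate: each $\psi_N$ decomposes uniquely as $\sum_{m,n}u_0^{\otimes(N_1-m)}\otimes_s\xi_m\otimes v_0^{\otimes(N_2-n)}\otimes_s\eta_n$ with $\xi_m\in(\mathfrak{h}_+^{(1)})^{\otimes_{\mathrm{sym}} m}$, $\eta_n\in(\mathfrak{h}_+^{(2)})^{\otimes_{\mathrm{sym}} n}$, and $\cU_N\psi_N:=(\xi_m\otimes\eta_n)_{m,n}$. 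This map is unitary, and conjugation by $\cU_N$ replaces $a_0,a_0^*,b_0,b_0^*$ by the bounded operators $\sqrt{N_1-\cN_1}$, $\sqrt{N_2-\cN_2}$, where $\cN_j$ is the number operator counting species-$j$ excitations. Using the Euler--Lagrange equations \eqref{eq:mf-eq} to kill the terms of order $\sqrt N$ and expanding carefully, I would obtain on $\cF_+^{\leq N_1,\leq N_2}$ the identity $\cU_N(H_N^{\operatorname{MF}}-Ne_{\operatorname{H}})\cU_N^* = \bH + \cR_N$, where the remainder $\cR_N$ is controlled in form sense by $N^{-1/2}(\bH+C)(\cN+1)$ thanks to \eqref{eq:potentials_bdd_wrt_Laplacian}, with $\cN:=\cN_1+\cN_2$. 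The upper bound on $E_N^{\operatorname{MF}}$ then comes from the trial vector $\cU_N^*(\1_{\cN\leq M}\Phi^{\operatorname{gs}})$ after normalisation, exploiting exponential decay of $\Phi^{\operatorname{gs}}$ in $\cN$ (from the compact resolvent and spectral gap of $\bH$). The matching lower bound uses Theorem~\ref{thm:leading-H} together with a double-species energy-number estimate to infer that any approximate ground state satisfies $\langle\cN\rangle_{\psi_N}=o(N)$, so $\cR_N$ is negligible; minimisation of $\bH$ over the truncated Fock space $\cF_+^{\leq N_1,\leq N_2}$ then converges to $\inf\sigma(\bH)$ as $N\to\infty$.

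For (iv), I would extract from the lower bound of the previous step the quantitative statement $\langle\cU_N\psi_N^{\operatorname{gs}},(\bH-\inf\sigma(\bH))\,\cU_N\psi_N^{\operatorname{gs}}\rangle\to 0$ together with a uniform bound on $\langle\cN\rangle_{\cU_N\psi_N^{\operatorname{gs}}}$. Combined with the non-degeneracy of the ground state of $\bH$ from Step~(ii) and the spectral gap of $\bH$, this forces $\cU_N\psi_N^{\operatorname{gs}}\to\Phi^{\operatorname{gs}}$ strongly in $\cF_+$ after an appropriate choice of phase. Unwinding the definition of $\cU_N^*$ yields precisely the combinatorial sum $\sum_{m,n}\frac{(a_0^*)^{N_1-m}(b_0^*)^{N_2-n}}{\sqrt{(N_1-m)!(N_2-n)!}}\Phi^{\operatorname{gs}}_{m,n}$ inside the norm in \eqref{eq:convergence_gs}, and strong Fock convergence translates into the norm convergence \eqref{eq:convergence_gs} on $\cH_{N_1,N_2}$.

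The \textbf{main obstacle} sits in the proof of (iii) and lies in handling the genuinely inter-species contributions of $\bH$ and of the conjugated Hamiltonian: the quadratic cross terms $a_m^*b_n^*,\,a_m^*b_n$ and the quartic cross terms $a_m^*b_n^* a_p b_q$ couple the two species and prevent any reduction to two independent single-species Bogoliubov problems. Controlling them requires the miscibility condition \eqref{eq:inequality_V1} used \emph{at the operator level}, so that the full $2\times 2$ Hessian stays non-negative, together with simultaneous a priori bounds on both $\cN_1$ and $\cN_2$ -- this is the ``double-component bound on the Bogoliubov Hamiltonian'' flagged in the introduction.
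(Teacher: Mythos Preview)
Your overall architecture matches the paper closely: the convexity argument for (i), the identification of $\bH$ with the quantised Hessian for (ii), the excitation map $\cU_N$ (the paper's $U_N$) and the decomposition $\cU_N(H_N^{\operatorname{MF}}-Ne_{\operatorname{H}})\cU_N^*=\bH+\cR_N$ for (iii), and the spectral-gap argument for (iv) are all exactly what the paper does. Two points deserve correction, however.

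\medskip
\textbf{On (ii): uniqueness of the ground state.} The Perron--Frobenius argument you propose is not justified. The pairing terms $a_m^*a_n^*$, $b_m^*b_n^*$, $a_m^*b_n^*$ in $\bH$ do not have a definite sign in the Fock number basis, so there is no obvious positive cone preserved by $e^{-t\bH}$. The paper instead invokes the Bogoliubov diagonalisation of \cite{Nam-Napiorkowski-Solovej-2016}: under the hypotheses that the Hessian is strictly positive (Lemma~\ref{lemma:positivity_Hessian}) and that its off-diagonal block is Hilbert--Schmidt, there is a unitary $\bU$ on $\cG_+$ with $\bU\bH\bU^*=\dGamma(\xi)+\inf\sigma(\bH)$ for some $\xi>0$. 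Uniqueness and the spectral gap \eqref{eq:gap} are then immediate. This diagonalisation is also what underlies the two-sided form bound $C^{-1}(\dGamma(h^{(1)}\oplus h^{(2)})+\cN)-C\leq\bH\leq\dGamma(h^{(1)}\oplus h^{(2)})+C\cN+C$ (Theorem~\ref{thm:upper_lower_bound}), which you need anyway to control $\cR_N$. Your claim of exponential decay of $\Phi^{\operatorname{gs}}$ in $\cN$ does not follow from ``compact resolvent and spectral gap''; what is actually used (and sufficient) is just $\langle\Phi^{\operatorname{gs}},\cN\Phi^{\operatorname{gs}}\rangle<\infty$, which comes from the lower form bound.

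\medskip
\textbf{On (iii): the lower bound needs an explicit localisation.} Your remainder estimate $\cR_N\lesssim N^{-1/2}(\bH+C)(\cN+1)$ is not quite what one proves, and more importantly it cannot be applied directly to the ground state $\Phi_N:=\cU_N\psi_N^{\operatorname{gs}}$, because a priori you only know $\langle\cN\rangle_{\Phi_N}=o(N)$, which makes the right-hand side $o(N^{1/2})$ rather than $o(1)$. The paper's fix is to prove the estimate only on the truncated space $\cF_+^{\leq M}$ in the form $|\langle\cR_N\rangle_\Phi|\leq C\sqrt{M/N}\,\langle\bH+C\rangle_\Phi$ (Theorem~\ref{thm:truncated_bound}), and then to localise $\Phi_N$ into $\cF_+^{\leq M}$ via an IMS-type decomposition $\mathbbm{1}=f_M^2+g_M^2$ with smooth cutoffs in $\cN$ (Proposition~\ref{prop:localization} and Lemma~\ref{lemma:localization}). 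The $g_M$-part is small because $\langle g_M^2\rangle_{\Phi_N}\leq 2\langle\cN\rangle_{\Phi_N}/M\to 0$ for $\langle\cN\rangle_{\Phi_N}\ll M\ll N$, and the $f_M$-part is controlled by the truncated estimate. This localisation step is essential and should be made explicit in your plan.
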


We remark that the ground state $\psi_N$ of $H_{N}^{\rm MF}$ is unique, up to complex phases. More precisely, in the case of no magnetic fields (as in Theorem \ref{thm:mf_correction}), the ground state of $H_{N}^{\rm MF}$ under the partial symmetric conditions (in the first and the second components) is the same with the absolute ground state (without any symmetry); see \cite[Section 3.2.4]{LieSei-10}. Thus the uniqueness and positivity of $\psi_N$ follow from the standard analysis of Schr\"odinger operators; see e.g. \cite[Chapter 11]{LieLos-01}. 

In the one-component case, Bogoliubov's second order correction for the ground state energy and the excitation spectrum in the mean-field regime has been first obtained by Seiringer \cite{Seiringer-11} for the homogeneous Bose gas (where particles are confined in a unit torus, without external potentials), and by Grech and Seiringer \cite{GreSei-13} for the non-homogeneous trapped gas. 
Then Lewin, Nam, Serfaty, Solovej \cite{LewNamSerSol-15} showed
that, under the assumption of Bose-Einstein condensation, one can get the next order
expansion in the energy by Bogoliubov theory in very general setting, and in particular
also for Coulomb-type interaction potentials.
Further extensions include a mixed mean-field large-volume limit by Derezi\'nski and Napiork\'owski \cite{DerNap-14}, collective excitations and multiple condensations by Nam and Seiringer \cite{NamSei-15}, 
a power expansion by Pizzo \cite{Pizzo-15} of the ground state in terms of a small coupling constant describing the interaction strength (which is $N^{-1}$ in our mean-field regime), 
and an infinitely-splitting double-well model by Rougerie and Spehner \cite{RouSpe-16}. In the present work, we will follow the general strategy in \cite{LewNamSerSol-15}  to justify Bogoliubov's approximation. Here we can also deal with the excitation spectrum of $H_N^{\rm MF}$, but we will focus only on the ground state to simplify the representation. 

In a very recent breakthrough \cite{BBCS-18}, Boccato, Brennecke, Cenatiempo, and Schlein were able to justify Bogoliubov's theory in the Gross-Pitaevskii limit for the homogeneous Bose gas. We expect that a similar result should hold for the multi-component case as well. 
 
 We will prove Theorem \ref{thm:GP} in Section \ref{sec:proof-GP} and prove Theorem \ref{thm:mf_correction} in Section \ref{sect:proof_correction}.

\section{Proof of Theorem \ref{thm:GP}} \label{sec:proof-GP}

\subsection{GP minimiser} Under the assumptions $(A_1)$ and $(A_2^{\mathrm{GP}})$, the existence of minimisers for the  Gross-Pitaevskii functional \eqref{eq:functionals-GP} follows easily from the standard direct method in the calculus of variations. Here we only focus on the uniqueness part. 

Let us define an auxiliary functional, with $f,g\geqslant 0$,
	\begin{equation}
	\cD^{\operatorname{GP}}[f,g]:=\cE^{\operatorname{GP}}[\sqrt{f},\sqrt{g}].
	\end{equation}
	The first step is to show convexity of $\cD^{\operatorname{GP}}$, namely
	\begin{equation} \label{eq:convexity}
	\cD^{\operatorname{GP}}\Big[\frac{f+r}{2},\frac{g+s}{2}\Big]\leqslant\frac{\cD^{\operatorname{GP}}[f,g]+\cD^{\operatorname{GP}}[r,s]}{2}.
	\end{equation}
	This is easily checked for the summands of $\cD^{\operatorname{GP}}$ that contain the kinetic operator (by \cite[Theorem 7.8]{LieLos-01}) and for those that contain the trapping potentials. For the terms containing the interaction potentials, let us consider, in self-explanatory notation, $\cD^{\operatorname{GP}}_{1}$, $\cD^{\operatorname{GP}}_{2}$, $\cD^{\operatorname{GP}}_{12}$ as the three summands of $\cD^{\operatorname{GP}}$ containing, respectively, $a_1$, $a_2$ and $a_{12}$. We have the identities 
	\begin{align} \label{eq:convex_GP1}
	\frac{\cD^{\operatorname{GP}}_1[f,g]+\cD^{\operatorname{GP}}_1[r,s]}{2}-\cD^{\operatorname{GP}}_1\Big[\frac{f+r}{2},\frac{g+s}{2} \Big]&=4\pi a_1 c_1^2\int \d k\Big|\frac{\widehat{f}(k)-\widehat{r}(k)}{2}\Big|^2,\\
	\frac{\cD^{\operatorname{GP}}_2[f,g]+\cD^{\operatorname{GP}}_2[r,s]}{2}-\cD^{\operatorname{GP}}_2\Big[\frac{f+r}{2},\frac{g+s}{2} \Big]&=4\pi a_2 c_2^2\int \d k\Big|\frac{\widehat{g}(k)-\widehat{s}(k)}{2}\Big|^2, \nn \\
	\frac{\cD^{\operatorname{GP}}_{12}[f,g]+\cD^{\operatorname{GP}}_{12}[r,s]}{2}-\cD^{\operatorname{GP}}_{12}\Big[\frac{f+r}{2},\frac{g+s}{2} \Big]&=8\pi a_{12}c_1c_2\int \d k\frac{\overline{\widehat{f}(k)-\widehat{r}(k)}}{2}\frac{\widehat{g}(k)-\widehat{s}(k)}{2}.\nn
	\end{align}
	By the Cauchy-Schwarz inequality together with \eqref{eq:inequality_a},
	\begin{align*}
	&4\pi a_{1}c_1^2\int \d k\Big|\frac{\widehat{f}(k)-\widehat{r}(k)}{2}\Big|^2 + 4\pi a_{2}c_2^2\int \d k\Big|\frac{\widehat{g}(k)-\widehat{s}(k)}{2}\Big|^2 \\
	& \geqslant 8\pi a_{12}c_1c_2\int \d k \Big| \frac{\widehat{f}(k)-\widehat{r}(k)}{2}\Big| \cdot \Big| \frac{\widehat{g}(k)-\widehat{s}(k)}{2} \Big|,
		\end{align*}
		and the convexity property \eqref{eq:convexity} follows. 
		
		Next, let us show that any Gross-Pitaevskii minimisers is positive, up to a complex phase. Indeed, let $(u_1,v_1)$ be a Gross-Pitaevskii minimiser. By the diamagnetic inequality \cite[Theorem 7.8]{LieLos-01}, $(|u_1|,|v_1|)$ is a Gross-Pitaevskii minimiser too, and we have 
		\begin{align}
		\label{eq:dia-equa}\int |\nabla u_1|^2 =\int |\nabla |u_1||^2, \quad \int |\nabla v_1|^2 =\int |\nabla |v_1||^2.
		\end{align} 
		Moreover, by a standard elliptic regularity from the Gross-Pitaevskii equation for $(|u_1|, |v_1|)$,  it follows that $|u_1|,|v_1|>0$ pointwise. Together with the equalities \eqref{eq:dia-equa}, we can conclude  from  \cite[Theorem 7.8]{LieLos-01} that $u_1=\theta_1 |u_1|$ and $v_1=\theta_1' |v_1|$ for complex constants $\theta_1, \theta_1'$. Thus up to complex phases, we can assume that $u_1,v_1>0$ pointwise. 
		
		Next, assume that $(u_1,v_1)$ and $(u_2,v_2)$ are two Gross-Pitaevskii minimisers, with $u_i,v_i> 0$ for $i=1,2$. Denote $f_i:=|u_i|^2$ and $g_i:=|v_i|^2$. Obviously,  $(f_1,g_1)$ and $(f_2,g_2)$ are minimisers for $\cD^{\operatorname{GP}}[f,g]$ with the constraint $\|f\|_{L^1}=\|g\|_{L^1}=1$. Combining with the convexity of $\cD^{\operatorname{GP}}$, we have  the following chain of inequalities
			\begin{equation*}
	0\geqslant \frac{\cD[f_1,g_1]+\cD[f_2,g_2]}{2}-\cD\Big[\frac{f_1+f_2}{2},\frac{g_1+g_2}{2} \Big]\geqslant 0.
	\end{equation*}
	This implies
	\begin{equation*}
	\frac{\cD[f_1,g_1]+\cD[f_2,g_2]}{2}=\cD\Big[\frac{f_1+f_2}{2},\frac{g_1+g_2}{2} \Big],
	\end{equation*}
	and in particular
	\begin{equation} \label{eq:uniqueness-key-equal}
	\frac{\langle \sqrt{f_1},-\Delta\sqrt{f_1}\rangle+\langle \sqrt{f_2},-\Delta\sqrt{f_2}\rangle}{2}=\Big\langle\sqrt{\frac{f_1+f_2}{2}},-\Delta\sqrt{\frac{f_1+f_2}{2}}\Big\rangle.
	\end{equation}
	By \cite[Theorem 7.8]{LieLos-01}, the equality \eqref{eq:uniqueness-key-equal} and the fact that $f_1,f_2>0$ imply that $f_1$ and $f_2$ are proportional. The normalization condition $\|f_1\|_{L^1}=\|f_2\|_{L^1}=1$ implies that $f_1=f_2$, and hence $u_1=u_2$. The same argument shows $v_1=v_2$.

\subsection{Energy upper bound} We will follow ideas in \cite{LieSeiYng-00}, with some modifications. First, let us recall the following result \cite[Appendix A.1]{ErdSchYau-04}. 

\begin{lemma} \label{lem:fNeumann} Let $0\leqslant V \in C_c^\infty(\R^3)$ be a radial function, with scattering length $a>0$. Then $N^2 V(N.)$ has scattering length $a/N$. Moreover, for every constant $\ell>0$, if $N$ is large enough so as to have $\supp V\subset \{|x|\leqslant N\ell\}$, then there exists a unique ground state $f \geqslant 0$ of the Neumann problem  
$$
-\Delta f +\frac{1}{2} N^2 V(N.) f= \lambda_N f
$$
on the ball $|x|\leqslant \ell$, with $f(x)=1$ on $|x|= \ell$. We can can extend $f$ to $\R^3$ by setting $f(x)=1$ if $|x|\geqslant \ell$, and hence
\begin{align} \label{eq:f-Neu-eq-R}
-\Delta f +\frac{1}{2}N^2 V(N.) f= \lambda_N f  \mathbf{1}_{B(0,\ell)}
\end{align}
on $\R^3$. Moreover,  
\begin{align} \label{eq:fNeu}
\lambda_N = \frac{3a}{N\ell^3}+  \frac{O(1)}{ N^{2}\ell^4}, \quad 0 \leqslant 1-f \leqslant \frac{C\mathbf{1}(|x|\leqslant \ell)}{N|x|}, \quad |\nabla f| \leqslant \frac{C\mathbf{1}(|x|\leqslant \ell)}{N|x|^2}.
\end{align}
\end{lemma}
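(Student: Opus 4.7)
The statement assembles three classical facts: the scaling property of the $s$-wave scattering length, spectral theory of the Neumann Laplacian on a ball, and an explicit analysis of the radial ODE for the ground state.

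\emph{Scaling and existence.} The identity $a_N=a/N$ for $N^2 V(N\cdot)$ drops out by changing variables $x\mapsto Nx$ in the variational principle \eqref{eq:def-a}, since $\int(|\nabla f(N\cdot)|^2+\tfrac12 N^2V(N\cdot)|f(N\cdot)|^2)\d x=\tfrac{1}{N}\int(|\nabla f|^2+\tfrac12 V|f|^2)\d y$. Once $\supp V\subset B(0,N\ell)$, the operator $-\Delta+\tfrac12 N^2V(N\cdot)$ with Neumann boundary conditions on $L^2(B(0,\ell))$ is bounded below with compact resolvent; its lowest eigenvalue $\lambda_N$ is simple and its eigenfunction is positive (Harnack), radial (by radiality of $V$ and uniqueness), and, normalized to $1$ on $|x|=\ell$, extends as a $C^1$ function to $\R^3$ by the Neumann condition, yielding \eqref{eq:f-Neu-eq-R}.

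\emph{Asymptotics of $\lambda_N$.} For the upper bound I would insert into the Rayleigh quotient the trial function $\phi_s/\phi_s(\ell)$ on $B(0,\ell)$, where $\phi_s$ is the zero-energy scattering profile for $\tfrac12 N^2V(N\cdot)$ with $\phi_s(r)=1-a_N/r$ for $r\geqslant R_0/N$ (and $\supp V\subset B(0,R_0)$). Green's identity together with $-\Delta\phi_s+\tfrac12 V_N\phi_s=0$ gives numerator $\int(|\nabla\phi_s|^2+\tfrac12V_N|\phi_s|^2)=4\pi a_N(1-a_N/\ell)$, while direct integration of $(1-a_N/r)^2$ on $R_0/N<r<\ell$ yields denominator $\tfrac{4\pi\ell^3}{3}(1+O(a_N/\ell))$, giving $\lambda_N\leqslant \tfrac{3a}{N\ell^3}+O(N^{-2}\ell^{-4})$. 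The matching lower bound is obtained by passing to $u(r):=rf(r)$, which solves $-u''+\tfrac12V_N u=\lambda_Nu$ on $(0,\ell)$ with $u(0)=0$ and $u'(\ell)\ell=u(\ell)=\ell$; outside $B(0,R_0/N)$, $u$ is the explicit combination $\ell\cos(\sqrt{\lambda_N}(\ell-r))-\sin(\sqrt{\lambda_N}(\ell-r))/\sqrt{\lambda_N}$, and $C^1$-matching at $r=R_0/N$ with an $O(\lambda_N)$-perturbation of $u_s=r\phi_s$ yields an implicit equation for $\lambda_N$ whose Taylor expansion in the small parameter $\sqrt{\lambda_N}$ reproduces the announced value.

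\emph{Pointwise bounds.} From the explicit form of $u$ one finds, for $r\in(R_0/N,\ell)$, $f(r)=1-\lambda_N(\ell-r)^2(2\ell+r)/(6r)+O(\lambda_N^2)\in(0,1]$, from which $1-f\leqslant C\lambda_N\ell^3/r\sim C/(N|x|)$ and, after differentiation, $|\nabla f|\leqslant C/(N|x|^2)$; inside $B(0,R_0/N)$, the same bounds for the $\lambda_N$-perturbed scattering profile follow by comparison with $\phi_s$, for which they hold by elementary integration of the zero-energy scattering equation. The main technical subtlety, as typical for scattering-length-type estimates, is controlling the $O(\lambda_N)$ correction of the ground state relative to the zero-energy scattering solution \emph{inside} the support of $V_N$, where the potential scales as $N^2$; the key observation is that on this region of volume $\sim N^{-3}$ the eigenvalue correction contributes only at order $N^{-2}\ell^{-4}$, exactly the stated remainder in \eqref{eq:fNeu}.
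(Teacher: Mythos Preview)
The paper does not actually prove this lemma: it is stated as a recall of \cite[Appendix A.1]{ErdSchYau-04} and used as a black box. Your sketch is therefore not comparable to a proof in the paper, but it is essentially the argument one finds in the cited reference (and in the related literature such as \cite{LSeSY-ober}): scaling of the scattering length by change of variables, standard Perron--Frobenius/Harnack for the Neumann ground state, the upper bound on $\lambda_N$ via the zero-energy scattering solution $\phi_s$ as trial function together with Green's identity, and the lower bound plus pointwise estimates by passing to $u(r)=rf(r)$, solving the radial ODE explicitly in the potential-free region as a trigonometric combination, and $C^1$-matching at the edge of $\supp V_N$.

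Your computations are on the right track; two minor remarks. First, in the Rayleigh-quotient upper bound the normalization $\phi_s/\phi_s(\ell)$ is cosmetic (it cancels), and your numerator $4\pi a_N(1-a_N/\ell)$ is indeed $\int_{B(0,\ell)}(|\nabla\phi_s|^2+\tfrac12 V_N\phi_s^2)$ by Green's identity and $\phi_s'(\ell)=a_N/\ell^2$, so the ratio gives $\lambda_N\leqslant 3a_N/\ell^3+O(a_N^2/\ell^4)$ directly. Second, for the bounds inside $\supp V_N$ (where $|x|\lesssim N^{-1}$), the inequalities $1-f\leqslant C/(N|x|)$ and $|\nabla f|\leqslant C/(N|x|^2)$ are vacuous near the origin unless you also know $f$ is bounded and $|\nabla f|$ is bounded there; this follows from elliptic regularity and the comparison with $\phi_s$ you mention, but it is worth making explicit since the stated bounds blow up at $0$ while $f$ is smooth.
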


\noindent
{\bf Trial function.} Let us introduce the notations 
\begin{align} \label{eq:notation-z}
(z_1,...,z_N) &:=(x_1,...,x_{N_1},y_1,...,y_{N_2}),\\
U_{i} &:=\begin{cases} U_{\rm trap}^{(1)} \quad &\text{if} \quad i \leqslant N_1,\\
U_{\rm trap}^{(2)}  \quad &\text{if}\quad i > N_1,\end{cases} \nn\\
(V_{ij},a_{i,j})&:=\begin{cases} (N^2V^{(1)}(N.),a_1) \quad &\text{if} \quad i,j \leqslant N_1,\\
(N^2V^{(2)}(N.),a_2) \quad &\text{if}\quad i,j > N_1,\\
(N^2V^{(12)}(N.),a_{12})\quad &\text{if}\quad i\leqslant N_1<j\quad \text{or} \quad j\leqslant N_1<i.\nn\end{cases}
\end{align}
Let $\ell>0$ be a small constant which is independent of $N$. For every $i\ne j$, let $(f_{ij},\lambda_{N,ij})$ be the pair $(f,\lambda_N)$ in Lemma \ref{lem:fNeumann} with $N^2 V(N.)$ replaced by $V_{ij}$. 

Thus we can write
\begin{align} \label{eq:HN-z}
H_N^{\rm GP}=\sum_{i=1}^N (-\Delta_{z_i} +U_i(z_i) )+\sum_{i<j}^N V_{ij}(z_i-z_j).
\end{align}

Take two functions $u,v \in C_c^\infty(\R^3)$, which are independent of $N$ and $\ell$, such that $\|u\|_{L^2}=\|v\|_{L^2}=1$. For every $i=1,2,...,N$ denote
\begin{align} \label{eq:trial-u-v}
u_{i}:=\begin{cases} u \quad &\text{if} \quad i \leqslant N_1,\\
v \quad &\text{if}\quad i> N_1.\end{cases}
\end{align}
Consider the trial function 
\begin{equation}\label{eq:trial-state-GP}
 \Psi(z_1,...,z_N) =\prod_{i=1}^{N} u_i (z_i) \prod_{j<k}^{N} f_{jk} (z_j-z_k).
\end{equation}
Note that we are using the full product $ \prod_{j<k}^{N} f_{jk} (z_j-z_k)$ to capture the short-range correlation, instead of using only a `nearest-neighbor induction' as in \cite{LieSeiYng-00} in the one-component case. We found that this strategy is more transparent and flexible, as it does not require bosonic symmetry between particles. 

The trial state $\Psi$ in \eqref{eq:trial-state-GP} is not normalized, but nevertheless we have
\begin{align}\label{eq:GP-upper-1}
E_N^{\rm GP} \leqslant \frac{\langle \Psi, H_N^{\rm GP} \Psi\rangle}{\langle \Psi, \Psi\rangle}.
\end{align}

\noindent
{\bf Norm estimates.} For every $i=1,2,...,N$, let us denote the $z_i$-independent function
$$\Psi_i:= \frac{\Psi}{u_i(z_i) \prod_{j\ne i}^N f_{ij}(z_i-z_j) }.$$
Since $0\leqslant f_{ij}\leqslant 1$, we have the pointwise estimate 
\begin{align} \label{eq:Psi-i-pw}
|\Psi| \leqslant |u_i(z_i)| |\Psi_i|.
\end{align}
On the other hand, using \eqref{eq:fNeu} we can estimate
$$
1- \prod_{j\ne i} f_{ij}^2(z_i-z_j) \leqslant \sum_{j\ne i} (1- f_{ij}^2(z_i-z_j)) \leqslant \sum_{j\ne i} \frac{C\mathbf{1}(z_i-z_j)}{N|z_i-z_j|}.
$$
Thus
\begin{align*}
0\leqslant |u_i(z_i)|^2|\Psi_i|^2- |\Psi|^2 &=  \Big( 1- \prod_{j\ne i} f_{ij}^2(z_i-z_j)  \Big) |u_i(z_i)|^2 |\Psi_i|^2 \\
&\leqslant \sum_{j\ne i} \frac{C\|u_i\|_{L^\infty}\mathbf{1}(|z_i-z_j|\leqslant \ell)}{N|z_i-z_j|} |\Psi_i|^2.
\end{align*}
Integrating this estimate leads to 
\begin{align} \label{eq:norm-Psi-i}
\|\Psi_i \|_{L^2}^2 \geqslant \|\Psi\|_{L^2}^2 \geqslant (1-C\ell^2)  \|\Psi_i \|_{L^2}^2. 
\end{align}
We will choose $\ell>0$ small such that $1-C\ell^2$ is close to $1$. 

Similarly, for every $i\ne j$, the $(z_i,z_j)$-independent function
$$\Psi_{ij}:= \frac{\Psi}{u_i(z_i) u_j(z_j) f_{ij}(z_i-z_j) \prod_{k\ne i,j}^N f_{ik}(z_i-z_k)f_{jk}(z_j-z_k) }$$
satisfies
\begin{align} \label{eq:norm-Psi-ij}
\|\Psi_{ij}\|_{L^2}^2 \geqslant \|\Psi\|_{L^2}^2 \geqslant (1-C\ell^2)  \|\Psi_{ij} \|_{L^2}^2. 
\end{align}

\noindent
{\bf Energy estimates.} In order to obtain an upper bound for the energy       convergence of Theorem 2.1, we show how to estimate all summands of the energy of the trial function (3.11) in terms of the GP functional, at the expense of negligible remainders.

First, we bound the one-body potential energy. For simplicity, let us assume $U_i \geqslant 0$ for all $i=1,2,...,N$ (this technical assumption will be removed at the end). Using \eqref{eq:Psi-i-pw} and \eqref{eq:norm-Psi-i} we can bound
\begin{align} \label{eq:GP-upper-1a}
\int U_i(z_i)|\Psi|^2 \leqslant \int U_i(z_i) |u_i(z_i)|^2 |\Psi_i|^2 &= \left( \int_{\R^3} U_i(z) |u_i(z)|^2 \d z \right) \|\Psi_i\|_{L^2}^2 \\
&\leqslant  \left( \int_{\R^3} U_i(z) |u_i(z)|^2 \d z \right) (1-C\ell^2)^{-1} \|\Psi\|_{L^2}^2. \nn
\end{align}
Here the identity follows from the fact that $\Psi_i$ is independent of $z_i$. 

Next, we consider the kinetic energy. For every $i=1,2,...,N$, we have 
\begin{align} \label{eq:GP-upper-kinetic-expanse}
\int |\nabla_{z_i} \Psi|^2 &= \int \Big| (\nabla_{z_i} u_i) \frac{\Psi}{u_i} + \sum_{j\ne i} (\nabla_{z_i} f_{ij}) \frac{\Psi}{f_{ij}}   \Big|^2 \nn\\
&=\int |\nabla_{z_i} u_i|^2 \frac{|\Psi|^2}{|u_i|^2} + \sum_{j\ne i} \int |\nabla_{z_i} f_{ij}|^2 \frac{|\Psi|^2}{|f_{ij}|^2} \\
&  + 2 \Re \sum_{j\ne i}  \int \overline{(\nabla_{z_i} u_i) \frac{\Psi}{u_i}}  (\nabla_{z_i} f_{ij}) \frac{\Psi}{f_{ij}} +  2 \Re \sum_{j\ne i \ne k \ne j}  \int \overline{(\nabla_{z_i} f_{ij}) \frac{\Psi}{f_{ij}} } (\nabla_{z_i} f_{ik}) \frac{\Psi}{f_{ik}}. \nn
\end{align} 
Let us show that the cross terms in \eqref{eq:GP-upper-kinetic-expanse} are small. In fact, for all $i\ne j$,  using \eqref{eq:fNeu}, \eqref{eq:Psi-i-pw} and \eqref{eq:norm-Psi-i}, we can estimate
\begin{align} \label{eq:GP-upper-1b}
&\Big| \int \overline{(\nabla_{z_i} u_i) \frac{\Psi}{u_i}}  (\nabla_{z_i} f_{ij}) \frac{\Psi}{f_{ij}} \Big| \leqslant   \int |\Psi_i|^2 |u_i\nabla_{z_i} u_i|  |\nabla_{z_i}f_{ij}| \nn\\
&\leqslant  \|u_i \nabla u_i\|_{L^\infty} \int |\Psi_i|^2  |\nabla_{z_i}f_{ij}(z_i-z_j)|  \nn\\
&= \|u_i \nabla u_i\|_{L^\infty}  \left( \int_{\R^3} |\nabla_z f_{ij}(z)| \d z  \right) \left( \int |\Psi_i|^2 \right) \nn\\
&\leqslant \frac{C\ell}{N} \|\Psi_i\|_{L^2}^2 \leqslant \frac{C\ell}{N (1-C\ell^2)} \|\Psi\|_{L^2}^2.
\end{align}
Here the identity follows by integrating w.r.t. $z_i$ fist (and using again that $\Psi_i$ is independent of $z_i$). The constant $C$ may depend on $u$ and $v$, but it is always independent of $N$ and $\ell$. Similarly, for all $i\ne j \ne k \ne i$, we have
\begin{align} \label{eq:GP-upper-1c}
&\Big|  \int \overline{(\nabla_{z_i} f_{ij}) \frac{\Psi}{f_{ij}} } (\nabla_{z_i} f_{ik}) \frac{\Psi}{f_{ik}} \Big| \leqslant   \int |\Psi_{ij}|^2 |u_i|^2 |u_j|^2 |\nabla_{z_i} f_{ij}|  |\nabla_{z_i}f_{ik}| \nn\\
&\leqslant \|u_i \|_{L^\infty}^2 \|u_j \|_{L^\infty}^2   \int |\Psi_{ij}|^2  |\nabla_{z_i} f_{ij}(z_i-z_j)|  |\nabla_{z_i}f_{ik}(z_i-z_k)|  \nn\\
&= \|u_i \|_{L^\infty}^2 \|u_j \|_{L^\infty}^2   \left( \int_{\R^3} |\nabla_z f_{ij}(z)| \d z  \right) ^2 \left( \int |\Psi_{ij}|^2 \right) \nn\\
&\leqslant \frac{C\ell^2}{N^2} \|\Psi_{ij}\|_{L^2}^2 \leqslant \frac{C\ell^2}{N^2(1-C\ell^2)} \|\Psi\|_{L^2}^2.
\end{align}
The identity follows by integrating w.r.t. $z_j$ fist, then integrating w.r.t. $z_i$, and using the fact that $\Psi_{ij}$ is independent of $(z_i,z_j)$.

Next we turn to the main terms in \eqref{eq:GP-upper-kinetic-expanse}. The first term can be estimated similarly to \eqref{eq:GP-upper-1a}:
\begin{align} \label{eq:GP-upper-1d}
\int |\nabla_{z_i} u_i|^2 \frac{|\Psi|^2}{|u_i|^2} \leqslant \int |\nabla_{z_i} u_i|^2 |\Psi_i|= \|\nabla u_i\|_{L^2}^2  \|\Psi_i\|_{L^2}^2 \leqslant \frac{1}{1-C\ell^2} \|\nabla u_i\|_{L^2}^2 \|\Psi\|_{L^2}^2. 
\end{align}
The second term in  \eqref{eq:GP-upper-kinetic-expanse} will be coupled with the interaction energy. We have
\begin{align*}
&\int |\nabla_{z_i} f_{ij}(z_i-z_j)|^2 \frac{|\Psi|^2}{|f_{ij}|^2} + \frac{1}{2} \int V_{ij}(z_i-z_j)|\Psi|^2 \\
&\leqslant \int \Big[ |\nabla_{z_i} f_{ij}(z_i-z_j)|^2 + \frac{1}{2}V_{ij}(z_i-z_j) |f_{ij}(z_i-z_j)|^2\Big] |u_i(z_i)|^2 |u_j(z_j)|^2  |\Psi_{ij}|^2 \\
&=\left( \int_{\R^3\times \R^3} \Phi_{ij} (x-y) |u_i(x)|^2 |u_j(y)|^2  \d x \d y \right) \|\Psi_{ij}\|_{L^2}^2\\
&\leqslant \left( \int_{\R^3\times \R^3} \Phi_{ij} (x-y) |u_i(x)|^2 |u_j(y)|^2  \d x \d y \right) \frac{1}{1-C\ell^2} \|\Psi\|_{L^2}^2 
\end{align*}
where
$$
\Phi_{ij}(z):= |\nabla_{z} f_{ij}(z)|^2 + \frac{1}{2}V_{ij}(z) |f_{ij}(z)|^2.  
$$
Since $\Phi_{ij}$ is supported on $|x|\leqslant \ell$, we can estimate
\begin{align*}
&\Big| \int_{\R^3\times \R^3} \Phi_{ij} (x-y) |u_i(x)|^2 |u_j(y)|^2  \d x \d y - \|\Phi_{ij}\|_{L^1} \int_{\R^3} |u_i(x)|^2 |u_j(x)|^2 \d x \Big|\\
&= \Big|  \int_{\R^3\times \R^3} \Phi_{ij} (x-y) |u_i(x)|^2 (|u_j(x)|^2- |u_j(y)|^2)  \d x \d y  \Big|\\
&\leqslant \sup_{|x-y|\leqslant \ell}  \Big| |u_j(x)|^2- |u_j(y)|^2 \Big| \int_{\R^3\times \R^3} \Phi_{ij} (x-y) |u_i(x)|^2   \d x \d y \\
&\leqslant \ell \|\nabla (|u_j|^2)\|_{L^\infty} \|\Phi_{ij}\|_{L^1}. 
\end{align*}
Moreover, using equation \eqref{eq:f-Neu-eq-R} for $f_{ij}$, the fact that $0\leqslant f_{ij}\leqslant 1$, and the estimate \eqref{eq:fNeu} for the eigenvalue $\lambda_N=\lambda_{N,ij}$ we find that
\begin{align*}
\|\Phi_{ij}\|_{L^1} &=\int_{\R^3} |\nabla_{z} f_{ij}(z)|^2 + \frac{1}{2}V_{ij}(z) |f_{ij}(z)|^2 \d z\\
 &= \lambda_{N,ij} \int_{\R^3} |f_{ij}(z)|^2 \mathbf{1}(|z|\leqslant \ell) \d z \\
 &\leqslant \Big(  \frac{3a_{i,j}}{N\ell^3}+  \frac{C}{N^2 \ell^4} \Big)  \int_{\R^3} \mathbf{1}(|z|\leqslant \ell) \d z \leqslant \frac{4\pi a}{N}+ \frac{C}{N^2 \ell}.  
\end{align*}
Thus
\begin{align*}
\int_{\R^3\times \R^3} \Phi_{ij} (x-y) |u_i(x)|^2 |u_j(y)|^2  \d x \d y &\leqslant (1+C\ell) \|\Phi_{ij}\|_{L^1} \int_{\R^3} |u_i(x)|^2 |u_j(x)|^2 \d x \\
& \leqslant (1+C\ell) \Big( \frac{4\pi a_{i,j}}{N}+ \frac{C}{N^2 \ell} \Big)\int_{\R^3} |u_i(x)|^2 |u_j(x)|^2 \d x
\end{align*}
and hence
\begin{align} \label{eq:GP-upper-1e}
&\int |\nabla_{z_i} f_{ij}(z_i-z_j)|^2 \frac{|\Psi|^2}{|f_{ij}|^2} + \frac{1}{2} \int V_{ij}(z_i-z_j)|\Psi|^2\nn\\
&\leqslant  \frac{1+C\ell}{1-C\ell^2} \Big( \frac{4\pi a_{i,j}}{N}+ \frac{C}{N^2 \ell} \Big) \Big(\int_{\R^3} |u_i(x)|^2 |u_j(x)|^2 \d x \Big) \|\Psi\|_{L^2}^2. 
\end{align}

\noindent
{\bf Conclusion of the upper bound.} In summary, putting \eqref{eq:GP-upper-1a}-\eqref{eq:GP-upper-1e}  together we obtain, for every $i=1,2,...,N$,
\begin{align}
&\left \langle \Psi, \Big( -\Delta_{z_i}+U_i(z_i) +\sum_{j\ne i}^N \frac{1}{2} V_{ij}(z_i-z_j)\Big) \Psi \right\rangle \|\Psi\|_{L^2}^{-2} \nn\\
&\leqslant \frac{1}{1-C\ell^2} \left( \|\nabla u_i\|_{L^2}^2+ \int_{\R^3} U_i(z) |u_i(z)|^2 \d z \right)  +  \frac{C\ell}{1-C\ell^2}+ \frac{C\ell^2}{1-C\ell^2}  \nn  \\
&+ \frac{1+C\ell}{1-C\ell^2} \sum_{j\ne i}^N \Big( \frac{4\pi a_{i,j}}{N}+ \frac{C}{N^2 \ell} \Big) \Big(\int_{\R^3} |u_i(x)|^2 |u_j(x)|^2 \d x \Big).
\end{align}
Summing over $i=1,2,...,N$  and using the choice \eqref{eq:trial-u-v}, we find that
\begin{align} \label{eq:GP-upper-final estimate}
\frac{E_N^{\rm GP}}{N}\leqslant \frac{\langle \Psi, H_N^{\rm GP} \Psi \rangle }{ N \|\Psi\|_{L^2}^2} &\leqslant  \frac{c_1}{1-C\ell^2} \left( \|\nabla u\|_{L^2}^2+ \int_{\R^3} U_{\rm trap}^{(1)}(z) |u(z)|^2 \d z \right) \nn  \\
& +  \frac{c_2}{1-C\ell^2} \left( \|\nabla v\|_{L^2}^2+ \int_{\R^3} U_{\rm trap}^{(2)}(z) |v(z)|^2 \d z \right)\nn\\
&+ c_1^2 \frac{1+C\ell}{1-C\ell^2} \Big(4\pi a_1+ \frac{C}{N \ell} \Big) \Big(\int_{\R^3} |u(x)|^4  \d x \Big)\nn\\
&+ c_2^2 \frac{1+C\ell}{1-C\ell^2} \Big(4\pi a_2+ \frac{C}{N \ell} \Big) \Big(\int_{\R^3} |v(x)|^4  \d x \Big)\nn\\
&+2c_1c_2 \frac{1+C\ell}{1-C\ell^2} \Big(4\pi a_{12}+ \frac{C}{N \ell} \Big) \Big(\int_{\R^3} |u(x)|^2|v(x)|^2  \d x \Big)\nn\\
& +  \frac{C\ell}{1-C\ell^2}+ \frac{C\ell^2}{1-C\ell^2} .
\end{align}
Taking $N\to +\infty$, and then taking $\ell\to 0$ in \eqref{eq:GP-upper-final estimate} lead to 
\begin{align} \label{eq:GP-upper-Euv}
\limsup_{N\to \infty}\frac{E_N^{\rm GP}}{N}\leqslant \cE^{\rm GP}[u,v].
\end{align}
So far, we have proved \eqref{eq:GP-upper-Euv} under the additional assumption that $U^{(\alpha)}_{\rm trap}\geqslant 0$, $\alpha\in \{1,2\}$. In general, if $U^{(\alpha)}_{\rm trap}$'s have negative parts, we can use \eqref{eq:GP-upper-1a} with $U_i$ replaced by $\max(U_i,-\eps^{-1})+\eps^{-1} \geqslant 0$, where $\eps>0$ is a small constant. This gives, instead of  \eqref{eq:GP-upper-Euv}, 
\begin{equation*}
	\limsup_{N\to\infty}\frac{E_{N,\varepsilon}^{\rm GP}}{N}\;\leqslant \;\mathcal{E}^{\rm GP}_\varepsilon [u,v].
\end{equation*}
where $E^{\mathrm{GP}}_{N,\varepsilon}$ and $\mathcal{E}^{\rm GP}_\varepsilon [u,v]$ are, respectively, the many-body ground state energy and the Gross-Pitaevskii functional with $U^{(\alpha)}_{\rm trap}$ replaced by $\max(U^{(\alpha)}_{\rm trap},-\varepsilon^{-1})$, $\alpha\in \{1,2\}$. We observe that a $\varepsilon^{-1}$ summand appears on both sides of the inequality, and hence exactly cancels. Since, by Lebesgue's monotone convergence theorem, $E^{\mathrm{GP}}_{N,\varepsilon}\to E^{\mathrm{GP}}_N$ and $\mathcal{E}^{\rm GP}_\varepsilon [u,v]\to \mathcal{E}^{\rm GP} [u,v]$ as $\varepsilon\to 0$, we conclude that \eqref{eq:GP-upper-Euv} holds true in general. 

Optimizing  \eqref{eq:GP-upper-Euv} over all $u,v\in C_c^\infty(\R^3)$ satisfying  $\|u\|_{L^2}=\|v\|_{L^2}=1$, we obtain  the desired upper bound
\begin{align} \label{eq:GP-upper-final}
\limsup_{N\to \infty}\frac{E_N^{\rm GP}}{N}\leqslant e_{\rm GP}.
\end{align}

\subsection{Dyson Lemma} Now we turn to the lower bound. We will follow the strategy in \cite{NamRouSei-16}. First, as in \cite{LieSeiYng-00,LieSeiSol-05,LieSei-06,NamRouSei-16}, we will replace the short-range potential $w_N$ by a longer-range potential with less singular scaling behavior. This idea goes back to Dyson \cite{Dyson-57}.

For every $R >0$ define
$$ \theta_{R}(x)= \theta\Big(\frac{x}{R}\Big), \quad U_R(x) = \frac{1}{R^3} U\Big(\frac{x}{R}\Big)$$
where $U \in C_c^\infty(\R^3)$ are radial functions satisfying
$$0\leqslant \theta\leqslant 1,\quad \theta(x)\equiv 0 \mbox{ for } |x|\leqslant 1, \quad \theta(x)\equiv 1 \mbox{ for } |x|\geqslant 2,$$
$$U\geqslant 0, \quad U(x) \equiv 0 \mbox{ for } |x| \notin [1/2,1], \quad \int_{\R ^3} U = 4\pi.$$
We will always denote by $p=-i\nabla$ the momentum variable.

The following result is taken from \cite{LieSeiSol-05}.

\begin{lemma}[Generalized Dyson lemma] \label{lem:Dyson-1bd} Let $v$ be a non-negative smooth function, supported on $|x|\leqslant R/2$ with scattering length $a$. Then for all $\eps,s>0$,
$$
p \theta_s(p) \mathbf{1}(|x|\leqslant R) \theta_s(p) p +\frac{1}{2}v(x) \geqslant (1-\eps) a U_R(x) -\frac{CaR^2}{\eps s^5}.
$$
\end{lemma}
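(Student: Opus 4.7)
The plan is to reduce the inequality to the classical (non-cutoff) Dyson lemma and then track how the momentum cutoff $\theta_s(p)$ modifies the resulting estimates. The initial observation is that $p$ and $\theta_s(p)$, being both functions of momentum, commute, so for any $\phi\in L^2(\R^3)$, setting $\psi:=\theta_s(p)\phi$ one obtains the quadratic-form identity
\begin{equation*}
\bigl\langle\phi,\,p\,\theta_s(p)\,\mathbf{1}(|x|\leqslant R)\,\theta_s(p)\,p\,\phi\bigr\rangle \;=\;\int_{|x|\leqslant R}|\nabla\psi|^2\d x.
\end{equation*}
In other words, up to replacing $\phi$ by its high-momentum companion $\psi$, the operator on the left-hand side is simply the Dirichlet kinetic energy on the ball $B_R$.

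Next I would invoke the classical Dyson lemma (as in \cite{LieSeiYng-00,LieSeiSol-05}) applied to $\psi$. Since $v\geqslant 0$ is spherically symmetric with $\supp v\subset B_{R/2}$ and scattering length $a$, and $U_R$ is non-negative, radial, supported in the annulus $R/2\leqslant|x|\leqslant R$ with $\int U_R=4\pi$, the classical estimate gives
\begin{equation*}
\int_{|x|\leqslant R}|\nabla\psi|^2\d x + \tfrac{1}{2}\int v|\psi|^2\d x \;\geqslant\; a\int U_R|\psi|^2\d x.
\end{equation*}
If the version being invoked requires radial test functions, one first decomposes $\psi$ into spherical harmonics; only the $s$-wave contributes to the scattering length, while the higher $\ell$-channels contribute non-negatively and may be discarded.

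The third step transfers the estimate from $\psi$ back to $\phi$. Writing $\phi=\psi+\delta$ with $\delta=(1-\theta_s(p))\phi$ the low-momentum complement, I would expand $|\phi|^2-|\psi|^2=2\Re(\bar\psi\delta)+|\delta|^2$ and apply Cauchy--Schwarz with an $\eps$-Young step separately in the $v$- and $U_R$-weighted integrals. This yields
\begin{equation*}
\int_{|x|\leqslant R}|\nabla\psi|^2\d x + \tfrac{1}{2}\int v|\phi|^2\d x \;\geqslant\; (1-\eps)a\int U_R|\phi|^2\d x - \frac{Ca}{\eps}\Bigl(\int U_R|\delta|^2\d x+\int v|\delta|^2\d x\Bigr),
\end{equation*}
which, combined with the identity from the first step, already has the shape of the target inequality, up to bounding the low-momentum contributions $\int U_R|\delta|^2\d x$ and $\int v|\delta|^2\d x$.

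The main obstacle is turning those low-momentum errors into the stated $CaR^2/(\eps s^5)\|\phi\|_{L^2}^2$. Since $\widehat{\delta}$ is supported in $|p|\leqslant 2s$, a crude Bernstein bound yields only $\|\delta\|_{L^\infty}\lesssim s^{3/2}\|\phi\|_{L^2}$, hence an error of order $as^3/\eps$, which is far too weak. To reach the $s^{-5}$ decay one must exploit the rapid decay at scale $1/s$ of the convolution kernel of $1-\theta_s(p)$ together with the spatial confinement of $v$ and $U_R$ to $B_R$; concretely, one estimates $\int U_R|\delta|^2\d x$ and $\int v|\delta|^2\d x$ through Hilbert--Schmidt-type bounds on the commutator $[\mathbf{1}_{B_R},\theta_s(p)]$, whose boundary-layer structure produces the factor $R^2$ from the area of $\partial B_R$ against a negative power of $s$. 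Carrying out this quantitative error analysis is the main technical hurdle, and at that point I would follow the detailed computation of \cite{LieSeiSol-05}.
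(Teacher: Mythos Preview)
The paper itself gives no argument here; it simply invokes \cite[Lemma~4 and Eq.~(52)]{LieSeiSol-05}. Your proposal attempts an independent route, and the first two steps are fine, but step~3 contains a genuine gap. When you apply the classical Dyson bound to $\psi=\theta_s(p)\phi$ and then transfer the potential term back to $\phi$ via Young's inequality, the error produced on the $v$-side is $C\eps^{-1}\int v|\delta|^2$, \emph{not} $\frac{Ca}{\eps}\int v|\delta|^2$ as in your displayed inequality: the scattering length $a$ enters only through the right-hand side $a\int U_R$ of Dyson's bound and never multiplies the potential $v$. This is not a cosmetic slip but a real obstruction, because no norm of $v$ is controlled from above by its scattering length --- for $v_\lambda=\lambda V$ with $\lambda\to\infty$ the scattering length saturates at the range of $V$ while $\|v_\lambda\|_{L^1}\to\infty$ --- so $C\eps^{-1}\int v|\delta|^2$ cannot be bounded by $CaR^2/(\eps s^5)\|\phi\|_{L^2}^2$ uniformly in $v$, no matter how refined the subsequent low-momentum analysis of $\delta$ is.

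The argument in \cite{LieSeiSol-05} is organised differently so that the potential term $\tfrac12\int v|\phi|^2$ is never split: one applies the classical Dyson inequality to $\phi$ itself, and all cutoff-induced error then comes from comparing the localized kinetic quantities $\int_{B_R}|\nabla\psi|^2$ and $\int_{B_R}|\nabla\phi|^2$, i.e.\ from a cross term involving the convolution kernel of $p(1-\theta_s(p))$. That cross term is estimated against $\eps\, a\int U_R|\phi|^2$ plus a remainder $\frac{a}{\eps}\int w_R|\phi|^2$ with $\|w_R\|_{L^\infty}\lesssim R^2/s^5$ (this is the content of \cite[Eq.~(52)]{LieSeiSol-05}); here the factor $a$ appears legitimately because one is comparing against the Dyson lower bound $aU_R$, not against $v$. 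Your commutator heuristic for the $R^2/s^5$ scaling is in the right spirit, but it has to be run on the kinetic side, not on the potential side.
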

\begin{proof} The bound follows from \cite[Lemma 4]{LieSeiSol-05} with $(U,\chi,s)$ replaced by $(U_R,\theta_s, s^{-1})$ and the first estimate in \cite[Eq. (52)]{LieSeiSol-05}.
\end{proof}

Next, we apply Lemma \ref{lem:Dyson-1bd} to derive a lower bound to the many-body Hamiltonian $H_N^{\rm GP}$. Under the notations \eqref{eq:notation-z}, we have

\begin{lemma}[Lower bound for many-body Hamiltonian] \label{lem:Dyson-Nbd} Let $\eps,s>0$ be independent of $N$ and let $N^{-1}\ll R \ll N^{-1/2}$. Then	\begin{align}
	H_{N}^{\operatorname{GP}} &\geqslant \sum_{i=1}^{N}\Big[-\Delta_{z_i}+ U_i(z_i)-(1-\eps)p_{z_i}^2\theta_s^2(p_{z_i}) \Big] \nn \\
	&\quad + \frac{(1-\eps)^2}{N} \sum_{j\ne i}^N a_{i,j} U_R(z_i-z_j) \prod_{k\ne i,j}^N \theta_{2R}(z_j-z_k)  +o(N).
	\end{align}
\end{lemma}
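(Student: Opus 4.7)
Starting from the decomposition $H_N^{\operatorname{GP}}=\sum_i(-\Delta_{z_i}+U_i(z_i))+\sum_{i<j}V_{ij}(z_i-z_j)$ in \eqref{eq:HN-z}, I split the kinetic energy of each particle as
\begin{equation*}
-\Delta_{z_i} \;=\; \bigl[-\Delta_{z_i}-(1-\eps)p_{z_i}^2\theta_s^2(p_{z_i})\bigr] + (1-\eps)p_{z_i}^2\theta_s^2(p_{z_i}),
\end{equation*}
keep the first bracket unchanged (it is exactly the one-body piece appearing in the claim), and spend the second piece together with the interactions to manufacture the effective potentials $a_{i,j}U_R$. The strategy, adapted from \cite{NamRouSei-16}, is to apply Lemma \ref{lem:Dyson-1bd} pair by pair in the variable $z_i$ and then combine the resulting inequalities via a combinatorial trick ensuring that each particle's kinetic budget is spent on at most one partner.

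For each ordered pair $(i,j)$ with $i\ne j$, applying Lemma \ref{lem:Dyson-1bd} in the variable $z_i$ (treating $z_j$ as a parameter) with $v=V_{ij}$, which has scattering length $a_{i,j}/N$ by Lemma \ref{lem:fNeumann}, and multiplying by $(1-\eps)$ yields
\begin{equation*}
(1-\eps)\bigl[ p_{z_i}\theta_s(p_{z_i}) \mathbf{1}(|z_i - z_j| \leqslant R) \theta_s(p_{z_i}) p_{z_i} + \tfrac{1}{2} V_{ij}(z_i - z_j) \bigr] \geqslant \tfrac{(1-\eps)^2 a_{i,j}}{N} U_R(z_i-z_j) - \tfrac{C a_{i,j} R^2}{N \eps s^5}.
\end{equation*}
I then introduce the isolation factor $\chi^{(i)}_j := \prod_{k \ne i,j} \theta_{2R}(z_j - z_k)$, which is multiplication by a function of $\{z_k\}_{k\ne i}$ only, commutes with every function of $p_{z_i}$, and satisfies $0\leqslant\chi^{(i)}_j\leqslant 1$. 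Conjugating the pairwise inequality by $(\chi^{(i)}_j)^{1/2}$ simply inserts $\chi^{(i)}_j$ in each term.

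The decisive combinatorial ingredient is the pointwise bound $\sum_{j\ne i} \mathbf{1}(|z_i-z_j|\leqslant R)\,\chi^{(i)}_j \leqslant 1$, which holds because if two distinct indices $j_1,j_2$ both satisfied $|z_i-z_{j_\ell}|\leqslant R$ then $|z_{j_1}-z_{j_2}|\leqslant 2R$ would force $\theta_{2R}(z_{j_1}-z_{j_2})=0$ and thus kill at least one of the two products. Summing the $\chi^{(i)}_j$-weighted Dyson inequality over $j\ne i$, using this bound to collapse the sandwiched kinetic operator back to $p_{z_i}^2\theta_s^2(p_{z_i})$, and then using $V_{ij}\chi^{(i)}_j\leqslant V_{ij}$ to drop $\chi^{(i)}_j$ from the potential term, produces a per-$i$ one-body inequality whose interaction side is precisely the desired $a_{i,j} U_R \chi^{(i)}_j$ sum with prefactor $(1-\eps)^2/N$.

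Finally, I sum over $i$, use $\sum_{i<j}V_{ij}=\tfrac{1}{2}\sum_{i\ne j}V_{ij}$ together with the trivial $\eps\sum_{i<j}V_{ij}\geqslant 0$ to restore the full interaction, and add back the unused one-body operator $-\Delta_{z_i}-(1-\eps)p_{z_i}^2\theta_s^2(p_{z_i})+U_i(z_i)$ on both sides. The total Dyson remainder is of order $NR^2/(\eps s^5)$, which is $o(1)$ (hence $o(N)$) under $R\ll N^{-1/2}$; the lower bound $R\gg N^{-1}$ is needed only to fit the support of $V_{ij}$ inside $B(0,R/2)$ as required by Lemma \ref{lem:Dyson-1bd}. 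The one subtle point is the combinatorial collapse: the geometric fact that $\chi^{(i)}_j$ forces the kinetic budget of particle $i$ to be spent on at most one partner is what replaces, in this variational framework, the nearest-neighbor induction of \cite{LieSeiYng-00}, and it handles the multi-component setting transparently since the species-dependent labels on $V_{ij}$ and $a_{i,j}$ enter only through the per-pair application of Dyson's lemma.
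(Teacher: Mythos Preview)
Your proof is correct and follows essentially the same route as the paper's: apply Lemma~\ref{lem:Dyson-1bd} pair by pair, exploit the combinatorial fact that $\sum_{j\ne i}\mathbf{1}(|z_i-z_j|\leqslant R)\,\chi^{(i)}_j\leqslant 1$, sum over $j$ and then over $i$, and absorb the $\eps$-fraction of the interaction using $V_{ij}\geqslant 0$. The only presentational difference is that you insert the isolation factor by conjugating each pairwise Dyson inequality with $(\chi^{(i)}_j)^{1/2}$ before summing, whereas the paper first sums the uncut inequalities on configurations where $\{z_j\}_{j\ne i}$ are mutually $\geqslant 2R$ apart and then argues that inserting $\chi^{(i)}_j$ extends the bound to all configurations by nonnegativity of the left side; your version makes the mechanism a bit more transparent but the content is identical.
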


The purpose of Lemma \ref{lem:Dyson-Nbd} is to replace the short-range potentials $V^{(\alpha)}_N$ by the longer-range potential $U_R$, which essentially places us in the mean-field limit. This is done by using almost all of the high-momentum part $p^2 \theta_s(p)$ of the kinetic operator, and employing a many-body cut-off  $\prod_{k\ne i,j}^N \theta_{2R}(z_j-z_k)$ which rules out the event of having three particles close to each other.  This technical cut-off will be removed later.

\begin{proof} Note that $N^{-1}V^{(1)}_N=N^2 V^{(1)}(N.)$ is supported on $|x|\leqslant CN^{-1}$ and has scattering length $a_1N^{-1}$. Therefore, when  $N^{-1}\ll R \ll N^{-1/2}$ we can apply Lemma \ref{lem:Dyson-1bd} to obtain
\begin{align}\label{eq:Dyson-appy-1}
p_{z_i} \theta_s(p_{z_i}) \mathbf{1}(|z_i-z_j|\leqslant R) \theta_s(p_{z_i}) p_{z_i} +\frac{1}{2}V_{ij}(z_i-z_j) \geqslant (1-\eps) \frac{a_{i,j}}{N} U_R(z_i-z_j) +o(N^{-2}).
\end{align}

For every $i=1,2,...,N$, if every point in $\{z_j\}_{j\ne i}$ has a distance $\geqslant 2R$ to the others, then there is at most one of them has a distance $\leqslant R$ to $z_i$. In this case, 
$$
\sum_{j\ne i}^{N} \mathbf{1}(|z_i-z_j|\leqslant R)  \leqslant 1,
$$
and hence summing \eqref{eq:Dyson-appy-1} over $j$ leads to 
\begin{align*}
  p^2_{z_i} \theta_s^2 (p_{z_i})  + \sum_{j\ne i}^{N} \frac{1}{2}V_{ij}(z_i-z_j) \geqslant    (1-\eps) \sum_{j \ne i}^{N}  \frac{a_{i,j}}{N} U_R(z_i-z_j)  +o(N^{-2}).
\end{align*}
The latter estimate can be extended to all $\{z_j\}_{j\ne i}\subset \R^3$ as
\begin{align} \label{eq:DS-xi}
p^2_{z_i} \theta_s^2 (p_{z_i})  + \sum_{j\ne i}^{N} \frac{1}{2}V_{ij}(z_i-z_j)
\geqslant   \frac{(1-\eps)}{N} \sum_{j\ne i}^N a_{i,j} U_R(z_i-z_j) \prod_{k\ne i,j}^N \theta_{2R}(z_j-z_k)  +o(N^{-2}).
\end{align}
because the left side is always nonnegative. Multiplying both sides by $(1-\eps)$ leads to the desired estimate. 
\end{proof}

We use again the notation $(z_1,...,z_N):=(x_1,...,x_{N_1},y_1,...,y_{N_2})$ and introduce
\begin{align*}
h_i &: = \begin{cases} \widetilde{T}^{(1)}_{z_i} := -\Delta_{z_i} +U_{\rm trap}^{(1)}(z_i) - (1-\eps)p_{z_i}^2\theta_s^2(p_{z_i}) , & \mbox{if }i \leqslant N_1,\\
\widetilde{T}^{(2)}_{z_i}  := -\Delta_{z_i} +U_{\rm trap}^{(2)}(z_i)  - (1-\eps)p_{z_i}^2\theta_s^2(p_{z_i}) & \mbox{if } i>N_1, 
 \end{cases} \\
W_i &:= (1-\eps)^2 \sum_{j \ne i}^N a_{i,j} U_R(z_i-z_j) \prod_{k\ne i,j}^N \theta_{2R}(z_j-z_k), \\
\widetilde{H}_N &:= \sum_{i=1}^N \Big( h_i+\frac{1}{N}W_i\Big). 
\end{align*}
Lemma \ref{lem:Dyson-Nbd} can be rewritten as
\begin{align} \label{eq:GP-MF-1}
H_N^{\rm GP} \geqslant \widetilde{H}_N + o(N). 
\end{align}
Thus for the lower bound on $E_N^{\rm GP}$ it suffices to estimate the ground state energy of the modified Hamiltonian $\widetilde{H}_N$. 

By proceeding exactly as in \cite[Lemma 3.1, Lemma 3.4]{NamRouSei-16} (where the symmetries of $h_i$'s and $W_i$'s are not essential), we have the second moment estimate
\begin{equation}\label{eq:second-moment}
	(\widetilde{H}_{N})^2\geqslant\frac{1}{3}\Big( \sum_{i=1}^{N}h_i\Big)^2 - C_{\eps,s} N^2
	\end{equation}
and we can remove the cut-off $\prod_{k\ne i,j}^N \theta_{2R}(z_j-z_k)$: 
\begin{equation} \label{eq:3-body}
	\widetilde{H}_{N} \geqslant \sum_{i=1}^{N}h_i + \frac{(1-\eps)^2}{N} \sum_{j \ne i}^N a_{i,j} U_R(z_i-z_j) + o(N) (N^{-1}\widetilde{H}_{N})^4,
	\end{equation}	
provided that $\eps,s>0$ are independent of $N$ and $N^{-2/3}\ll R \ll N^{-1/2}$. 

Now let  $\widetilde{\Psi}_N$ be a ground state for $\widetilde{H}_N$ (which exists by a standard compactness argument). Taking the expectation of \eqref{eq:3-body} against  $\widetilde{\Psi}_N$, and using the equation $\widetilde{H}_N \widetilde{\Psi}_N = \widetilde{E}_N \widetilde{\Psi}_N$ with $\widetilde{E}_N=O(N)$, we find that
\begin{align} \label{eq:wEN}
\frac{E_N^{\rm GP}}{N} &\geqslant \frac{\widetilde E_N}{N} + o(1)_{N\to \infty} \nn\\
&\geqslant \left\langle \widetilde{\Psi}_N, \left( \frac{1}{N}\sum_{i=1}^{N}h_i + \frac{(1-\eps)^2}{N^2} \sum_{j \ne i}^N a_{i,j} U_R(z_i-z_j) \right) \widetilde{\Psi}_N \right\rangle + o(1)_{N\to \infty},
\end{align}
where the first inequality is due to \eqref{eq:GP-MF-1}. Thus, it remains to be bounded from below the right-side of \eqref{eq:wEN}.

\subsection{Energy lower bound} \label{sec:GP-lb} A further simplification on the right-hand side of \eqref{eq:wEN} is obtained by inserting a finite dimensional cut-off (similarly to what is done in \cite{NamRouSei-16}). We report here the argument.

We can find a constant $C_0>0$ (which might depend on $s,\eps$) such that
$$K:= \eps(-\Delta)+ \min(U^{(1)}_{\rm trap},U^{(2}_{\rm trap})+C_0\geqslant 1.$$
Then $K$ has compact resolvent because $\min(U^{(1)}_{\rm trap}(x),U^{(2}_{\rm trap}(x))\to +\infty$ as $|x|\to +\infty$. Therefore, for every fixed $L>0$, we have the finite-rank projection 
$$
P:= \1(K\leqslant L).
$$

Using the operator inequality (see, e.g. \cite[Lemma 3.2]{NamRouSei-16})
\begin{equation} \label{eq:oper-ineq}
U_R(z_i-z_j)\leqslant C\|U_R\|_{L^1}(1-\Delta_{z_i})^{1-\delta}  (1-\Delta_{z_j})^{1-\delta}, \quad \forall 1/4>\delta>0 
\end{equation}
and the fact that $1-\Delta$ is $K$-bounded, we have the simple Cauchy-Schwarz type inequality (c.f. \cite[Eq. before (4.10)]{NamRouSei-16})
$$
U_R(z_i-z_j) \geqslant P_{z_i} \otimes P_{z_j} U_R(z_i-z_j) P_{z_i} \otimes P_{z_j} - C_{\eps,s} L^{-1/10} K_{z_i}K_{z_j}, \quad \forall 1\leqslant i\ne j \leqslant N.
$$

From the second moment estimate \eqref{eq:second-moment}, we find that
\begin{equation}\label{eq:2nd-Psi}
\langle \widetilde{\Psi}_N, K_{z_i} K_{z_j} \widetilde{\Psi}_N\rangle \leqslant C_{\eps,s},\quad \forall 1\leqslant i \ne j   \leqslant N.
\end{equation}	
Thus \eqref{eq:wEN} reduces to 
\begin{align} \label{eq:wEN-b}
\frac{E_N^{\rm GP}}{N} &\geqslant \left\langle \widetilde{\Psi}_N, \left( \frac{1}{N}\sum_{i=1}^{N}h_i + \frac{(1-\eps)^2}{N^2} \sum_{j \ne i}^N a_{i,j} P_{z_i}\otimes P_{z_j} U_R(z_i-z_j) P_{z_i}\otimes P_{z_j} \right) \widetilde{\Psi}_N \right\rangle \nn\\
&\qquad + o(1)_{N\to \infty}+O(L^{-1/10})\nn\\
&= c_1 \Tr\Big[ \widetilde{T}^{(1)} \gamma_{\widetilde{\Psi}_{N}}^{(1,0)}\Big]  + c_2 \Tr\Big[ \widetilde{T}^{(2)} \gamma_{\widetilde{\Psi}_{N}}^{(0,1)}\Big]+o(1)_{N\to \infty}+O(L^{-1/10})\nn\\
&\qquad + (1-\eps)^2 c_1^2 a_1 \Tr\Big[ P_{x_1}\otimes P_{x_2}U_R(x_1-x_2) P_{x_1}\otimes P_{x_2} \gamma_{\widetilde{\Psi}_{N}}^{(2,0)}\Big]  \nn\\
&\qquad + (1-\eps)^2 c_2^2 a_2 \Tr\Big[ P_{y_1}\otimes P_{y_2}U_R(y_1-y_2) P_{y_1}\otimes P_{y_2} \gamma_{\widetilde{\Psi}_{N}}^{(0,2)}\Big] \nn\\
	&\qquad +  (1-\eps)^2 2c_1 c_2 a_{12} \Tr\Big[ P_{x}\otimes P_{y} U_R(x-y) P_{x}\otimes P_{y} \gamma_{\widetilde{\Psi}_{N}}^{(1,1)}\Big].
\end{align}
Here we simply use the definition of the reduced density matrices \eqref{eq:g1convergesH-GP-intro}.

	Our next tool is the  following abstract result.

\begin{theorem}[Quantum de Finetti theorem for 2-component Bose gas] \label{thm:deF} Let $\cK$ be a separable Hilbert space. Let $\Psi_{N}$ be a wave function in $\cK^{\otimes N_1}_{\rm sym} \otimes \cK^{\otimes N_2}_{\rm sym}$ and let $\gamma^{(k,\ell)}_{\Psi_{N}}$ be the reduced density matrices (defined similarly as in \eqref{eq:def_double_partial_trace-KERNEL-kl} with $L^2(\R^3)$ replaced by $\cK$). Then, up to a subsequence of $\{\Psi_{N}\}$, there exists a Borel probability measure $\mu$ supported on the set 
	$\{(u,v):u,v\in \cK, \|u\| \leqslant 1, \|v\|\leqslant 1\}$
	such that
	\begin{equation} \label{eq:qdeF-weak}
	 \gamma^{(k,\ell)}_{\Psi_{N}} \wto \int |u^{\otimes k} \otimes v^{\otimes \ell} \rangle \langle u^{\otimes k} \otimes v^{\otimes \ell}|  \d \mu (u,v),\quad \forall k,\ell=0,1,2,...
	\end{equation}
	weakly-* in trace class as $N\to \infty$. Moreover, if $\gamma^{(1,0)}_{\Psi_{N}}$ and $\gamma^{(0,1)}_{\Psi_{N}}$ converge strongly in trace class, then $\mu$ is supported on the set
	$\{(u,v):u,v\in \cK, \|u\|=\|v\|=1\}$ and the convergence in \eqref{eq:qdeF-weak} is strong in trace class for all $k,\ell\geqslant 0$.
\end{theorem}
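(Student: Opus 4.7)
The plan is to reduce the two-species statement to the standard single-species quantum de Finetti theorem (as in \cite{LewNamRou-14}) by doubling the one-body space and then decoding the resulting measure. I would set $\widetilde{\cK}:=\cK\oplus\cK$ with canonical isometries $\iota_j:\cK\hookrightarrow\widetilde{\cK}$, $j\in\{1,2\}$, and associate to $\Psi_N$ the symmetrised single-species vector
$$
\Phi_N \;:=\;\sqrt{\tfrac{N!}{N_1!\,N_2!}}\,\Pi^N_{\operatorname{sym}}\bigl[(\iota_1^{\otimes N_1}\otimes \iota_2^{\otimes N_2})\Psi_N\bigr]\;\in\;\widetilde{\cK}^{\otimes N}_{\operatorname{sym}},
$$
where $\Pi^N_{\operatorname{sym}}$ is the bosonic symmetrisation projection. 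Since $\iota_1(\cK)\perp\iota_2(\cK)$, a matching-pattern count gives $\|\Phi_N\|=\|\Psi_N\|=1$, and $\Phi_N$ is a joint eigenvector of the two sector-counting operators $\cN_1,\cN_2$ with eigenvalues $N_1,N_2$.

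The bookkeeping step relates $\gamma^{(K)}_{\Phi_N}$ to $\{\gamma^{(k,\ell)}_{\Psi_N}\}_{k+\ell=K}$. Multinomial expansion of $(\iota_1+\iota_2)^{\otimes K}$ yields the orthogonal decomposition
$$
\widetilde{\cK}^{\otimes K}_{\operatorname{sym}}\;=\;\bigoplus_{k+\ell=K}\cK^{\otimes k}_{\operatorname{sym}}\otimes\cK^{\otimes \ell}_{\operatorname{sym}},
$$
and a hypergeometric draw of how many of the $K$ surviving particles land in each sector after tracing out the other $N-K$ gives
$$
\gamma^{(K)}_{\Phi_N}\Big|_{(k,\ell)\text{-block}}\;=\;\frac{\binom{N_1}{k}\binom{N_2}{\ell}}{\binom{N}{K}}\,\gamma^{(k,\ell)}_{\Psi_N}\,,\qquad k+\ell=K,
$$
with off-diagonal blocks vanishing by sector conservation on $\Phi_N$. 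The combinatorial factor tends to $c_1^k c_2^\ell\binom{K}{k}$ as $N\to\infty$. I would then apply the single-species de Finetti theorem to $\{\Phi_N\}$: a diagonal extraction over $K$ furnishes a Borel probability measure $\tilde\mu$ on the closed unit ball of $\widetilde{\cK}$ with $\gamma^{(K)}_{\Phi_N}\wto\int|w^{\otimes K}\rangle\langle w^{\otimes K}|\,\d\tilde\mu(w)$ weakly-$*$ in trace class for every $K\ge 0$.

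Writing $w=\iota_1(u)+\iota_2(v)$ and binomially expanding $w^{\otimes K}=\sum_{k+\ell=K}\sqrt{\binom{K}{k}}\,u^{\otimes k}\otimes v^{\otimes \ell}$ in the decomposition, the $(k,\ell)$-block of $|w^{\otimes K}\rangle\langle w^{\otimes K}|$ equals $\binom{K}{k}|u^{\otimes k}\otimes v^{\otimes \ell}\rangle\langle u^{\otimes k}\otimes v^{\otimes \ell}|$. Matching the two block-diagonal descriptions of the limit and defining $\mu$ as the push-forward of $\tilde\mu$ under $(u,v)\mapsto(u/\sqrt{c_1},v/\sqrt{c_2})$ yield the claimed identity \eqref{eq:qdeF-weak}. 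For the support, I would take the trace of both sides of \eqref{eq:qdeF-weak} and use trace lower semicontinuity to get
$$
\int\|u\|^{2k}\|v\|^{2\ell}\,\d\mu(u,v)\;=\;\operatorname{tr}\gamma^{(k,\ell)}_\infty\;\le\;1\qquad\forall k,\ell\ge 0.
$$
Were $\mu$ to charge $\{\|u\|>1\}$ or $\{\|v\|>1\}$, the corresponding moment would diverge as $k$ or $\ell\to\infty$, so $\mu$ must be concentrated on $\{\|u\|\le 1,\|v\|\le 1\}$.

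For the strong refinement, the block structure from Step~2 already gives $\gamma^{(1)}_{\Phi_N}=\tfrac{N_1}{N}\gamma^{(1,0)}_{\Psi_N}\oplus\tfrac{N_2}{N}\gamma^{(0,1)}_{\Psi_N}$ (all cross blocks vanishing by sector conservation), so strong trace-class convergence of $\gamma^{(1,0)}_{\Psi_N}$ and $\gamma^{(0,1)}_{\Psi_N}$ implies strong convergence of $\gamma^{(1)}_{\Phi_N}$. The strong version of the single-species de Finetti theorem then places $\tilde\mu$ on the unit sphere of $\widetilde{\cK}$ and upgrades every $\gamma^{(K)}_{\Phi_N}$, hence every $\gamma^{(k,\ell)}_{\Psi_N}$, to strong trace-class convergence; combining the sector-number identities $\int\|P_jw\|^2\d\tilde\mu=c_j$ and $\int\|P_jw\|^4\d\tilde\mu=c_j^2$ (now genuine equalities) with Cauchy--Schwarz saturation forces $\|P_jw\|^2=c_j$ $\tilde\mu$-a.e., so $\mu$ is supported on $\{\|u\|=\|v\|=1\}$. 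I expect the main technical hurdle to be the combinatorial identification of Step~2 — precisely pinning down the block-diagonal structure and the exact hypergeometric prefactor under the decomposition of $\widetilde{\cK}^{\otimes K}_{\operatorname{sym}}$; once this is secured, the transfer of $\tilde\mu$ to $\mu$ and the moment-based support arguments are essentially routine.
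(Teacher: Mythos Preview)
Your proposal is correct and takes a genuinely different route from the paper's own proof. The paper does \emph{not} reduce to the one-component theorem; instead it first establishes a quantitative two-component de Finetti bound in finite dimensions (via Schur's formula and the two-component Husimi measure, Lemma~\ref{thm:deF-finite-dim}), then passes to infinite dimensions by projecting onto finite-dimensional subspaces, invoking geometric localisation in Fock space, and gluing the resulting measures by Skorokhod's lemma on cylindrical projections; the strong statement is obtained simply by reading off $\int\|u\|^2\,\d\mu=\int\|v\|^2\,\d\mu=1$ from the trace. Your approach instead encodes the two-species vector as a single-species vector in $\cK\oplus\cK$, applies the one-component de Finetti theorem as a black box, and decodes the measure via the hypergeometric block identity for $\gamma^{(K)}_{\Phi_N}$; the support restriction then comes from moment bounds, and the sphere refinement from equality in Cauchy--Schwarz for $\int\|P_jw\|^2$ and $\int\|P_jw\|^4$. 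Your route is more economical --- it avoids redoing the finite-dimensional estimate and the cylindrical-projection machinery --- and it makes the structural reduction transparent; the paper's route is more self-contained, yields an explicit quantitative bound in finite dimensions, and generalises to $r$ components without any embedding bookkeeping. One small point worth making explicit in your write-up: the push-forward of $\tilde\mu$ is a priori only supported on $\{c_1\|u\|^2+c_2\|v\|^2\le 1\}$, which does not by itself give $\|u\|,\|v\|\le 1$; your moment argument is genuinely needed there (and works), whereas in the paper's construction the support on the product of unit balls is built in from the start.
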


This is the two-component analogue for the quantum de Finetti theorem, which was proved in~\cite{Stormer-69,HudMoo-75} for the case of strong convergence and in \cite{AmmNie-08,LewNamRou-14} for the case of weak convergence. Theorem \ref{thm:deF} can be proved by following the strategy in the one-component case in \cite{LewNamRou-14}.  We sketch a proof in Appendix \ref{sect:deF} for the reader's convenience. 

Now let us conclude the desired lower bound using Theorem \eqref{thm:deF}. Since 
$$
h_i \geqslant K_{z_i}-2C_{\eps,s}, \quad \forall i=1,...,N,
$$
we deduce from \eqref{eq:2nd-Psi} that $\Tr[ K\gamma_{\widetilde{\Psi}_{N}}^{(1,0)}]$ and $\Tr[ K \gamma_{\widetilde{\Psi}_{N}}^{(0,1)}]$ are bounded uniformly in $N$. Since $K$ has compact resolvent, up to a subsequence as $N\to \infty$, we obtain that $\gamma_{\widetilde{\Psi}_{N}}^{(1,0)}$ and $\gamma_{\widetilde{\Psi}_{N}}^{(0,1)}$ converge strongly in trace class. Thus up to a subsequence again, Theorem \ref{thm:deF} ensures the existence of a Borel probability measure $\nu$ supported on the set 
	$$\{(u,v):u,v\in L^2(\R^3), \|u\|_{L^2}=\|v\|_{L^2}=1\}$$
	such that
\begin{align} \label{eq:cv-quan-deF-H_1}
	\lim_{N\to \infty} \gamma^{(k,\ell )}_{\widetilde{\Psi}_{N}} = \int |u^{\otimes k} \otimes v^{\otimes \ell} \rangle \langle u^{\otimes k} \otimes v^{\otimes \ell}|  \d \nu (u,v),\quad \forall k,\ell=0,1,2,...
	\end{align}
	
	Now we take the limit $N\to \infty$, and then $L\to \infty$ later on the right side of \eqref{eq:wEN-b}. Since $\widetilde{T}^{(1)},\widetilde{T}^{(2)}$ are bounded from below, we can use \eqref{eq:cv-quan-deF-H_1} and Fatou's lemma to get
	\begin{align} \label{eq:Ft1}
	\liminf_{N\to \infty} \Tr\Big[ \widetilde{T}^{(1)} \gamma_{\widetilde{\Psi}_{N}}^{(1,0)}\Big] &\geqslant \int \langle u, \widetilde{T}^{(1)} u\rangle \d \nu(u,v),\\
	\liminf_{N\to \infty} \Tr\Big[ \widetilde{T}^{(2)} \gamma_{\widetilde{\Psi}_{N}}^{(0,1)}\Big] &\geqslant \int \langle v, \widetilde{T}^{(2)} v\rangle \d \nu(u,v).\label{eq:Ft2}
	\end{align}	
The operator inequality \eqref{eq:oper-ineq} and the fact that $(1-\Delta)$ is K-bounded ensure that $P_{x} \otimes P_{y} U_R(x-y) P_{x}\otimes P_{y}$  is uniformly bounded in $N$ as an operator. Therefore, the trace class convergence \eqref{eq:cv-quan-deF-H_1} implies that
\begin{align*}
&\Tr\Big[ P_{x}\otimes P_{y} U_R(x-y) P_{x}\otimes P_{y} \gamma_{\widetilde{\Psi}_{N}}^{(1,1)}\Big] \\
&= \int \langle u\otimes v, P_{x}\otimes P_{y} U_R(x-y) P_{x}\otimes P_{y} u\otimes v \rangle \d \nu(u,v)+o(1)_{N\to \infty}. 
\end{align*}
From the choice of $U_R$, we get
\begin{align*}
&\lim_{N\to \infty}\langle u\otimes v,  P_x\otimes P_y U_R(x-y)P_x\otimes P_y u\otimes v\rangle \\
&= \lim_{N\to \infty} \langle  |Pu|^2, U_R*|Pv|^2 \rangle = 4\pi \int_{\R^3} |Pu(x)|^2 |Pv(x)|^2 \d x.
\end{align*}
Next we take the limit $L\to \infty$ to remove the cut-off $P=\1(K\leqslant L)$. Since $\Tr[ K\gamma_{\widetilde{\Psi}_{N}}^{(1,0)}]$ and $\Tr[ K \gamma_{\widetilde{\Psi}_{N}}^{(0,1)}]$ are bounded, $\nu$ is supported on $Q(K)\times Q(K)$ where $Q(K)$ is the quadratic form domain of $K$. Consequently, for all $(u,v)$ in the support of $\nu$, we have $Pu\to u$ and $Pv\to v$ strongly in $Q(K)$ as $L\to \infty$. Moreover, since $(1-\Delta)$ is $K$-bounded, we have the continuous embeddings $Q(K)\subset H^1(\R^3)\subset L^4(\R^3)$. Therefore, 
\begin{align*}
\lim_{L\to \infty} \int_{\R^3} |Pu(x)|^2 |Pv(x)|^2 \d x=  \int_{\R^3} |u(x)|^2 |v(x)|^2 \d x,
\end{align*}
and hence
\begin{align*}
\lim_{L\to \infty} \lim_{N\to \infty}\langle u\otimes v,  P_x\otimes P_y U_R(x-y)P_x\otimes P_y u\otimes v\rangle  = 4\pi \int_{\R^3} |u(x)|^2 |v(x)|^2 \d x.
\end{align*}
Thus by Fatou's lemma, we find that 
\begin{align} \label{eq:Ft3}
&\liminf_{L\to \infty}\liminf_{N\to \infty}\Tr\Big[ P_{x}\otimes P_{y} U_R(x-y) P_{x}\otimes P_{y} \gamma_{\widetilde{\Psi}_{N}}^{(1,1)}\Big] \nn\\
&=\liminf_{L\to \infty}\liminf_{N\to \infty} \int \langle u\otimes v, P_{x}\otimes P_{y} U_R(x-y) P_{x}\otimes P_{y} u\otimes v \rangle \d \nu(u,v)\nn\\
&\geqslant \int \Big[4\pi \int_{\R^3} |u(x)|^2 |v(x)|^2 \d x\Big] \d \nu(u,v). 
\end{align}
Similarly, we also have
\begin{align} \label{eq:Ft4}
&\liminf_{L\to \infty} \liminf_{N\to \infty}\Tr\Big[ P_{x_1}\otimes P_{x_2} U_R(x_1-x_2) P_{x_1}\otimes P_{x_2} \gamma_{\widetilde{\Psi}_{N}}^{(2,0)}\Big]  \geqslant \int \Big[4\pi \int_{\R^3} |u(x)|^4  \d x\Big] \d \nu(u,v),\\
&\liminf_{L\to \infty} \liminf_{N\to \infty}\Tr\Big[ P_{y_1}\otimes P_{y_2} U_R(y_1-y_2) P_{y_1}\otimes P_{y_2} \gamma_{\widetilde{\Psi}_{N}}^{(0,2)}\Big] \geqslant \int \Big[4\pi \int_{\R^3} |v(y)|^4  \d x\Big] \d \nu(u,v). \label{eq:Ft5}
\end{align}
Inserting \eqref{eq:Ft1}-\eqref{eq:Ft5} into the right side of \eqref{eq:wEN-b}, we arrive at
\begin{align} \label{eq:wEN-c}
\liminf_{N\to \infty} \frac{E_N^{\rm GP}}{N}  \geqslant \int \widetilde{\cE}^{\rm GP}_{\eps,s}[u,v] \d \nu (u,v) \geqslant \inf_{\|u\|_{L^2}=\|v\|_{L^2}=1} \widetilde{\cE}^{\rm GP}_{\eps,s}[u,v] 
\end{align}
where
\begin{align}
 \widetilde{\cE}^{\rm GP}_{\eps,s}[u,v]  &:= c_1 \langle u, \widetilde{T}^{(1)} u\rangle + c_2 \langle v, \widetilde{T}^{(2)} v\rangle
  +(1-\eps)^2 4\pi a_1 c_1^2 \int_{\R^3} |u(x)|^4 \d x \\
  &\qquad + (1-\eps)^2 4\pi a_2 c_2^2 \int_{\R^3} |v(x)|^4 \d x + (1-\eps)^2 8 \pi a_{12}c_1c_2 \int_{\R^3} |u(x)|^2 |v(x)|^2 \d x. \nn
 \end{align}
 
Finally,  we take $s\to 0$, and then $\eps\to 0$. By a standard compactness argument as in \cite[after (103)]{LieSei-06}, we have
\begin{align} \label{eq:weGP-eGP}
\lim_{\eps\to 0} \lim_{s\to 0}\inf_{\|u\|_{L^2}=\|v\|_{L^2}=1} \widetilde{\cE}^{\rm GP}_{\eps,s}[u,v]  =e_{\rm GP}.
\end{align}
Thus \eqref{eq:wEN-c} leads to the desired lower bound
\begin{equation} \label{eq:lower-GP-conc}
\liminf_{N\to \infty}\frac{{E}_N^{\rm GP}}{N} \geqslant e_{\rm GP}.
\end{equation}
Strictly speaking, we have so far proved \eqref{eq:lower-GP-conc} for a subsequence as $N\to \infty$. However, since the limit $e_{\rm GP}$ is independent of the subsequence, we can obtain the estimate for the whole sequence by a standard contradiction argument. 

Combining with the energy upper bound \eqref{eq:GP-upper-final}, we conclude the proof of \eqref{eq:GP-energy-CV}:
$$
\lim_{N\to \infty} \frac{E_N^{\rm GP}}{N} = e_{\rm GP}.
$$

\subsection{Convergence of density matrices} Let $\psi_N$ be an approximate ground state for $H_N^{\rm GP}$. Since $\Tr[ (-\Delta +U_{\rm trap}^{(1)})\gamma_{{\psi}_{N}}^{(1,0)}]$ and $\Tr[ (-\Delta +U_{\rm trap}^{(2)})\gamma_{{\psi}_{N}}^{(0,1)}]$ are bounded uniformly in $N$, up to a subsequence as $N\to \infty$, $\gamma_{{\psi}_{N}}^{(1,0)}$ and $\gamma_{{\psi}_{N}}^{(0,1)}$ converge strongly in trace class. 
Thus, by Theorem \ref{thm:deF}, up to a subsequence again, there exists a Borel probability measure $\mu$ supported on the set 
	$$\{(u,v):u,v\in L^2(\R^3), \|u\|_{L^2}=\|v\|_{L^2}=1\}$$
	such that
\begin{align} \label{eq:deF-psiN-app}
	\lim_{N\to \infty} \gamma^{(k,\ell )}_{{\psi}_{N}} = \int |u^{\otimes k} \otimes v^{\otimes \ell} \rangle \langle u^{\otimes k} \otimes v^{\otimes \ell}|  \d \mu (u,v),\quad \forall k,\ell=0,1,2,...
\end{align}
strongly in trace class. 

Let us show that $\mu$ is supported on the set $\{(e^{i\theta_1} u_0, e^{\theta_2}v_0): \theta_1,\theta_2\in \R \}$, where $(u_0,v_0)$ is the unique Gross-Pitaevskii minimiser. This follows from the convergence of ground state energy and a standard Hellmann-Feynman type argument as in \cite{LieSei-06}.  To be precise, let us denote  
$$Q:= |u_0\otimes v_0\rangle \langle u_0\otimes v_0|$$ 
and for every fixed $\eta>0$, consider the perturbed Hamiltonian
$$
H^{\rm GP}_{N,\eta}:=H^{\rm GP}_{N}+\frac{\eta}{Nc_1c_2}\sum_{i=1}^{N_1} \sum_{j=1}^{N_2} Q_{x_i,y_j},
$$
where $Q_{x_i,y_j}$ indicates the projector $Q$ acting on the $i$-th variable of the first sector of $\mathcal{H}_{N_1,N_2,\mathrm{sym}}$ and on the $j$-th variable of the second sector.

Then by the same method as above, we obtain the analogue of \eqref{eq:GP-energy-CV} (a lower bound is sufficient for our purpose)
\begin{equation}\label{eq:H-GP-eta-lb}
\liminf_{N\to \infty} \frac{\inf \sigma(H^{\rm GP}_{N,\eta})}{N} \geqslant \inf_{\|u\|_{L^2}=\|v\|_{L^2}=1} \Big\{\cE^{\rm GP}[u,v]+ \eta |\langle u, u_0\rangle|^2 |\langle v,v_0\rangle|^2 \Big\}=: e_{\rm GP,\eta}.
\end{equation}
Next, we can write 
$$
\Tr [Q \gamma_{\psi_N}^{(1,1)}] = \frac{1}{N^2c_1c_2} \Big\langle \psi_N, \sum_{i=1}^{N_1} \sum_{j=1}^{N_2} Q_{x_i,y_j} \psi_N\Big\rangle = \frac{1}{N\eta} \Big[ \langle \psi_N, H_{N,\eta}^{\rm GP} \psi_N \rangle -  \langle \psi_N, H_N^{\rm GP} \psi_N \rangle \Big].
$$
Using the lower bound \eqref{eq:H-GP-eta-lb} and the assumption that $\psi_N$ is an approximate ground state for $H_N^{\rm GP}$, we find that
$$
\liminf_{N\to \infty}\Tr [Q \gamma_{\psi_N}^{(1,1)}] \geqslant \frac{1}{\eta} \Big[  e_{\rm GP,\eta}- e_{\rm GP}\Big] .
$$

Next, when $\eta\to 0$, the minimiser $(u_\eta, v_\eta)$ of $e_{\rm GP,\eta}$ becomes a minimising sequence for $e_{\rm GP}$, and it converges to the unique minimiser $(u_0,v_0)$ of $e_{\rm GP}$ by a standard compactness argument. Therefore, 
$$\liminf_{\eta\to 0}\frac{1}{\eta} \Big[  e_{\rm GP,\eta}- e_{\rm GP}\Big] \geqslant \liminf_{\eta\to 0} |\langle u_\eta, u_0\rangle|^2 |\langle v_\eta,v_0\rangle|^2 =1.$$
Thus in conclusion, we have with $Q=|u_0\otimes v_0\rangle \langle u_0\otimes v_0|$, 
$$
\liminf_{N\to \infty}\Tr [Q \gamma_{\psi_N}^{(1,1)}] \geqslant 1.
$$
Consequently, the convergence \eqref{eq:deF-psiN-app} implies that
$$
\int |\langle u, u_0\rangle|^2 |\langle v,v_0\rangle|^2 \d \mu(u,v)\geqslant 1.
$$
Thus $\mu$ is supported on the set $\{(e^{i\theta_1} u_0, e^{\theta_2}v_0): \theta_1,\theta_2\in \R \}$, and hence \eqref{eq:deF-psiN-app}  reduces to the desired convergence \eqref{eq:GP-1pdm-CV}:
$$
\lim_{N\to \infty} \gamma_{\psi_N}^{(k,\ell)} = |u_0^{\otimes k}\otimes v_0^{\otimes \ell}\rangle \langle u_0^{\otimes k}\otimes v_0^{\otimes \ell} |, \quad \forall k,\ell=0,1,2,...  
$$
in trace class. Again,  we have so far proved \eqref{eq:GP-1pdm-CV} for a subsequence as $N\to \infty$, but since the limit is unique, the convergence actually holds for the whole sequence.

\section{Proof of Theorem \ref{thm:mf_correction}} \label{sect:proof_correction}

\subsection{Leading order and Hartree theory}

In the mean-field regime, we have the following analogue of Theorem \eqref{thm:GP}.

\begin{theorem}[Leading order in the mean-field limit]\label{thm:leading-H}~ 
	
	Let Assumptions ($A_1$) and ($A_2^{\mathrm{MF}}$) be satisfied. 
\begin{itemize}
 \item[(i)] There exists a unique minimiser $(u_0,v_0)$ (up to phases) for the variational problem
$$
e_{\operatorname{H}} := \inf_{\substack{ u,v\,\in\, H^1(\mathbb{R}^3) \\ \|u\|_{L^2}=\|v\|_{L^2}=1}}	\cE^{\operatorname{H}}[u,v].
$$

\item [(ii)] The ground state energy of $H_{N}^{\rm MF}$ satisfies
 \begin{equation} \label{eq:energy-leading-H}
\lim_{N\to\infty}\frac{E^{\operatorname{MF}}_{N}}{N}=e_{\operatorname{H}}.
\end{equation}	
\item[(iii)] If $\psi_{N}$ is an approximate ground state of $H_{N}^{\rm MF}$, in the sense that
\begin{equation*}
\lim_{N\to\infty}\frac{\langle\psi_N,H_N^{\operatorname{MF}}\psi_N\rangle}{N}\;=\;e_{\operatorname{H}},
\end{equation*}
then it exhibits complete double-component Bose-Einstein condensation: 
\begin{equation} \label{eq:H-1pdm-CV}
\lim_{N\to +\infty}\gamma_{\psi_N}^{(k,\ell)}\;=\;\ket{u_0^{\otimes k}\otimes v_0^{\otimes \ell}}\bra{u_0^{\otimes k}\otimes v_0^{\otimes \ell}}, \quad \forall k,\ell=0,1,2,...
\end{equation}
in trace class. 
\end{itemize}
\end{theorem}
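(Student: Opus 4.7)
The plan is to mirror the three-step proof of Theorem \ref{thm:GP}, which is substantially simpler in the mean-field regime because the $N$-independent potentials $V^{(\alpha)}$ make the short-scale correlation structure unnecessary, and the factorised ansatz $u^{\otimes N_1}\otimes v^{\otimes N_2}$ already produces the Hartree energy in the limit. For (i) I would repeat the convexity argument used for the GP minimiser, this time with $\cD^{\rm H}[f,g]:=\cE^{\rm H}[\sqrt f,\sqrt g]$ in place of $\cD^{\rm GP}$. The kinetic and trapping terms are handled by the same diamagnetic inequality, while the convexity defects of the three interaction terms become, in Fourier variables,
\begin{align*}
\tfrac{1}{2}\big(\cD^{\rm H}_{\alpha}[f,g]+\cD^{\rm H}_{\alpha}[r,s]\big)-\cD^{\rm H}_{\alpha}\!\Big[\tfrac{f+r}{2},\tfrac{g+s}{2}\Big]\;=\;\tfrac{c_\alpha^2}{2}\!\int \widehat{V^{(\alpha)}}(k)\,\Big|\tfrac{\widehat f(k)-\widehat r(k)}{2}\Big|^2\d k
\end{align*}
for the two self-interactions ($\alpha=1,2$), which are non-negative by \eqref{eq:inequality_V}, together with an analogous bilinear pairing involving $\widehat{V^{(12)}}$ for the cross term, whose absorption into the first two is exactly the content of \eqref{eq:inequality_V1} via pointwise Cauchy--Schwarz. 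Strict positivity of minimisers by elliptic regularity and the equality case of the diamagnetic inequality then upgrade joint convexity to uniqueness up to phases.

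For the upper bound in (ii), I would test $H_N^{\rm MF}/N$ against the factorised trial state $u^{\otimes N_1}\otimes v^{\otimes N_2}$ with $u,v\in H^1(\R^3)$, $\|u\|_{L^2}=\|v\|_{L^2}=1$; each one-body term contributes $N_j/N\to c_j$ times the corresponding integral and each two-body term contributes $N_jN_k/N^2\to c_jc_k$ (or $c_j^2$) times the corresponding Hartree interaction, so optimising over $(u,v)$ yields $\limsup_N E_N^{\rm MF}/N\leqslant e_{\rm H}$. For the lower bound and for (iii) I would combine Theorem \ref{thm:deF} with a finite-rank cutoff in the spirit of Section \ref{sec:GP-lb}. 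The crucial a priori input is that \eqref{eq:potentials_bdd_wrt_Laplacian} allows each two-body interaction to be absorbed into a small multiple of $(1-\Delta_{z_i})(1-\Delta_{z_j})$, so together with $(A_1)$ any approximate ground state $\psi_N$ satisfies $\Tr[K\gamma^{(1,0)}_{\psi_N}]+\Tr[K\gamma^{(0,1)}_{\psi_N}]\leqslant C$ with $K:=-\Delta+\min(U^{(1)}_{\rm trap},U^{(2)}_{\rm trap})+C_0\geqslant 1$. Since $K$ has compact resolvent, the one-particle density matrices converge strongly in trace class along a subsequence, so Theorem \ref{thm:deF} delivers a probability measure $\mu$ on $\{\|u\|_{L^2}=\|v\|_{L^2}=1\}$ whose $(k,\ell)$-moments are the strong limits of $\gamma^{(k,\ell)}_{\psi_N}$. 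Inserting the finite-rank projection $P:=\mathbf{1}(K\leqslant L)$ into the interaction terms to make the two-body operator traces genuine, passing to $\liminf_{N\to\infty}$ via Fatou and then letting $L\to\infty$, produces
\begin{align*}
\liminf_{N\to\infty}\frac{\langle\psi_N,H_N^{\rm MF}\psi_N\rangle}{N}\;\geqslant\;\int \cE^{\rm H}[u,v]\,\d\mu(u,v)\;\geqslant\; e_{\rm H},
\end{align*}
which proves \eqref{eq:energy-leading-H}. When $\psi_N$ is an approximate ground state the chain saturates, so $\mu$ must be supported on the set of Hartree minimisers, which by (i) is the phase orbit $\{(e^{i\theta_1}u_0,e^{i\theta_2}v_0):\theta_1,\theta_2\in\R\}$; the extra phases cancel inside the rank-one projectors $|u^{\otimes k}\otimes v^{\otimes \ell}\rangle\langle u^{\otimes k}\otimes v^{\otimes \ell}|$ and \eqref{eq:H-1pdm-CV} follows.

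The main obstacle will be the passage from the weak de Finetti representation to strong trace-class convergence of the reduced density matrices, which is needed both for the energy lower bound and for turning the two-body interactions into genuine operator traces against the de Finetti measure. This hinges on the uniform one-body bound above, and thus crucially on \eqref{eq:potentials_bdd_wrt_Laplacian} (to dominate each interaction term by a small fraction of $K_{z_i}K_{z_j}$) together with $(A_1)$ (to ensure $K$ has compact resolvent, hence that $\gamma^{(1,0)}_{\psi_N}$ and $\gamma^{(0,1)}_{\psi_N}$ are pre-compact in trace norm). Beyond this step the remainder is an essentially routine multi-component transcription of the proof of Theorem \ref{thm:GP}.
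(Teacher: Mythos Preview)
Your proposal is correct and follows the same overall strategy as the paper: the convexity argument for (i), the factorised trial state for the upper bound, and the combination of the quantum de~Finetti theorem with Fatou's lemma for the lower bound and the condensation statement.

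There is, however, a notable simplification in the paper's execution of the lower bound that you miss by hewing too closely to the GP proof. You propose inserting the finite-rank cutoff $P=\mathbf{1}(K\leqslant L)$ into the interaction terms before passing to the de~Finetti limit, which would in turn require an analogue of the second-moment estimate \eqref{eq:second-moment} and of the two-body operator inequality \eqref{eq:oper-ineq} to control the cutoff error. The paper avoids all of this: since the potentials $V^{(\alpha)}$ are $N$-independent here, one can use the one-body relative bounds
\[
\pm V^{(\alpha)}(x-y)\;\leqslant\;\eps\,T^{(\cdot)}_{x}+C_\eps
\]
(these are \eqref{eq:low-opr-1}--\eqref{eq:low-opr-3}, which follow directly from \eqref{eq:potentials_bdd_wrt_Laplacian}) to split the energy per particle into three two-body expressions, each of the form ``fraction of kinetic energy plus interaction'', each bounded below by a constant. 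Fatou's lemma then applies directly to each piece against the strong trace-class limit furnished by de~Finetti, and no $P$-cutoff or second-moment bound is needed. This is precisely why the paper calls the mean-field proof ``easier than'' the GP one. Relatedly, your phrasing of the a~priori kinetic bound --- absorbing the interaction into ``a small multiple of $(1-\Delta_{z_i})(1-\Delta_{z_j})$'' --- is stronger than what \eqref{eq:potentials_bdd_wrt_Laplacian} immediately gives; the one-body bound above is what is actually used and is sufficient.
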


\begin{proof} The proof of Theorem \ref{thm:leading-H} is similar to (indeed easier than) the proof of Theorem \ref{thm:GP}. Let us quickly explain the necessary adaptation. 

(i) The existence of minimisers of $e_{\rm H}$ is standard. The uniqueness of miminizer (up to complex phases) follows a convexity argument as in the Gross-Pitaevskii regime. More precisely, if we denote $\cD^{\operatorname{H}}[f,g]=\cE^{\rm H}[\sqrt{f}, \sqrt{g}]$ for $f,g\geqslant 0$, then $\cD^{\operatorname{H}}$ is convex. Indeed,  by considering $\cD^{\operatorname{H}}_1$, $\cD^{\operatorname{H}}_2$ and $\cD^{\operatorname{H}}_{12}$ as the summands of the Hartree functional containing respectively $V^{(1)}$, $V^{(2)}$ and $V^{(12)}$, we obtain the following analogue
	\begin{align*} 
	\frac{\cD^{\operatorname{H}}_1[f,g]+\cD^{\operatorname{H}}_1[r,s]}{2}-\cD^{\operatorname{H}}_1\Big[\frac{f+r}{2},\frac{g+s}{2} \Big]&=\frac{c_1^2}{2}\int \d k\Big|\frac{\widehat{f}(k)-\widehat{r}(k)}{2}\Big|^2\widehat{V^{(1)}}(k),\\
	\frac{\cD^{\operatorname{H}}_2[f,g]+\cD^{\operatorname{H}}_2[r,s]}{2}-\cD^{\operatorname{H}}_2\Big[\frac{f+r}{2},\frac{g+s}{2} \Big]&=\frac{c_2^2}{2}\int \d k\Big|\frac{\widehat{g}(k)-\widehat{s}(k)}{2}\Big|^2\widehat{V^{(2)}}(k), \\
	\frac{\cD^{\operatorname{H}}_{12}[f,g]+\cD^{\operatorname{H}}_{12}[r,s]}{2}-\cD^{\operatorname{H}}_{12}\Big[\frac{f+r}{2},\frac{g+s}{2} \Big]&=c_1c_2\int \d k\frac{\overline{\widehat{f}(k)-\widehat{r}(k)}}{2}\frac{\widehat{g}(k)-\widehat{s}(k)}{2}\widehat{V^{(12)}}(k).
	\end{align*}
	Therefore, the convexity follows from the Cauchy-Schwarz inequality and Assumptions \eqref{eq:inequality_V}-\eqref{eq:inequality_V1}. The rest is exactly similar to the Gross-Pitaevskii case. 
	
(ii)-(iii) The convergence of energy and approximate ground states can be obtained by following the strategy in the one component case \cite{LewNamRou-14}. In fact, the energy upper bound 
	$$
	\limsup_{N\to \infty} \frac{E_N^{\rm MF}}{N} \leqslant e_{\rm H}
	$$
	follows immediately from choosing the trial state $u^{\otimes N_1} \otimes v^{\otimes N_2}$. Now let $\Psi_{N}$ be a wave function in $\cH_{N_1,N_2}$ such that
	\begin{align} \label{eq:appxo-H} 
	\langle \Psi_{N}, H_{N}^{\rm MF} \Psi_{N}\rangle \leqslant E_N^{\rm MF} + o(N).
	\end{align}
	We can write, in terms of reduced density matrices, 
		\begin{align}
	e_{\rm H}+o(1)_{N\to \infty} \geqslant \frac{\langle \Psi_{N}, H_{N}^{\rm MF} \Psi_{N}\rangle}{N} &=  c_1 \Tr\Big[ T^{(1)} \gamma_{\Psi_{N}}^{(1,0)}\Big]  + \frac{c_1^2}{2} \Tr\Big[ V^{(1)}(x_1-x_2) \gamma_{\Psi_{N}}^{(2,0)}\Big] \nn \\
	&\qquad + c_2 \Tr\Big[ T^{(2)} \gamma_{\Psi_{N}}^{(0,1)}\Big]  + \frac{c_2^2}{2} \Tr\Big[ V^{(2)}(y_1-y_2) \gamma_{\Psi_{N}}^{(0,2)}\Big] \nn \\
	&\qquad  + c_1 c_2 \Tr\Big[ V^{(12)}(x-y) \gamma_{\Psi_{N}}^{(1,1)}\Big]\label{eq:EN-DM} 
	\end{align}
	where $T^{(\alpha)}\;:=\; -\Delta+U^{(\alpha)}_{\operatorname{trap}}$. Using the assumptions $(A_1)$-$(A_2^{\rm MF})$, we obtain the operator inequalities 
	\begin{align} 
	\pm V^{(1)}(x_1-x_2) &\leqslant \eps T^{(1)}_{x_1} + C_\eps,\label{eq:low-opr-1}\\
	\pm V^{(2)}(y_1-y_2) &\leqslant  \eps T^{(2)}_{y_1} +C_\eps ,\label{eq:low-opr-2}\\
	\pm V^{(12)}(x-y) &\leqslant \eps( T^{(1)}_x + T^{(2)}_y) +C_\eps  \label{eq:low-opr-3}
	\end{align}
	for all $\eps>0$. 
	
	Consequently,  $
	\Tr[ T^{(1)}\gamma_{\Psi_{N}}^{(1,0)}]$ and $\Tr[ T^{(2)} \gamma_{\Psi_{N}}^{(0,1)}]$ are bounded uniformly in $N$. Since $T^{(1)}$ and $T^{(2)}$ have compact resolvents, up to a subsequence as $N\to \infty$, $\gamma_{\Psi_{N}}^{(1,0)}$ and $\gamma_{\Psi_{N}}^{(0,1)}$ converge strongly in trace class. Thus Theorem \ref{thm:deF} ensures that, up to a subsequence again, there exists a Borel probability measure $\mu$ supported on the set 
	$$\{(u,v):u,v\in L^2(\R^3), \|u\|=\|v\|=1\}$$
	such that
\begin{align} \label{eq:cv-quan-deF-H}
	\lim_{N\to \infty} \gamma^{(k,\ell )}_{\Psi_{N}} = \int |u^{\otimes k}\otimes v^{\otimes \ell}\rangle \langle u^{\otimes k} \otimes v^{\otimes \ell}| \d \mu (u,v), \quad \forall k,\ell=0,1,2,...
	\end{align}
	strongly in trace class.
	
	Next, thanks to the operator inequality \eqref{eq:low-opr-1}, 
	$$  \frac{c_1}{4} T^{(1)}_{x_1} + \frac{c_1}{4}T^{(1)}_{x_2} + \frac{c_1^2}{2} V^{(1)}(x_1-x_2) \geqslant -C.$$
	Therefore, from the convergence \eqref{eq:cv-quan-deF-H} and Fatou's lemma, it follows that
	\begin{align*}
	&\liminf_{N\to \infty}\Tr \Big[ \Big(  \frac{c_1}{4} T^{(1)}_{x_1} + \frac{c_1}{4}T^{(1)}_{x_2} + \frac{c_1^2}{2} V^{(1)}(x_1-x_2)\Big)  \gamma_{\Psi_{N}}^{(2,0)}\Big] \nn\\
	&\geqslant \Tr \Big[ \Big(  \frac{c_1}{4} T^{(1)}_{x_1} + \frac{c_1}{4}T^{(1)}_{x_2} + \frac{c_1^2}{2} V^{(1)}(x_1-x_2)\Big)  \int |u^{\otimes 2}\rangle \langle u^{\otimes 2}| \d \mu (u,v) \Big]\\
	&= \int \Big[ \frac{c_1}{2}\langle u, T^{(1)}u\rangle + \frac{c_1^2}{2}  \langle |u|^2, V^{(1)}*|u|^2\rangle \Big] \d \mu (u,v).
	\end{align*}
	Similarly, we have 
	\begin{align*}
	&\liminf_{N\to \infty}\Tr \Big[ \Big(  \frac{c_1}{4} T^{(2)}_{y_1} + \frac{c_2}{4}T^{(2)}_{y_2} + \frac{c_2^2}{2} V^{(2)}(y_1-y_2)\Big)  \gamma_{\Psi_{N}}^{(0,2)}\Big] \nn\\
	&\geqslant \int \Big[ \frac{c_2}{2}\langle v, T^{(2)}v\rangle + \frac{c_2^2}{2}  \langle |v|^2, V^{(2)}*|v|^2\rangle \Big] \d \mu (u,v)
	\end{align*}
	and
	\begin{align*}
	&\liminf_{N\to \infty}\Tr \Big[ \Big(  \frac{c_1}{2} T^{(1)}_{x} + \frac{c_2}{2}T^{(2)}_{y} + c_1c_2 V^{(12)}(x-y)\Big)  \gamma_{\Psi_{N}}^{(1,1)}\Big] \nn\\
	&\geqslant \int \Big[ \frac{c_1}{2}\langle u, T^{(1)}u\rangle + \frac{c_2}{2}\langle v, T^{(2)}v\rangle + c_1c_2  \langle uv, V^{(12)}*(uv)\rangle \Big] \d \mu (u,v).
	\end{align*}
	Summing these lower bounds, we can bound the right side of \eqref{eq:EN-DM} as
		\begin{align*}
	\liminf_{N\to \infty} \frac{\langle \Psi_N, H_N \Psi_N\rangle}{N} 
	\geqslant  \int \cE_{\rm H}[u,v] \d \mu(u,v) \geqslant e_{\rm H}.
	\end{align*}
	Combining with the upper bound in \eqref{eq:EN-DM}, we conclude \eqref{eq:energy-leading-H}:
	\begin{align*}
	\lim_{N\to \infty} \frac{\langle \Psi_N, H_N \Psi_N\rangle}{N} 
	= \int \cE_{\rm H}[u,v] \d \mu(u,v) = e_{\rm H}.
	\end{align*}
	The last equality means that $\mu$ is supported on the set of Hartree minimisers, i.e., $\{(e^{i\theta_1}u_0, e^{i\theta_2}v_0): \theta_1, \theta_2\in \R \}$, and hence \eqref{eq:cv-quan-deF-H} reduces to \eqref{eq:H-1pdm-CV}. Strictly speaking, we have proved  \eqref{eq:energy-leading-H} and \eqref{eq:H-1pdm-CV} for a subsequence as $N\to \infty$, but the convergence must hold for the whole sequence because the limits are unique. 
This completes the proof. 
\end{proof}

\subsection{Bogoliubov Hamiltonian} 

The aim of this section is to show that the Bogoliubov Hamiltonian $\bH$ defined in \eqref{eq:Bogoliubov_Hamiltonian} is precisely the same operator that arises from a suitable second quantization of the Hessian of the Hartree functional $\cE^{\operatorname{H}}$ evaluated at the minimiser $(u_0,v_0)$. We refer to \cite{LewNamSerSol-15} for discussions in one-component case. 
 
The main result of this section is Theorem \ref{thm:upper_lower_bound} below, which gives useful estimates on $\bH$. In order to formulate this result precisely, let us first recall the explicit (canonical) isomorphism that realizes $\cF_+$ in \eqref{eq:def-cF+} as a Fock space. 

We consider the Fock space with base space $\fh^{(1)}_+\oplus\fh^{(2)}_+$
\begin{equation}
\cG_+:=\bigoplus_{N=0}^{\infty}\big(\fh^{(1)}_+\oplus\fh^{(2)}_+ \big)^{\otimes_{\operatorname{sym}}N}.
\end{equation}
For a generic $f\oplus g\in\fh^{(1)}_+\oplus\fh^{(2)}_+$, let us denote the canonical creation and annihilation operators on $\cG_+$ as $Z^*(f\oplus g),Z(f\oplus g)$. The $N$-th sector of $\cG_+$ is interpreted as the space of states with \emph{exactly} $N$ total particles, regardless of which type they are. In fact (see, e.g.~\cite[Theorems 16 and 19]{Derezinski2006_CCR-CAR}) $\cG_+$ is isomorphic to $\cF_+$ through a natural isomorphism that preserves the CCR.
\begin{theorem} \label{thm:isomorphism}
	There exists a unitary operator $U:\cF_+\to\cG_+$ such that
	\begin{itemize}
		\item[(i)] $U(\Omega_{\cF_+})=\Omega_{\cG_+}$, where $\Omega_{\cF_+}$ is the vacuum of $\cF_+$ and $\Omega_{\cG_+}$ is the vacuum of $\cG_+$,
		\item[(ii)] for any $f\oplus g\in\fh^{(1)}_+\oplus\fh^{(2)}_+$
		\begin{align}
		Z^*(f\oplus g)U&=U\big(a^*(f)\otimes \mathbbm{1}+\mathbbm{1}\otimes b^*(g)\big) \nonumber\\
		Z(f\oplus g)U&=U\big(a(f)\otimes \mathbbm{1}+\mathbbm{1}\otimes b(g)\big)\nonumber.
 		\end{align}
	\end{itemize}
\end{theorem}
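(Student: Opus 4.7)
The plan is to construct $U$ explicitly on a total set of vectors, verify it is an isometry by a direct inner-product computation, extend by density, and finally check the intertwining property. This is essentially the ``exponential law'' for symmetric Fock spaces, adapted to the excited setting.

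First, I would define $U$ on the dense linear span in $\cF_+$ of vectors of the form
\[
\Psi(f_1,\dots,f_m;g_1,\dots,g_n) \;:=\; \bigl( a^*(f_1)\cdots a^*(f_m)\Omega_{\cF^{(1)}_+}\bigr)\otimes\bigl( b^*(g_1)\cdots b^*(g_n)\Omega_{\cF^{(2)}_+}\bigr),
\]
with $f_i\in\fh^{(1)}_+$, $g_j\in\fh^{(2)}_+$, by setting
\[
U\,\Psi(f_1,\dots,f_m;g_1,\dots,g_n) \;:=\; Z^*(f_1\oplus 0)\cdots Z^*(f_m\oplus 0)\,Z^*(0\oplus g_1)\cdots Z^*(0\oplus g_n)\,\Omega_{\cG_+},
\]
and extending by linearity. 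On the vacuum this gives $U\Omega_{\cF_+}=\Omega_{\cG_+}$, so (i) is automatic.

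Next I would check that $U$ is isometric on this dense subspace. Using the CCR relations for $a^*,a,b^*,b$ on the left, the inner product of two such $\Psi$'s factorises as a product of two Wick/permanent expressions in the $\langle f_i,f'_j\rangle$ and the $\langle g_i,g'_j\rangle$. On the right, the inner product of the corresponding products of $Z^*$'s against $\Omega_{\cG_+}$ reduces, via the CCR for $Z^*,Z$, to the permanent of the $(m+n)\times(m+n)$ matrix of inner products $\langle f_i\oplus 0,f'_j\oplus 0\rangle$, $\langle f_i\oplus 0,0\oplus g'_j\rangle$, etc.\ in $\fh^{(1)}_+\oplus\fh^{(2)}_+$. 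The cross inner products vanish because $\fh^{(1)}_+\oplus 0 \perp 0\oplus\fh^{(2)}_+$, so this permanent block-diagonalises and yields exactly the product obtained on the $\cF_+$ side. Thus $U$ extends to an isometry $\cF_+\to\cG_+$.

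To conclude that $U$ is onto, I would observe that its image contains every vector of the form $Z^*(f_1\oplus 0)\cdots Z^*(0\oplus g_n)\Omega_{\cG_+}$, and by the linearity of $h\mapsto Z^*(h)$ in $\fh^{(1)}_+\oplus\fh^{(2)}_+$ together with $Z^*(f\oplus g)=Z^*(f\oplus 0)+Z^*(0\oplus g)$, this span coincides with the span of all $Z^*(h_1)\cdots Z^*(h_L)\Omega_{\cG_+}$, which is dense in $\cG_+$. Hence $U$ is unitary. Finally, property (ii) is immediate from the defining formula and the same identity $Z^*(f\oplus g)=Z^*(f\oplus 0)+Z^*(0\oplus g)$: the right-hand side, transported back by $U^{-1}$, is precisely $a^*(f)\otimes\mathbbm{1}+\mathbbm{1}\otimes b^*(g)$ acting on $\Psi(f_1,\dots;g_1,\dots)$, and the annihilator identity follows by taking adjoints. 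The only mildly delicate point is the permanent-factorisation step of the isometry check, but once one uses orthogonality of $\fh^{(1)}_+\oplus 0$ and $0\oplus\fh^{(2)}_+$, this is an entirely combinatorial identity with no analytic obstacle.
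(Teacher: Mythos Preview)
Your argument is correct. The paper does not actually prove this statement: it simply refers the reader to \cite[Theorems~16 and~19]{Derezinski2006_CCR-CAR} for the fact that $\cG_+$ and $\cF_+$ are naturally isomorphic in a CCR-preserving way. What you have written is precisely the standard direct proof of the exponential law $\cF(\fh_1\oplus\fh_2)\cong\cF(\fh_1)\otimes\cF(\fh_2)$ for bosonic Fock spaces, specialised to $\fh_1=\fh^{(1)}_+$ and $\fh_2=\fh^{(2)}_+$, and it is exactly the kind of argument the cited reference contains.

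One small remark on presentation: you define $U$ on a \emph{spanning} set rather than a basis, so strictly speaking well-definedness of the linear extension is not automatic. However, your isometry computation covers this, since if a finite linear combination $\sum_i c_i\Psi_i$ vanishes in $\cF_+$, then $\bigl\|\sum_i c_i\,U\Psi_i\bigr\|^2=\bigl\|\sum_i c_i\Psi_i\bigr\|^2=0$ forces the image to vanish as well. It would be worth making this explicit in a single sentence. Otherwise the permanent block-diagonalisation, the density argument for surjectivity via $Z^*(f\oplus g)=Z^*(f\oplus 0)+Z^*(0\oplus g)$, and the verification of~(ii) are all in order.
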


We define the second quantization of an operator $\cA$ on $\fh^{(1)}_+\oplus\fh^{(2)}_+$ by
\begin{equation} \label{eq:second_quantization_big}
\d\Gamma(\cA):=\sum_{m,n\geqslant 1}\langle f_m,\cA f_n\rangle Z^*(f_m)Z(f_n),
\end{equation}
where $(f_m)_{m=1}^\infty$ is an orthonormal basis of $\fh^{(1)}_+\oplus\fh^{(2)}_+$ belonging entirely to the domain of $\cA$, with an overall \emph{operator closure} being understood on the right side. Similarly, for  generic self-adjoint operators $A^{(1)}$ on $\fh^{(1)}_+$ and $A^{(2)}$ on $\fh^{(2)}_+$, we denote
\begin{equation} \label{eq:second_quantization}
\begin{split}
\d\Gamma^{(1)}(A^{(1)})\;:=&\;\sum_{m,n\geqslant 1}\langle u_m, A^{(1)}u_n\rangle a^*_m a_n\\
\d\Gamma^{(2)}(A^{(2)})\;:=&\;\sum_{m,n\geqslant 1}\langle v_m, A^{(2)}v_n\rangle b^*_m b_n\,,
\end{split}
\end{equation}
with $(u_m)_{m=1}^\infty$ an orthonormal basis of $\fh^{(1)}_+$ and $(v_n)_{n=1}^\infty$ an orthonormal basis of $\fh^{(2)}_+$.  In particular,
\begin{equation}
\cN_1\;:=\;\d\Gamma^{(1)}(\mathbbm{1})\,,\qquad  \cN_2\;:=\;\d\Gamma^{(2)}(\mathbbm{1})
\end{equation}
defines the number operators in each species' sectors, and 
\begin{equation}
\cN\;:=\;\cN_1+\cN_2
\end{equation}
defines the total number operator on $\cF_+$.

Within this formalism, it is natural to introduce the class of quadratic Hamiltonians in the Fock space $\cG_+$; through the isomorphism of Theorem \ref{thm:isomorphism}, such a class turns out to correspond to the class of Hamiltonians which are \emph{jointly} quadratic in $a$, $a^*$, $b$, and $b^*$, as is the case for $\bH$. Note that, already the operators defined by \eqref{eq:second_quantization_big}, which are quadratic in $Z$ and $Z^*$, are in general not separately quadratic in $a$, $a^*$ or $b$, $b^*$; this is true only if the operator $\cA$ is reduced with respect to the direct sum $\fh^{(1)}_+\oplus\fh^{(2)}_+$.

Let us consider two densely defined operators
\begin{equation*}
\begin{split}
&\cB_1:\cD(\cB_1)\subset\mathfrak{h}^{(1)}_+\oplus\mathfrak{h}^{(2)}_+\to\mathfrak{h}^{(1)}_+\oplus\mathfrak{h}^{(2)}_+\\
&\cB_2:\cD(\cB_2)\subset\big(\mathfrak{h}^{(1)}_+\big)^*\oplus\big(\mathfrak{h}^{(2)}_+\big)^*\to\mathfrak{h}^{(1)}_+\oplus\mathfrak{h}^{(2)}_+,
\end{split}
\end{equation*}
satisfying the properties
\begin{align*}
\cD(\cB_1)\subset J^*\cD(\cB_2), \quad  \cB_1^*=\cB_1, \quad J\cB_2J=\cB_2^*,
\end{align*}
where
\begin{equation}
J:\fh^{(1)}_+\oplus\fh^{(2)}_+\to\big(\fh^{(1)}_+\big)^*\oplus\big(\fh^{(2)}_+\big)^*,\qquad  J(f\oplus g):=\langle f\oplus g,\,\cdot\,\rangle_{\fh^{(1)}_+\oplus\fh^{(2)}_+}
\end{equation}
is the operator mapping a vector to the corresponding form. Let us form the operator
\begin{equation} \label{eq:def_B}
\cB:=\begin{pmatrix}
\cB_{1} & \cB_{2}\\
\cB_2^*& J\cB_{1}J^*
\end{pmatrix}
\end{equation}
acting on the space 
\begin{equation}
\fh:=\mathfrak{h}^{(1)}_+\oplus\mathfrak{h}^{(2)}_+\oplus\big(\mathfrak{h}^{(1)}_+\big)^*\oplus\big(\mathfrak{h}^{(2)}_+\big)^*.
\end{equation}
We define
\begin{equation} \label{eq:def_HB}
\bH_{\cB}:=\d\Gamma(\cB_1)+\frac{1}{2}\sum_{m,n\geqslant 1}\Big( \langle f_m,\cB_2 J f_n\rangle Z(f_m) Z(f_n)+  \overline{\langle f_m,\cB_2 J f_n\rangle}Z^*(f_m)Z^*(f_n)\Big)
\end{equation}
on the space
\begin{equation*}
\bigoplus_{n=0}^\infty \cD(\cB_1)^{\otimes_{\operatorname{sym}}n}.
\end{equation*}
In turns out that many properties of the quadratic Hamiltonian $\bH_\cB$ depend crucially on their analogues for the corresponding classical operator $\cB$. The following Lemma, which is a consequence of \cite[Theorem 2]{Nam-Napiorkowski-Solovej-2016}, collects some of them.
 
\begin{lemma} \label{lemma:diagonalization}
	Assume that $\cB_1>0$, $\cB>0$ and that $\cB_2$ is a Hilbert-Schmidt operator. Assume further that $\|\cB_1^{-1/2}\cB_{2}J\cB_1^{-1/2}\|<1$. Then:
	\begin{itemize}
	\item[(i)](Self-adjointness) Formula \eqref{eq:def_HB} defines a self-adjoint operator.
	\item[(ii)](Uniqueness of the ground state) $\bH_\cB$ has a unique ground state $\Phi^{\operatorname{gs}}_\cB$.
	\item[(iii)](Spectral gap) If, in addition, $\cB\geqslant\tau>0$ for some $\tau>0$, then
	\begin{equation} \label{eq:diagonalization_gap}
	\inf\sigma(\bH_{|\{\Phi^{\operatorname{gs}}_\cB\}^\perp})>\lambda(\bH_\cB),
	\end{equation} 
	where $\lambda(\bH_\cB)$ is the ground state energy of $\bH_\cB$.
	\end{itemize} 
    In particular, $\bH_\cB$ is bounded from below, namely there exists a constant $C_\cB>0$ such that
    \begin{equation} 
    \bH_\cB\geqslant -C_\cB.
    \end{equation}
\end{lemma}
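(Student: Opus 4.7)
The plan is to recognize $\bH_\cB$ as an abstract bosonic quadratic Hamiltonian of the form treated in Theorem 2 of \cite{Nam-Napiorkowski-Solovej-2016}, and to verify that the hypotheses $\cB_1>0$, $\cB>0$, $\cB_2$ Hilbert--Schmidt, and $\|\cB_1^{-1/2}\cB_2 J\cB_1^{-1/2}\|<1$ are precisely the conditions required there to perform an implementable Bogoliubov (symplectic) diagonalization on the Fock space $\cG_+$. Once that is in place, all three items follow from the diagonalized form.

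First, I would check that $\cB_2$ being Hilbert--Schmidt combined with $\|\cB_1^{-1/2}\cB_2 J\cB_1^{-1/2}\|<1$ is exactly the Shale--Stinespring-type condition that guarantees the existence of a unitary Bogoliubov transformation $\cU$ on $\cG_+$ such that
\begin{equation*}
\cU^*\bH_\cB\cU\;=\;\d\Gamma(\widetilde{\cB}_1)+C_0,
\end{equation*}
where $\widetilde{\cB}_1$ is a new positive one-particle operator on $\mathfrak{h}^{(1)}_+\oplus\mathfrak{h}^{(2)}_+$ and $C_0\in\R$ is an explicit constant expressible through the trace of $\cB_1$ and the diagonalizing coefficients. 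The positivity $\widetilde{\cB}_1\geqslant 0$ is dictated by $\cB\geqslant 0$: indeed, the spectrum of $\widetilde{\cB}_1$ is obtained from the spectrum of the generator of the symplectic flow associated with $\cB$, and the strict positivity of $\cB$ together with $\|\cB_1^{-1/2}\cB_2 J\cB_1^{-1/2}\|<1$ forces it to sit on the positive real axis. This yields (i): self-adjointness of $\bH_\cB$ by unitary equivalence with the (essentially self-adjoint) number-type operator $\d\Gamma(\widetilde{\cB}_1)+C_0$, and the quantitative lower bound $\bH_\cB\geqslant C_0=:-C_\cB$.

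For (ii), I would use that $\d\Gamma(\widetilde{\cB}_1)$ with $\widetilde{\cB}_1\geqslant 0$ always admits the Fock vacuum $\Omega_{\cG_+}$ as its unique normalized ground state (this is a standard fact: any other vector has at least one particle, hence energy at least the bottom of $\spec(\widetilde{\cB}_1)\geqslant 0$, and that bound is strict as soon as $\widetilde{\cB}_1$ has trivial kernel, which follows from $\cB>0$). Pulling back through $\cU$ produces the unique ground state $\Phi^{\operatorname{gs}}_\cB:=\cU\,\Omega_{\cG_+}$ of $\bH_\cB$, which is a Bogoliubov-rotated quasi-free (squeezed) state.

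For (iii), under the strengthened assumption $\cB\geqslant\tau>0$ the diagonalization yields $\widetilde{\cB}_1\geqslant\tau'>0$ for some $\tau'>0$ depending on $\tau$ (this monotonicity is the content of the variational characterization of Bogoliubov eigenvalues). Then $\d\Gamma(\widetilde{\cB}_1)$ has spectral gap at least $\tau'$ above its ground state energy $0$, since the lowest eigenvalue of $\d\Gamma(\widetilde{\cB}_1)$ on the one-particle sector is $\inf\spec(\widetilde{\cB}_1)\geqslant\tau'$. Unitary equivalence via $\cU$ transports this gap to $\bH_\cB$, giving \eqref{eq:diagonalization_gap}.

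The main technical obstacle is not any one of these steps in isolation but rather the careful bookkeeping of the Bogoliubov diagonalization in the present direct-sum setting, where the one-particle space carries the decomposition $\mathfrak{h}^{(1)}_+\oplus\mathfrak{h}^{(2)}_+$ and the pairing operator $\cB_2$ mixes the two species. Fortunately, Theorem 2 of \cite{Nam-Napiorkowski-Solovej-2016} is formulated at the level of an abstract separable complex Hilbert space with a conjugation $J$, and both hypotheses and conclusions transfer verbatim once one identifies the one-particle space as $\mathfrak{h}^{(1)}_+\oplus\mathfrak{h}^{(2)}_+$ and $J$ as the canonical antilinear identification with its dual; thus the multi-component structure is absorbed without additional analytic effort.
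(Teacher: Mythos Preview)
Your proposal is correct and follows essentially the same approach as the paper: both invoke Theorem~2 of \cite{Nam-Napiorkowski-Solovej-2016} to obtain a unitary Bogoliubov diagonalization $\bH_\cB\cong\d\Gamma(\xi)+\text{const}$ with $\xi>0$, from which self-adjointness, uniqueness of the ground state (as the image of the vacuum), and the spectral gap (via $\xi\geqslant\tau$ when $\cB\geqslant\tau$) all follow immediately. The paper's proof is slightly terser and asserts $\xi\geqslant\tau$ with the same constant $\tau$ rather than your weaker $\tau'$, but this makes no difference for the conclusion.
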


\begin{proof}
	All the claims follow directly from Theorem 2 in \cite{Nam-Napiorkowski-Solovej-2016}: by such result there exists a unitary operator $\bU$ on $\cG_+$ such that
	\begin{equation} \label{eq:diagonalization}
	\bU \bH_\cB \bU^*=\d\Gamma(\xi)+\inf\sigma(\bH_{\mathcal{B}}),
	\end{equation}
	for a positive operator $\xi$ on $\fh^{(1)}_+\oplus\fh^{(2)}_+$. This proves the self-adjointness and implies that $\bU\Omega_{\cG_+}$ is the unique ground state of $\bH_\cB$. If, in addition, $\cB\geqslant \tau>0$, then $\xi\geqslant \tau>0$, and this implies \eqref{eq:diagonalization_gap}.
\end{proof}

Notice that in Lemma \ref{lemma:diagonalization} we require $\cB_2$ to be Hilbert-Schmidt, an assumption which is fulfilled in the application we are interested in, and which ensures the weaker hypotheses in \cite{Nam-Napiorkowski-Solovej-2016} to be satisfied.

Our interest in operators of the form $\bH_\cB$ is due to the fact that the Bogoliubov Hamiltonian \eqref{eq:Bogoliubov_Hamiltonian} can be realized as a quadratic Hamiltonian in the sense of \eqref{eq:def_HB}. More precisely,
\begin{equation}
\bH=U^*\bH_{\operatorname{Hess}\cE^{\operatorname{H}}[u_0,v_0]}U,
\end{equation}
where $\operatorname{Hess}\cE^{\operatorname{H}}[u_0,v_0]$ is the Hessian of the Hartree functional evaluated at the minimiser and $U$ is given by Theorem \ref{thm:isomorphism}. In the present context the Hessian of the Hartree functional is defined by the second term of a Taylor expansion around the minimiser $(u_0,v_0)$, that is,
\begin{equation} \label{eq:Taylor_expansion}
\begin{split}
\cE^{\operatorname{H}}&[u,v]=\cE^{\operatorname{H}}[u_0,v_0]\\
&+\frac{1}{2}\big\langle \sqrt{c_1}(u-u_0)\oplus\sqrt{c_2}(v-v_0),\operatorname{Hess}\cE^{\operatorname{H}}[u_0,v_0]\sqrt{c_1}(u-u_0)\oplus\sqrt{c_2}(v-v_0)\big\rangle\\
&+o\big(\|u-u_0\|,\|v-v_0\|\big).
\end{split}
\end{equation}
In \eqref{eq:Taylor_expansion} we are considering variations that are \emph{weighted} according to the relative populations of the two species.

In order to explicitly write the expression of $\hess$, let us introduce the following three integral operators $K^{(\alpha)}$, $\alpha\in\{1,2,12\}$, together with their kernels:
\begin{align}
&K^{(1)}:\mathfrak{h}^{(1)}_+\to\mathfrak{h}^{(1)},\qquad \label{eq:K_1} &&K^{(1)}(x,y):=V^{(1)}(x-y)u_0(x)u_0(y)\\
&K^{(2)}:\mathfrak{h}^{(2)}_+\to\mathfrak{h}^{(2)},\qquad\label{eq:K_2} &&K^{(2)}(x,y):=V^{(2)}(x-y)v_0(x)v_0(y)\\
&K^{(12)}:\mathfrak{h}^{(2)}_+\to\mathfrak{h}^{(1)},\qquad \label{eq:K_3} &&K^{(12)}(x,y):=V^{(12)}(x-y)u_0(x)v_0(y).
\end{align}
With the quantities introduced in \eqref{eq:def-Vmnpq} we can write
\begin{equation*}
\begin{split}
\langle u_m, K^{(1)}u_n\rangle&=V^{(1)}_{m00n}\\ 
\langle v_m, K^{(2)}v_n\rangle&=V^{(2)}_{m00n}\\ 
\langle u_m, K^{(12)}v_n\rangle&=V^{(12)}_{m00n}
\end{split}\qquad\qquad
\begin{split}
\langle u_m, K^{(1)}  \overline{u_n}\rangle &= V^{(1)}_{mn00}\\
\langle  v_m, K^{(2)} \overline{v_n}\rangle &= V^{(2)}_{mn00}\\
\langle u_m, K^{(12)}  \overline{v_n}\rangle &= V^{(12)}_{mn00}.
\end{split}
\end{equation*}
Moreover, as a straightforward consequence of Assumption ($A_2^{\mathrm{MF}}$), each such operator is Hilbert-Schmidt: indeed,
\begin{equation} \label{eq:K_HS}
\|K^{(1)}\|_{\operatorname{HS}}^2=\int \d x\d y |K^{(1)}(x,y)|^2\leqslant C^{(1)}+C^{(1)}\|\nabla u_0\|^2_2+C^{(1)}\|\nabla v_0\|_2^2< +\infty,
\end{equation}
and the same holds for $K^{(2)}$ and $K^{(12)}$.

In terms of the $K$'s, and of $h^{(1)}$ and $h^{(2)}$ defined in \eqref{eq:h's}, the Hessian of the Hartree functional reads
\begin{equation} \label{eq:Hess}
\begin{split}
&\qquad\hess=\\
&\begin{pmatrix}
h^{(1)}+c_1K^{(1)}& \sqrt{c_1c_2}K^{(12)}& c_1K^{(1)} J^* & \sqrt{c_1c_2} K^{(12)}J^* \\
\sqrt{c_1c_2}K^{(12)*}  & h^{(2)}+c_2K^{(2)} & \sqrt{c_1c_2}K^{(12)*}J^*  & c_2K_2^{(2)}J^* \\
c_1JK^{(1)}  &  \sqrt{c_1c_2}JK^{(12)}  &  Jh^{(1)}J^*+c_1JK^{(1)}J^*&\sqrt{c_1c_2}JK_1^{(12)}J^*  \\
\sqrt{c_1c_2}JK^{(12)*}  &  c_2JK^{(2)}   & \sqrt{c_1c_2}JK^{(12)*}J^* & Jh^{(2)}J^*+c_2JK^{(2)}J^*
\end{pmatrix}
\end{split}
\end{equation}
as a matrix-valued operator acting on $\mathfrak{h}^{(1)}_+\oplus \mathfrak{h}^{(2)}_+\oplus\big(\mathfrak{h}^{(1)}_+\big)^*\oplus\big(\mathfrak{h}^{(2)}_+\big)^*$.

The main result of this section is the following 

\begin{theorem}[Bounds on Bogoliubov Hamiltonian] \label{thm:upper_lower_bound}
	Under the same hypotheses of Theorem \ref{thm:mf_correction}, one has
	\begin{equation} \label{eq:upper_lower_bound}
	\begin{split}
	\frac{1}{C}\big(\d\Gamma^{(1)}(h^{(1)})+&\d\Gamma^{(2)}(h^{(2)})+\cN_1+\cN_2\big) -C \leqslant\bH\\
	&\leqslant\d\Gamma^{(1)}(h^{(1)})+\d\Gamma^{(2)}(h^{(2)})+C\cN_1+C\cN_2+C,
	\end{split}
	\end{equation}
	for some constant $C>0$. Consequently, $\bH$ has a self-adjoint extension by Friedrichs' method, still denoted by $\bH$, with the same form domain of $\d\Gamma^{(1)}(h^{(1)}+1)+\d\Gamma^{(2)}(h^{(2)}+1)$. Moreover, $\bH$ has a unique, non-degenerate  ground state $\Phi^{\operatorname{gs}}$:
\begin{equation} \label{eq:gap}
\inf\sigma(\bH_{|\{\Phi^{\operatorname{gs}}\}^\perp})>\langle\Phi^{\operatorname{gs}}, \bH\Phi^{\operatorname{gs}}\rangle.
\end{equation}
\end{theorem}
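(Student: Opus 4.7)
The upper bound in \eqref{eq:upper_lower_bound} is the easier half: starting from the explicit expression \eqref{eq:Bogoliubov_Hamiltonian}, I would group the ``number-preserving'' terms $a^*_m a_n$, $b^*_m b_n$, $a^*_m b_n$, $b^*_m a_n$ into second-quantisations $\d\Gamma^{(1)}(h^{(1)} + c_1K^{(1)}) + \d\Gamma^{(2)}(h^{(2)} + c_2 K^{(2)})$ plus an inter-species piece built from $K^{(12)}$, and estimate the $K$-contributions by a Cauchy--Schwarz bound of the form $\pm \d\Gamma^{(1)}(K^{(1)}) \leq C\|K^{(1)}\|_{\operatorname{HS}}\,\cN_1$, using the Hilbert--Schmidt estimate \eqref{eq:K_HS}. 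For the ``pairing'' terms $a^*_m a^*_n$, $b^*_m b^*_n$, $a^*_m b^*_n$ (and their hermitian conjugates), the standard bound
\begin{equation*}
\Bigl| \sum_{m,n\geqslant 1} \langle u_m, K \overline{u_n}\rangle\, a^*_m a^*_n + \mathrm{h.c.} \Bigr| \leqslant C\|K\|_{\operatorname{HS}}\bigl(\cN_1 + 1\bigr)
\end{equation*}
(applied also to the $b$ and mixed $a$-$b$ sectors) combined again with \eqref{eq:K_HS} yields the upper bound.

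For the lower bound I would invoke Lemma \ref{lemma:diagonalization} applied to $\cB = \hess$ given by \eqref{eq:Hess}, using the identification $\bH = U^* \bH_\cB U$. The task is to verify the four hypotheses of the lemma in this specific setting. The Hilbert--Schmidt property of $\cB_2$ follows from \eqref{eq:K_HS} and its analogues. The strict positivity $\cB_1 > 0$ of the ``number-preserving'' block follows from the Euler--Lagrange equations \eqref{eq:mf-eq}: $u_0$, $v_0$ sit in the kernels of $h^{(1)}$, $h^{(2)}$ and are by construction ground states (nonnegative minimisers), so $h^{(\alpha)}$ has a spectral gap on $\fh^{(\alpha)}_+$ thanks to the compact resolvent guaranteed by $(A_1)$; the positivity contribution of the $K^{(\alpha)}$ blocks is controlled using $\widehat{V^{(\alpha)}} \geqslant 0$ in \eqref{eq:inequality_V}. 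The crucial hypothesis $\cB > 0$ (and in fact $\cB \geqslant \tau > 0$ on the orthogonal complement of any zero modes) is the key step: since $(u_0, v_0)$ is a minimiser of $\cE^{\operatorname{H}}$, the Hessian is positive semi-definite, and the miscibility condition \eqref{eq:inequality_V1} -- entering through an identity analogous to those in \eqref{eq:convex_GP1} but with the Fourier factors $\widehat{V^{(\alpha)}}$ -- together with the uniqueness of the Hartree minimiser proved in Section 4.1 rules out nontrivial zero modes on $\fh^{(1)}_+ \oplus \fh^{(2)}_+$. Finally, $\|\cB_1^{-1/2}\cB_2 J \cB_1^{-1/2}\| < 1$ is equivalent to $\cB_1 \pm \cB_2 J > 0$, which is a direct consequence of $\cB > 0$.

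Lemma \ref{lemma:diagonalization} then produces the unitary $\bU$ on $\cG_+$ with $\bU \bH_\cB \bU^* = \d\Gamma(\xi) + \inf\sigma(\bH_\cB)$ for some $\xi \geqslant \tau > 0$. This immediately gives $\bH \geqslant \inf\sigma(\bH) > -\infty$, the unique ground state $\Phi^{\operatorname{gs}} := U^* \bU^* \Omega_{\cG_+}$, and the spectral gap \eqref{eq:gap}. To upgrade the plain lower bound $\bH \geqslant -C$ to the $h^{(\alpha)}$-weighted one in \eqref{eq:upper_lower_bound}, the strategy is to repeat the above argument for the shifted quadratic Hamiltonian
\begin{equation*}
\bH - \varepsilon\bigl(\d\Gamma^{(1)}(h^{(1)}) + \d\Gamma^{(2)}(h^{(2)}) + \cN_1 + \cN_2\bigr),
\end{equation*}
which corresponds to replacing $\cB$ by $\cB - \varepsilon \cB^{\mathrm{kin}}$ with $\cB^{\mathrm{kin}}$ the diagonal matrix built from $h^{(\alpha)} + 1$. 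For $\varepsilon$ small enough, the positivity $\cB - \varepsilon \cB^{\mathrm{kin}} > 0$ survives by continuity of the spectral gap under relatively bounded perturbations, so Lemma \ref{lemma:diagonalization} still applies and yields the shifted operator $\geqslant -C_\varepsilon$, i.e.\ the desired lower bound.

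Given the two-sided bound, $\bH$ is symmetric and bounded from below on the core \eqref{eq:core-bH}; Friedrichs' method produces a self-adjoint extension whose form domain coincides with that of $\d\Gamma^{(1)}(h^{(1)} + 1) + \d\Gamma^{(2)}(h^{(2)} + 1)$, the comparison of form domains being guaranteed precisely by \eqref{eq:upper_lower_bound}. The main obstacle I anticipate is the verification of strict positivity $\cB \geqslant \tau > 0$: the Hessian is only semi-definite a priori, and one must carefully combine the spectral gap of $h^{(\alpha)}$ on $\fh^{(\alpha)}_+$ with the miscibility condition \eqref{eq:inequality_V1} to rule out zero modes in the full four-block matrix structure of \eqref{eq:Hess}, including the off-diagonal couplings mediated by $K^{(12)}$ and the $J$-twisted blocks.
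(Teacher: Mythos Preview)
Your proposal is correct and follows essentially the same route as the paper for the lower bound, the ground state, and the spectral gap: identify $\bH$ with $U^*\bH_{\hess}U$, verify the hypotheses of Lemma~\ref{lemma:diagonalization}, and then repeat the argument for the $\varepsilon$-shifted operator $\cB - \varepsilon\,\mathrm{diag}(h^{(1)}+1,h^{(2)}+1,\ldots)$ to upgrade $\bH\geqslant -C$ to the weighted lower bound. Two differences are worth noting. First, for the upper bound the paper does \emph{not} use the direct Cauchy--Schwarz bounds you outline; instead it observes that $\d\Gamma^{(1)}(h^{(1)})+\d\Gamma^{(2)}(h^{(2)})+\widetilde C\cN - \bH = U^*\bH_{\widetilde C\mathbbm{1}-\operatorname{Hess}_K}U$ and applies Lemma~\ref{lemma:diagonalization} once more (with $\widetilde C$ large enough that $\widetilde C\mathbbm{1}-\operatorname{Hess}_K>0$, using only that $\operatorname{Hess}_K$ is bounded). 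This is more unified, whereas your direct estimates are more elementary but require tracking each block separately. Second, the strict positivity $\hess\geqslant\eta>0$ that you flag as the ``main obstacle'' is in fact the easy step: the paper splits $\hess=\operatorname{Hess}_h+\operatorname{Hess}_K$ where $\operatorname{Hess}_h=\mathrm{diag}(h^{(1)},h^{(2)},Jh^{(1)}J^*,Jh^{(2)}J^*)\geqslant\eta$ by the spectral gap of $h^{(\alpha)}$ on $\fh^{(\alpha)}_+$, and $\operatorname{Hess}_K\geqslant 0$ by Cauchy--Schwarz with the miscibility condition \eqref{eq:inequality_V1}. No delicate zero-mode analysis is needed, and this decomposition is also what makes the $\varepsilon$-shift argument for the lower bound immediate.
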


As a preparatory result towards the proof of Theorem \ref{thm:upper_lower_bound}, we show that $\operatorname{Hess}\cE^{\operatorname{H}}[u_0,v_0]$ has strictly positive bottom.

\begin{lemma}\label{lemma:positivity_Hessian}
	There exists a constant $\eta>0$ such that
	\begin{equation} \label{eq:positivity_Hessian}
	\operatorname{Hess}\cE^{\operatorname{H}}[u_0,v_0]\geqslant \eta.
	\end{equation}
\end{lemma}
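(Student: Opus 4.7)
The strategy will combine a Perron--Frobenius spectral gap for the mean-field operators $h^{(1)}, h^{(2)}$ with a Fourier-space Cauchy--Schwarz driven by the miscibility condition \eqref{eq:inequality_V1}.

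\emph{Step 1 (spectral gap).} By the convexity argument recalled in the proof of Theorem~\ref{thm:leading-H}, I may take $u_0, v_0 > 0$ strictly. The Euler--Lagrange identities \eqref{eq:mf-eq} then exhibit $u_0$ and $v_0$ as strictly positive zero-eigenfunctions of $h^{(1)}$ and $h^{(2)}$; since $U^{(\alpha)}_{\mathrm{trap}}$ is confining, $h^{(\alpha)}$ has compact resolvent and Perron--Frobenius forces $0$ to be the non-degenerate ground eigenvalue. Hence there exists $\eta_0 > 0$ with $h^{(\alpha)} \geqslant \eta_0$ on $\fh^{(\alpha)}_+$.

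\emph{Step 2 (quadratic form on coherent vectors).} I will evaluate $\langle \xi, \hess \xi\rangle$ on the physical vector $\xi = (\sqrt{c_1}\, f, \sqrt{c_2}\, g, J\overline{\sqrt{c_1}\, f}, J\overline{\sqrt{c_2}\, g})$ with $f \in \fh^{(1)}_+$, $g \in \fh^{(2)}_+$. Matching \eqref{eq:Hess} with the second-order Taylor expansion of $\mathcal{E}^{\mathrm H} - \mu^{(1)} \|u\|^2 - \mu^{(2)} \|v\|^2$ at $(u_0, v_0)$ shows that these quadratic forms coincide (the Lagrange multipliers absorb the $c_\alpha V^{(\alpha)} \ast u_0^2$ and $c_\alpha V^{(12)} \ast v_0^2$ pieces into $h^{(1)}, h^{(2)}$). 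Writing $f = f_R + i f_I$, $g = g_R + i g_I$ and using that the kernels $K^{(\alpha)}$ are real (as $u_0, v_0$ are), the $K^{(\alpha)}$ contributions to the imaginary components cancel exactly between the $A$- and $B$-blocks --- a manifestation of the phase invariance of the Hartree functional --- leaving
\begin{equation*}
\tfrac{1}{2} \langle \xi, \hess \xi\rangle = c_1 \langle f, h^{(1)} f\rangle + c_2 \langle g, h^{(2)} g\rangle + 2\, \mathcal{J}[f_R, g_R],
\end{equation*}
\begin{equation*}
\mathcal{J}[f_R, g_R] = (2\pi)^{-3}\!\int\!\Big( c_1^2 \widehat{V^{(1)}} |\widehat{u_0 f_R}|^2 + c_2^2 \widehat{V^{(2)}} |\widehat{v_0 g_R}|^2 + 2 c_1 c_2 \widehat{V^{(12)}}\, \Re\, \overline{\widehat{u_0 f_R}}\, \widehat{v_0 g_R} \Big)\, \d k.
\end{equation*}
By the Fourier positivity \eqref{eq:inequality_V}, AM--GM, and the miscibility bound \eqref{eq:inequality_V1}, the integrand of $\mathcal{J}$ is pointwise nonnegative: $c_1^2 \widehat{V^{(1)}} |A|^2 + c_2^2 \widehat{V^{(2)}} |B|^2 \geqslant 2 c_1 c_2 \sqrt{\widehat{V^{(1)}}\, \widehat{V^{(2)}}}\, |A| |B| \geqslant 2 c_1 c_2 |\widehat{V^{(12)}}|\, |A| |B|$. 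Together with Step~1, this yields $\tfrac{1}{2} \langle \xi, \hess \xi\rangle \geqslant \eta_0 \min(c_1, c_2) (\|f\|^2 + \|g\|^2)$.

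\emph{Step 3 (extension to all $\xi$).} An arbitrary vector in the domain of $\hess$ decomposes, under the $J$-swap symmetry of the block structure \eqref{eq:Hess}, into a coherent part and an anti-coherent part $\xi = (\sqrt{c_1}\, f, \sqrt{c_2}\, g, -J\overline{\sqrt{c_1}\, f}, -J\overline{\sqrt{c_2}\, g})$; on the latter the same computation applies with the roles of $(f_R, g_R)$ and $(f_I, g_I)$ interchanged, and the two bounds combine to give $\hess \geqslant \eta$ for some $\eta > 0$. The main obstacle will be the bookkeeping in Step~2: verifying that the cancellation of the $K^{(\alpha)}$ blocks in the imaginary channel is exact (which uses $J^* J = \mathrm{id}$ and the reality of the kernels) and that \eqref{eq:Hess} truly matches the constrained Taylor expansion of $\mathcal{E}^{\mathrm H}$ at the minimiser. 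Once this is in place, the Fourier--Cauchy--Schwarz argument is the precise quantitative incarnation of the miscibility condition \eqref{eq:inequality_V1}, without which the cross term in $\mathcal{J}$ could destroy positivity of the Hessian.
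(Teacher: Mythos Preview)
Your ingredients are the same as the paper's---the spectral gap of $h^{(\alpha)}$ on $\fh^{(\alpha)}_+$ and the Fourier Cauchy--Schwarz driven by \eqref{eq:inequality_V}--\eqref{eq:inequality_V1}---but you route them through a coherent/anti-coherent splitting of the \emph{vector space}, whereas the paper splits the \emph{operator}. The paper simply writes $\hess=\operatorname{Hess}_h+\operatorname{Hess}_K$, reads off $\operatorname{Hess}_h\geqslant\eta$ from your Step~1, and then shows $\operatorname{Hess}_K\geqslant 0$ in one stroke: it has the block form $\left(\begin{smallmatrix}\cA&\cA J^*\\ J\cA& J\cA J^*\end{smallmatrix}\right)$ with $\cA=\left(\begin{smallmatrix}c_1K^{(1)}&\sqrt{c_1c_2}K^{(12)}\\\sqrt{c_1c_2}K^{(12)*}&c_2K^{(2)}\end{smallmatrix}\right)$, and $\cA\geqslant 0$ is exactly your Fourier computation (your $\mathcal J[f,g]$ is $\langle f\oplus g,\cA\,f\oplus g\rangle$). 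No real/imaginary decomposition, no coherent sectors, no appeal to phase invariance.

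Your Step~3 also has a gap you do not acknowledge: ``the two bounds combine'' requires $\langle\xi_+,\hess\,\xi_-\rangle+\langle\xi_-,\hess\,\xi_+\rangle=0$, which does not follow from orthogonality of $\xi_\pm$ alone. It \emph{is} true---because all kernels $h^{(\alpha)},K^{(\alpha)}$ are real, one checks that $\langle\xi_+,\hess\,\xi_+'\rangle\in\R$ for any two coherent vectors, and anti-coherent vectors are exactly $i\cdot(\text{coherent})$---but this must be stated and verified, not attributed to an unspecified ``$J$-swap symmetry''. Also, the cancellation in Step~2 is not a manifestation of phase invariance (the phase direction $iu_0$ lies outside $\fh^{(1)}_+$); it is the algebraic identity $\langle f,Kf\rangle+\Re\langle f,K\bar f\rangle=2\langle f_R,Kf_R\rangle$ for real symmetric $K$. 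The paper's operator decomposition sidesteps all of this bookkeeping.
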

This is clearly a non-degeneracy result for the minimiser $(u_0,v_0)$ of the Hartree functional.
\begin{proof}
	 We consider the decomposition
	 \begin{equation*}
	 \hess=\operatorname{Hess}_{h}+\operatorname{Hess}_{K},
	 \end{equation*}
	 where
	\begin{equation}  \label{eq:expansion_Hessian}
	\begin{split}
	&\operatorname{Hess}_{h}:=\begin{pmatrix}
	h^{(1)}&0&0 &0  \\
	0& h^{(2)}& 0 & 0 \\
	0& 0  &  J{h^{(1)}}J^*& 0 \\
	0 &  0& 0 & J{h^{(2)}}J^*
	\end{pmatrix}\\
	&\operatorname{Hess}_{K}:=
	\begin{pmatrix}
	c_1K^{(1)}& \sqrt{c_1c_2}K^{(12)}& c_1K^{(1)}J^*  & \sqrt{c_1c_2} K^{(12)}J^*  \\
	\sqrt{c_1c_2}K^{(12)*}  & c_2K^{(2)} & \sqrt{c_1c_2}K^{(12)*}J^*   & c_2K^{(2)}J^*  \\
	c_1JK^{(1)}   &  \sqrt{c_1c_2}JK^{(12)}  &  c_1JK^{(1)}J^*&\sqrt{c_1c_2}JK^{(12)}J^*  \\
	\sqrt{c_1c_2}JK^{(12)*}  &  c_2JK^{(2)}   & \sqrt{c_1c_2}JK^{(12)*}J^* & c_2JK^{(2)}J^*
	\end{pmatrix}.
	\end{split}
	\end{equation}
	
	First, we argue that $\operatorname{Hess}_h$ must be bounded away from zero. Indeed, since $(u_0,v_0)$ is the unique minimiser of the Hartree functional, one has $h^{(1)}>0$ on $\fh^{(1)}_+$ and $h^{(2)}>0$ on $\fh^{(2)}_+$. Since Assumptions ($A_1$) and ($A_2^{\mathrm{MF}}$) imply that $h^{(1)}$ and $h^{(2)}$ have compact resolvent, their spectra cannot accumulate to zero, and this implies the existence of some $\eta>0$ such that
	\begin{equation}\label{eq:positivity_h}
	\operatorname{Hess}_h\geqslant\eta.
	\end{equation}
	
	Concerning $\operatorname{Hess}_K$, we observe that it is a matrix-valued operator with structure
	\begin{equation*}
	\operatorname{Hess}_K=\begin{pmatrix}
	\cA& \cA J^*\\J\cA&J\cA J
	\end{pmatrix}
	\end{equation*}
	where
	\begin{equation}
	\cA:=\begin{pmatrix}
	c_1 K^{(1)}&\sqrt{c_1c_2} K^{(12)}\\ \sqrt{c_1c_2} K^{(12)}&c_2 K^{(2)}
	\end{pmatrix}.
	\end{equation}
	Since for any $f\oplus g\in\mathfrak{h}^{(1)}_+\oplus \mathfrak{h}^{(2)}_+$ one has
	\begin{equation*}
	\langle f\oplus g,\cA \,f\oplus g\rangle =c_1\langle f, K^{(1)}f\rangle + c_2\langle g,K^{(2)}g\rangle +2\,\sqrt{c_1c_2}\,\Re\,\langle f,K^{(12)}g\rangle,
	\end{equation*}
	it is straightforward to see that, by Cauchy-Schwarz,  Assumption ($A_2^{\mathrm{MF}}$) implies $\cA\geqslant 0$. Hence, $\operatorname{Hess}_K\geqslant 0$ follows. This result, together with \eqref{eq:positivity_h}, implies $\operatorname{Hess}\cE^{\operatorname{H}}[u_0,v_0]\geqslant\eta>0$.
\end{proof}

We can finally prove Theorem \ref{thm:upper_lower_bound}.

\begin{proof}[Proof of Theorem \ref{thm:upper_lower_bound}]
	We recognize that $\bH=U^*\bH_{\cB}U$ with $\cB=\operatorname{Hess}\cE^{\operatorname{H}}[u_0,v_0]$, and, comparing \eqref{eq:Hess} with \eqref{eq:def_B}, we deduce that
	\begin{equation}
	\cB_2=\begin{pmatrix}
	c_1K^{(1)} J^* & \sqrt{c_1c_2} K^{(12)}J^*\\
	\sqrt{c_1c_2} K^{(12)*}J^* & c_2K^{(2)} J^*
	\end{pmatrix}
	\end{equation}
	and
	\begin{equation*}
	\cB_1=\begin{pmatrix}
	h^{(1)}&0\\0&h^{(2)}
	\end{pmatrix}+\cB_2J.
	\end{equation*}
	Since $\cB_1>0$, $\hess>0$ by Lemma \ref{lemma:positivity_Hessian}, $\cB_2$ is Hilbert-Schmidt, and $\|\cB_1^{-1/2}\cB_{2}J\cB^{-1/2}_1\|<1$, we can apply Lemma \ref{lemma:diagonalization}. As a direct consequence we have that $\bH$ is bounded from below.
	
	We now show that the argument can be re-done so as to get the more refined lower bound \eqref{eq:upper_lower_bound}. Indeed, it is easy to see that, for $\eps>0$ small enough, the operator
	\begin{equation*}
	\begin{split}
	\cB_\eps:=&\operatorname{Hess}\cE^{\operatorname{H}}[u_0,v_0]\\
	&-\eps\begin{pmatrix}
	h^{(1)}+\mathbbm{1}&0&0&0\\0&h^{(2)}+\mathbbm{1}&0&0\\0&0&Jh^{(1)}J^*+\mathbbm{1}&0\\0&0&0&Jh^{(2)}J^*+\mathbbm{1}
	\end{pmatrix}
	\end{split}
	\end{equation*}
	is positive. Hence, for $\cB_\eps$ too we can apply Lemma \ref{lemma:diagonalization} and obtain the existence of a positive constant $C_{\cB_\eps}$ such that
	\begin{equation*}
	\bH_{\cB_\eps}\ge-C_{\cB_\eps}.
	\end{equation*}
	By \eqref{eq:def_HB}, last inequality is equivalent to
	\begin{equation*}
	\bH\geqslant \eps(\d\Gamma^{(1)}(h^{(1)})+\d\Gamma^{(2)}(h^{(1)})+\cN_1+\cN_2) -C_{\cB_\eps},
	\end{equation*}
	which is the first inequality we want to prove.
	
	To prove the second part of \eqref{eq:upper_lower_bound}, we remark that, for any $\widetilde{C}>0$,
	\begin{equation*}
	\d\Gamma^{(1)}(h^{(1)})+\d\Gamma^{(2)}(h^{(2)})+\widetilde{C}\cN_1+\widetilde{C}\cN_2-\bH=U^*\bH_{\widetilde{C}\mathbbm{1}-\operatorname{Hess}_K} U,
	\end{equation*}
	with $\operatorname{Hess}_K$ defined in \eqref{eq:expansion_Hessian}.
	Since all the $K^{(j)}$'s are bounded operators, $\operatorname{Hess}_K$ is bounded as well. Hence, for $\widetilde{C}$ large enough, $\widetilde{C}\mathbbm{1}-\operatorname{Hess}_K>0$. We can then apply Lemma \ref{lemma:diagonalization}, which ensures the existence of $C_{K}>0$ such that
	\begin{equation*}
	\bH_{\widetilde{C}\mathbbm{1}-\operatorname{Hess}_K}>-C_{K}.
	\end{equation*}
	Equivalently,
	\begin{equation}
	\bH\leqslant \d\Gamma^{(1)}(h^{(1)})+\d\Gamma^{(2)}(h^{(2)})+\widetilde{C}\cN_1+\widetilde{C}\cN_2+C_{K}.
	\end{equation}
	Thus \eqref{eq:upper_lower_bound} is proven by choosing $C:=\max\{\eps^{-1},C_{\cB_\eps},\widetilde{C},C_{K}\}$.
	
From the above proof, we already recognized that $\bH=U^*\bH_{\hess}U$, and all the hypotheses of Lemma \ref{lemma:diagonalization} are fulfilled if $\cB=\hess$. Hence, a direct application of Lemma \ref{lemma:diagonalization} shows that $\bH$ can be extended to a self-adjoint operator which has a unique ground state $\Phi^{\operatorname{gs}}$ and satisfies \eqref{eq:gap}:
$$
\inf\sigma(\bH_{|\{\Phi^{\operatorname{gs}}\}^\perp})>\langle\Phi^{\operatorname{gs}}, \bH\Phi^{\operatorname{gs}}\rangle.
$$
Moreover, the bounds  \eqref{eq:upper_lower_bound} implies that $\bH$ has the same  form domain of $\d\Gamma^{(1)}(h^{(1)}+1)+\d\Gamma^{(2)}(h^{(2)}+1)$.
\end{proof}

The estimate \eqref{eq:upper_lower_bound} will play an important role in Section \ref{sect:proof_correction} in the proof of Theorem \ref{thm:mf_correction}.

\begin{remark}After the identification of $\bH$ as the second quantization of the Hessian in the sense of \eqref{eq:def_HB}, the key point towards the proof of \eqref{eq:upper_lower_bound} was Lemma \ref{lemma:positivity_Hessian}. This is for us a mere consequence of Assumption ($A_2^{\mathrm{MF}}$) in which we require the positivity condition \eqref{eq:inequality_V} and the `miscibility' condition \eqref{eq:inequality_V1}. One could relax Assumption ($A_2^{\mathrm{MF}}$) by requiring the Hessian to be bounded away from zero in the first place; observe that when this is the case one should additionally require the uniqueness of the minimiser of the Hartree functional, while for us this is  another direct consequence of Assumption ($A_2^{\mathrm{MF}}$).\end{remark}

\begin{remark}
 A direct application of the diagonalization result of \cite[Theorem 2]{Nam-Napiorkowski-Solovej-2016} would allow to bound $\mathbb{H}$ from below in terms of an operator that is surely quadratic in $Z$, $Z^*$, but not separately in $a$, $a^*$ or $b$, $b^*$, thus preventing from obtaining the inequality \eqref{eq:upper_lower_bound} that is needed in the proof of Theorem \ref{thm:mf_correction}. We can fix this issue by further recognising (an observation that has no analogue for the one-component case) that the operator $\xi$ arising in the identity \eqref{eq:diagonalization} can be actually chosen to be \emph{reduced} with respect to $\fh^{(1)}_+\oplus\fh^{(2)}_+$, an additional feature that allows to estimate $\mathbb{H}$ from below by means of the two number operators. Such arguments are not needed for our main argument once assumption ($A_2^{\mathrm{MF}}$) is taken.
\end{remark}

\subsection{Estimate in the truncated two-component Fock space} \label{sect:truncated}
	

The claim of Theorem \ref{thm:mf_correction} is that the ground state energy of $\bH$ provides the second order correction to the ground state energy of $H_N^{\operatorname{MF}}$. Since $\bH$ and $H_N^{\operatorname{MF}}$ act on two different spaces, respectively, $\cF_+$ and $\mathcal{H}_{N_1,N_2}$, we rather compare $\bH$ with the operator $U_NH_N^{\operatorname{MF}}U_N^*$ on $\cF_+$ with a suitable unitary transformation $U_N$. This will lead to Theorem \ref{thm:truncated_bound} below, the main result of this section. 

The unitary operator $U_N$ is defined using ideas in \cite{LewNamSerSol-15}. More precisely, for arbitrary
\begin{equation*} 
\phi\in \big(\fh^{(1)}\big)^{\otimes_{\operatorname{sym}}j}\otimes \big(\fh^{(2)}\big)^{\otimes_{\operatorname{sym}}k}\qquad\textrm{and}\qquad \chi\in\big(\fh^{(1)}\big)^{\otimes_{\operatorname{sym}}\ell}\otimes \big(\fh^{(2)}\big)^{\otimes_{\operatorname{sym}}r}
\end{equation*}
we define $\phi \boxtimes\chi$ to be the function in $(h^{(1)})^{\otimes_{\operatorname{sym}}(j+\ell)}\otimes (h^{(2)})^{\otimes_{\operatorname{sym}}(k+r)}$ given by
\begin{equation}
\begin{split}
&(\phi \boxtimes\chi)(x_1,\dots, x_{j+\ell};y_1,\dots,y_{k+r})\;:=\;\frac{1}{\sqrt{j!\ell!(j+\ell!)}\sqrt{k!r!(k+r)!}}\;\times\\
&\times\sum_{\substack{\sigma\in \Sigma_{j+\ell} \\ \pi\in \Sigma_{k+r}}}\phi(x_{\sigma_1},\dots,x_{\sigma_j};y_{\pi_1},\dots,y_{\pi_k})\, \chi(x_{\sigma_{j+1}},\dots,x_{\sigma_{j+\ell}};y_{\pi_{k+1}},\dots,y_{\pi_{k+r}})\,,
\end{split}
\end{equation}
where $\Sigma_p$ is the symmetric group of $p$ elements.
A function  $\psi_N\in \cH_{N_1,N_2} $ decomposes uniquely as
\begin{equation}\label{eq:decomposition}
 \psi_N\;=\;\sum_{j=0}^{N_1}\sum_{k=0}^{N_2}\;\chi_{jk}\,\boxtimes\,\big(u_0^{\otimes(N_1-j)}\otimes v_0^{\otimes(N_2-k)} \big)
\end{equation}
for some $\chi_{jk}\in(\fh^{(1)}_+)^{\otimes_{\operatorname{sym}}j}\otimes (\fh^{(2)}_+)^{\otimes_{\operatorname{sym}}k}$, where for \emph{each summand} of the r.h.s.~of \eqref{eq:decomposition} it is understood that
\[
 \begin{split}
  \chi_{jk}\;&\equiv\;\chi_{jk}(x_1,\dots,x_j;y_1,\dots,y_k) \\
  u_0^{\otimes(N_1-j)}\;&\equiv\;u_0(x_{j+1})\cdots u_0(x_{N_1}) \\
  v_0^{\otimes(N_2-k)}\;&\equiv\;v_0(y_{k+1})\dots v_0(y_{N_2}) \,.
 \end{split}
\]

Thanks to the orthogonality relations
\begin{equation}
\begin{split}
 \Big\langle \:\chi_{jk}\,\boxtimes\,\big(u_0^{\otimes(N_1-j)}\otimes v_0^{\otimes(N_2-k)}&\big)\;,\;\chi_{\ell r}\,\boxtimes\,\big(u_0^{\otimes(N_1-\ell)}\otimes v_0^{\otimes(N_2-r)}\big)\;\Big\rangle\;= \\
 &=\|\chi_{jk} \|_2^2\;\delta_{j\ell}\,\delta_{kr}\,,
\end{split}
\end{equation}
it is easy to check that
\begin{equation} \label{eq:def_U}
U_{N}:\cH_{N_1,N_2}\longrightarrow \cF_+^{\,\leqslant N}\,,\qquad
U_{N}\psi_N\;:=\;(\chi_{jk})_{j+k\leqslant N}
\end{equation}
defines a unitary operator between Hilbert spaces, where
\begin{equation}
	\mathcal{F}_+^{\,\leqslant N}:=\;\bigoplus_{L=0}^N\;\Bigg(\bigoplus_{\substack{n+m=L \\ n\leqslant N_1,\,m\leqslant N_2}}\big( \mathfrak{h}^{(1)}\big)^{\otimes_{\operatorname{sym}}n} \otimes \big(\mathfrak{h}^{(2)}\big)^{\otimes_{\operatorname{sym}}m}\Bigg).
\end{equation}

The following is an analogue of the one-component result in \cite[Proposition 4.2]{LewNamSerSol-15}, whose proof  is merely algebraic. 

\begin{proposition}
	The action of the operator $U_N:\cH_{N_1,N_2}\to\cF^{\leqslant N}_+$ defined in \eqref{eq:def_U} can be written as
	\begin{equation} \label{eq:action_U_N}
	\big(U_N\psi_N\big)_{jk}=\Big(\big(Q^{(1)}\big)^{\otimes j}\otimes \big(Q^{(2)}\big)^{\otimes k}\frac{a^{N_1-j}_0\,\, b^{N_2-k}_0}{\sqrt{(N_1-j)!(N_2-k)!}}\Psi_N\Big)_{jk},
	\end{equation}
	where $Q^{(1)}=\mathbbm{1}-\ket{u_0}\bra{u_0}$, $Q^{(2)}=\mathbbm{1}-\ket{v_0}\bra{v_0}$, and $\Psi_N\in\cF_+^{\leqslant N}$ is the vector whose only one non-zero component coincides with $\psi_N$. For $\Phi\in\cF_+^{\leqslant N}$, the adjoint of $U_N$ acts as
	\begin{equation}
	U_N^*\;\Phi=\sum_{j=0}^{N_1}\sum_{k=0}^{N_2}\frac{1}{\sqrt{(N_1-j)!(N_2-k)!}}\Big((a^*_0)^{N_1-j}\,(b_0^*)^{N_2-k}\Phi\Big)_{N_1N_2}.
	\end{equation}
	Moreover, for all non-zero $m,n\in \bN$, the following identities hold true
\begin{equation} \label{eq:relations}
\begin{split}
U_Na^*_0 a_0U_N^*&=N_1-\cN_1,\\
U_Na_0^*a_m U_N^*&=\sqrt{N_1-\cN_1}a_m,\\
U_Na^*_m a_0U_N^*&=a^*_m\sqrt{N_1-\cN_1},\\
U_Na^*_ma_nU^*_N&=a^*_ma_n,
\end{split}\qquad 
\begin{split}
U_Nb^*_0 b_0U_N^*&=N_2-\cN_2,\\
U_Nb_0^*b_m U_N^*&=\sqrt{N_2-\cN_2}b_m,\\
U_Nb^*_m b_0U_N^*&=b^*_m\sqrt{N_2-\cN_2},\\
U_Nb^*_mb_nU^*_N&=b^*_mb_n.
\end{split}
\end{equation}
\end{proposition}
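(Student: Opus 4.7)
The plan is to establish the three assertions (the explicit action of $U_N$, the formula for $U_N^*$, and the conjugation identities \eqref{eq:relations}) by a direct algebraic computation, leveraging the uniqueness of the decomposition \eqref{eq:decomposition}. The underlying picture is simple: formula \eqref{eq:action_U_N} extracts the $(j,k)$-th excitation sector of $\psi_N$ by, first, removing the $N_1-j$ particles condensed on $u_0$ and the $N_2-k$ particles condensed on $v_0$ via appropriate powers of $a_0$ and $b_0$, and then projecting the remaining $j+k$ particles onto the orthogonal complements via $(Q^{(1)})^{\otimes j}\otimes(Q^{(2)})^{\otimes k}$. This reproduces exactly the two-component adaptation of \cite[Proposition 4.2]{LewNamSerSol-15}.

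First I would verify \eqref{eq:action_U_N}. Writing a generic $\psi_N$ in the form \eqref{eq:decomposition} and combining the elementary identities
$$
a_0\,u_0^{\otimes n}=\sqrt{n}\,u_0^{\otimes(n-1)},\qquad b_0\,v_0^{\otimes n}=\sqrt{n}\,v_0^{\otimes(n-1)}
$$
with the fact that $Q^{(1)}u_0=0$ and $Q^{(2)}v_0=0$, I would check that only the summand of index $(j,k)$ survives the action of $(Q^{(1)})^{\otimes j}\otimes(Q^{(2)})^{\otimes k}(a_0^{N_1-j}\,b_0^{N_2-k})/\sqrt{(N_1-j)!(N_2-k)!}$. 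The symmetrizing prefactor hidden inside $\boxtimes$ combines with the $\sqrt{(N_1-j)!(N_2-k)!}$ denominator to leave precisely $\chi_{jk}$. The adjoint formula then follows either by duality (testing the desired expression against $\psi_N$ and recovering $\langle U_N\psi_N,\Phi\rangle=\sum_{j,k}\langle\chi_{jk},\Phi_{jk}\rangle$) or, more directly, by recognising that $(a_0^*)^{N_1-j}(b_0^*)^{N_2-k}/\sqrt{(N_1-j)!(N_2-k)!}$ reattaches the condensate factors $u_0^{\otimes(N_1-j)}\otimes v_0^{\otimes(N_2-k)}$ to $\Phi_{jk}$ up to the correct symmetrization, thereby inverting the decomposition \eqref{eq:decomposition}.

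With \eqref{eq:action_U_N} in hand, the relations \eqref{eq:relations} become routine. For example, $a_0^*a_0$ counts the number of $u_0$-particles, which equals $N_1-j$ on the $(j,k)$-summand of \eqref{eq:decomposition}; since the corresponding image under $U_N$ lies in the $(j,k)$-sector of $\cF_+$ where $\cN_1$ acts as multiplication by $j$, one reads off $U_Na_0^*a_0U_N^*=N_1-\cN_1$. For the mixed identities such as $U_Na_0^*a_mU_N^*=\sqrt{N_1-\cN_1}\,a_m$, I would compute the left-hand side directly on $U_N\psi_N$ using the CCR \eqref{CCR} and the fact that $a_m$ with $m\geqslant 1$ commutes with $Q^{(1)}$ (since $u_m\in \fh_+^{(1)}$) and with all $b$-operators; the shift $j\mapsto j+1$ produced by $a_m^*$ combines with the extra power of $a_0$ to generate exactly the factor $\sqrt{(N_1-j)}=\sqrt{N_1-\cN_1}$ on the relevant sector. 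The identities involving $b_0,b_m$ are identical with the roles of $\fh^{(1)}$ and $\fh^{(2)}$ swapped, and $U_Na_m^*a_nU_N^*=a_m^*a_n$ is immediate because these operators preserve the excitation numbers $(j,k)$.

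The main obstacle will be the careful bookkeeping of combinatorial prefactors produced by the double symmetrizer defining $\boxtimes$, especially the cancellation between the $\sqrt{j!\ell!(j+\ell)!}$ / $\sqrt{k!r!(k+r)!}$ normalisations and the factorials arising from repeated applications of $a_0,a_0^*,b_0,b_0^*$ on condensate tensor powers. Once \eqref{eq:action_U_N} is pinned down with the correct constants, the remaining identities are mechanical consequences of \eqref{CCR} together with the orthogonality of $(u_m)_{m\geqslant 1}$ to $u_0$ and of $(v_n)_{n\geqslant 1}$ to $v_0$.
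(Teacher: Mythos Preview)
Your proposal is correct and matches the paper's approach: the paper does not spell out a proof but simply notes that the proposition is the two-component analogue of \cite[Proposition 4.2]{LewNamSerSol-15} and that its proof is ``merely algebraic''. Your outline --- verifying \eqref{eq:action_U_N} from the decomposition \eqref{eq:decomposition} via the action of $a_0,b_0$ on condensate tensor powers and the projections $Q^{(1)},Q^{(2)}$, deducing the adjoint by duality, and then reading off \eqref{eq:relations} sector by sector from the CCR --- is exactly this algebraic verification carried out in the two-component setting.
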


Notice that, as customary, all terms in the left side of \eqref{eq:relations}, are tacitly understood as $U_N\mathfrak{I}_{N_1,N_2}^* a^*_0 a_0 \mathfrak{I}_{N_1,N_2}U_N^*$ and the like, where $\mathfrak{I}_{N_1,N_2}$ is the lifting map from $\cH_{N_1,N_2}$ to the Fock slice with $N_1,N_2$ particles.

Thanks to \eqref{eq:relations} we can explicitly conjugate the many-body Hamiltonian  \eqref{eq:many_body_Hamiltonian_second_quantized} with $U_N$. The main result of this Subsection is the following Proposition, which provides a preliminary estimate valid on the space $\cF_+^{\leqslant M}$. The integer $M$ satisfies the property $M\leqslant N$, and we shall suitably fix it at the end of the proof. Here and henceforth it is understood that, eventually as $M$ and $N$ tend to infinity, $M$ must be chosen so as \emph{both} $M\leqslant N_1$ and $M\leqslant N_2$.

\begin{theorem}[Estimate on truncated Fock space] \label{thm:truncated_bound}
	Under the same hypotheses of Theorem \ref{thm:mf_correction}, given $M\leqslant N$, for any $\Phi\in \cF_+^{\leqslant M}\cap \cD[\bH]$, one has
	\begin{equation} \label{eq:truncated_bound}
	\big|\langle U_NH_N^{\operatorname{MF}} U_N^*\rangle_{\Phi}-Ne_{\operatorname{H}}-\langle \bH\rangle_\Phi\big|\leqslant C\sqrt{\frac{M}{N}}\langle \bH+C\rangle_\Phi
	\end{equation}
	for a positive constant $C$ (independent of $N$ and $M$).
\end{theorem}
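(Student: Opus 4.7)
The plan is to adapt the one-component calculation of \cite{LewNamSerSol-15} to the mixed setting by conjugating $H_N^{\operatorname{MF}}$ with $U_N$ term by term and classifying the result by the number of zero-mode operators originally present. Indeed, $H_N^{\operatorname{MF}}$ in \eqref{eq:many_body_Hamiltonian_second_quantized} is polynomial of degree $\leqslant 4$ in $\{a_0,a_0^*,b_0,b_0^*\}$, with coefficients that are operators on $\cF_+$. Substituting the explicit identities \eqref{eq:relations} replaces every $a_0^*a_0$ by $N_1-\cN_1$, every $b_0^*b_0$ by $N_2-\cN_2$, and every surviving zero-mode by a factor $\sqrt{N_j-\cN_j}$ ($j=1,2$). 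Grouping the resulting expression by number of remaining creation/annihilation operators $a_m,a_m^*,b_n,b_n^*$ with $m,n\geqslant 1$, one gets a constant (order-$N$) contribution, a linear contribution (formally of order $\sqrt{N}$), a quadratic contribution (order $1$), plus cubic and quartic remainders each carrying a prefactor $N^{-1}$.

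The second step is to identify the constant part with $Ne_{\operatorname{H}}$ and the quadratic part with $\bH$. Expanding $N_j-\cN_j=N_j(1-\cN_j/N_j)$ in the constant block, the leading piece reproduces $Ne_{\operatorname{H}}$ upon using $N_j/N\to c_j$ and the definitions of $\mu^{(\alpha)}$ in \eqref{eq:chemical_potentials}, while the $\cN_j$-correction is absorbed into the quadratic block as the diagonal contributions $\dGamma^{(1)}(h^{(1)})$ and $\dGamma^{(2)}(h^{(2)})$. The linear block, after factoring $\sqrt{N_1-\cN_1}\,a_m$ and $\sqrt{N_2-\cN_2}\,b_m$, becomes $\sum_{m\geqslant 1}\langle u_m,h^{(1)}u_0\rangle\sqrt{N_1-\cN_1}\,a_m+\mathrm{h.c.}$ and its $b$-analogue, which vanishes identically thanks to the Euler--Lagrange equations \eqref{eq:mf-eq}. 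The quadratic block, once each $\sqrt{N_j-\cN_j}/\sqrt{N}$ appearing in the off-diagonal monomials $a_m^*a_n^*$, $b_m^*b_n^*$, and $a_m^*b_n^*$ is replaced by the constant $\sqrt{c_j}$, is \emph{exactly} the expression \eqref{eq:Bogoliubov_Hamiltonian} for $\bH$ given the identifications $V^{(\alpha)}_{m00n}$, $V^{(\alpha)}_{mn00}$ in \eqref{eq:def-Vmnpq}.

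The third step is to quantify every error introduced above. The three sources of error are: (a) the replacement $\sqrt{N_j-\cN_j}\mapsto\sqrt{N_j}$, which on $\cF_+^{\leqslant M}$ costs at most $C\,\cN_j/\sqrt{N_j}\leqslant C\sqrt{M/N}\,\cN_j^{1/2}$ per factor; (b) the subleading terms in the expansion of $(N_j-\cN_j)/N_j$ inside the constant block, which are bounded by $C M/N \cdot \cN_j$; and (c) the genuine cubic and quartic remainders, of the schematic form $N^{-1}\sum V^{(\alpha)}_{mnpq}(\text{products of three or four off-zero creation/annihilation operators})$, possibly multiplied by a single $\sqrt{N_j-\cN_j}$. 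Each such error is estimated by Cauchy--Schwarz in Fock space combined with the Hilbert--Schmidt bounds on $K^{(\alpha)}$ from \eqref{eq:K_HS} and the $h^{(j)}$-relative boundedness of the multiplication operators $V^{(\alpha)}*|u_0|^2$ etc., to produce an overall operator bound
\begin{equation*}
\pm\bigl(U_N H_N^{\operatorname{MF}} U_N^*-Ne_{\operatorname{H}}-\bH\bigr)\;\leqslant\;C\sqrt{\tfrac{M}{N}}\bigl(\dGamma^{(1)}(h^{(1)})+\dGamma^{(2)}(h^{(2)})+\cN+C\bigr)
\end{equation*}
on $\cF_+^{\leqslant M}$. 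Combined with the lower bound in \eqref{eq:upper_lower_bound} of Theorem \ref{thm:upper_lower_bound}, namely $\dGamma^{(1)}(h^{(1)})+\dGamma^{(2)}(h^{(2)})+\cN\leqslant C(\bH+C)$, this yields \eqref{eq:truncated_bound}.

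The main obstacle is the systematic bookkeeping in step three: one must control a large number of cross terms, and the crucial point is that every remainder naturally carries at least one power of $\cN^{1/2}/\sqrt{N}\leqslant\sqrt{M/N}$, so that the ``right-hand-side'' operator one ends up with is always dominated by $\bH+C$ rather than by something larger like $\cN^{2}$. This requires treating the cubic interaction terms (one surviving $\sqrt{N_j-\cN_j}$ times three off-zero operators) with some care, moving one mode to the opposite side via Cauchy--Schwarz and using the Hilbert--Schmidt property of the $K^{(\alpha)}$ to trade the resulting kinetic-like expression for $\dGamma^{(j)}(h^{(j)})$; the quartic remainders are easier, being directly controlled by $N^{-1}\cN^2$ which on $\cF_+^{\leqslant M}$ is bounded by $(M/N)\cN\leqslant\sqrt{M/N}(\bH+C)$.
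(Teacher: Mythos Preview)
Your outline follows the paper's approach almost exactly: decompose $U_NH_N^{\operatorname{MF}}U_N^*$ into blocks $M_0,\dots,M_4$ according to the number of off-zero creation/annihilation operators (the paper's Lemma~\ref{lemma:splitting}), identify the leading parts of $M_0$ and $M_2$ with $Ne_{\operatorname{H}}$ and $\bH$ and use the Euler--Lagrange equations on $M_1$ (Lemma~\ref{lemma:isolation}), estimate all remainders (Lemma~\ref{lemma:estimates}), and close with the bound $\dGamma^{(1)}(h^{(1)})+\dGamma^{(2)}(h^{(2)})+\cN\leqslant C(\bH+C)$ from Theorem~\ref{thm:upper_lower_bound}. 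Two points in your sketch are imprecise enough to be actual gaps, however.

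First, the linear block does \emph{not} vanish identically. The coefficient of $a_m^*\sqrt{N_1-\cN_1}$ in $M_1$ is $T^{(1)}_{m0}+V^{(1)}_{m000}(N_1-\cN_1-1)/N+V^{(12)}_{m000}(N_2-\cN_2)/N$, and the Euler--Lagrange identity $T^{(1)}_{m0}+c_1V^{(1)}_{m000}+c_2V^{(12)}_{m000}=0$ only kills the leading part, leaving a remainder of the form $-N^{-1}\bigl(V^{(1)}_{m000}(\cN_1+1)+V^{(12)}_{m000}\cN_2\bigr)$ times $\sqrt{N_1-\cN_1}$; this is the object estimated in \eqref{eq:estimate_M_1}. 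It is not covered by your error category (a), and you need the Hilbert--Schmidt bound on $K^{(\alpha)}$ plus Cauchy--Schwarz to control it by $\sqrt{M/N}\,\langle\cN\rangle_\Phi$.

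Second, and more seriously, your claim that the quartic remainder $M_4$ is ``directly controlled by $N^{-1}\cN^2$'' fails under Assumption~($A_2^{\mathrm{MF}}$), which allows unbounded potentials (e.g.\ Coulomb) satisfying only $(V^{(\alpha)})^2\leqslant C(\1-\Delta)$. The operator $N^{-1}\sum_{m,n,p,q\geqslant 1}V^{(\alpha)}_{mnpq}a_m^*a_n^*a_pa_q$ is, in first quantization on the $L$-particle sector, $N^{-1}\sum_{i<j}V^{(\alpha)}(x_i-x_j)$, and this is \emph{not} bounded by $CN^{-1}L^2$. The paper's estimate \eqref{eq:estimate_M_4} instead uses $|V^{(\alpha)}(x-y)|\leqslant C(\1-\Delta_x-\Delta_y)$ to get $M_4\leqslant CN^{-1}\bigl(\cN_1\cN_2+\dGamma^{(1)}(T^{(1)})\cN_2+\cN_1\dGamma^{(2)}(T^{(2)})\bigr)$, and then invokes Lemma~\ref{lemma:technical} to trade $\dGamma^{(j)}(T^{(j)})$ for $\bH+\cN+C$. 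The same kinetic-energy trick is also what drives the cubic estimate \eqref{eq:estimate_M_3}, not the Hilbert--Schmidt property of $K^{(\alpha)}$ as you suggest: there one writes the cubic term as $Q\otimes Q\,V^{(12)}\,Q\otimes P$, applies Cauchy--Schwarz with parameter $\eps=\sqrt{M/N}$, and bounds the resulting $Q\otimes Q\,|V^{(12)}|\,Q\otimes Q$ piece exactly as for $M_4$.
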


We refer to \cite[Proposition 5.1]{LewNamSerSol-15} for the analogue in the one-component case. 

Let us remark that the condition $\Phi\in\cF_+^{\leqslant M}\cap\cD[\bH]$ implies, by Theorem \ref{thm:upper_lower_bound}, that $\Phi$ belongs to $\cF_+^{\leqslant M}\cap\cD[\d\Gamma^{(1)}(h^{(1)})+\d\Gamma^{(2)}(h^{(2)})]$. Using Assumption ($A_2^{\mathrm{MF}}$), one easily sees that this implies $U_N^*\Phi\in\cD[H_N^{\operatorname{MF}}]$, and hence \eqref{eq:truncated_bound} is well-defined for $\Phi\in\cF_+^{\leqslant M}\cap\cD[\bH]$.

Now we turn to the proof of  Theorem \ref{thm:truncated_bound}. We first compute exactly $U_NH_N^{\operatorname{MF}}U^*_N$, which will be done in Lemma \ref{lemma:splitting}. Then, we isolate from $U_NH_N^{\operatorname{MF}}U^*_N$ the leading contribution $Ne_{\operatorname{H}}$ and the second order correction $\bH$; this will be done in Lemma \ref{lemma:isolation}. Finally, we will show that all the remaining non-relevant terms can be estimated by the right-hand side of \eqref{eq:truncated_bound}.

\begin{lemma} \label{lemma:splitting}
	Let us define the following five operators on the domain $\cF^{\leqslant M}\cap\cD[\bH]$.
	\begin{equation} \label{eq:def_M_0}
	\begin{split}
	M_0:=&T^{(1)}_{00}(N_1-\cN_1)+T^{(2)}_{00}(N_2-\cN_2)+\frac{1}{2N}V^{(1)}_{0000}(N_1-\cN_1)(N_1-\cN_1-1)\\
	&+\frac{1}{2N}V^{(2)}_{0000}(N_2-\cN_2)(N_2-\cN_2-1)+\frac{1}{N}V^{(12)}_{0000}(N_1-\cN_1)(N_2-\cN_{2})\\
	&+\mu_1\cN_1+\mu_2\cN_2+\frac{c_1}{2}V^{(1)}_{0000}+\frac{c_2}{2}V^{(2)}_{0000}.
	\end{split}
	\end{equation}
	\begin{equation}\label{eq:def_M_1}
	\begin{split}
	M_1:=\sum_{m\geqslant 1}\Big[&  a^*_m\sqrt{N_1-\cN_1}\Big(  T^{(1)}_{m0} + V^{(1)}_{m000} \frac{N_1-\cN_1-1}{N}+V^{(12)}_{m000}\frac{N_2-\cN_2}{N} \Big)\\
	&+b^*_m\sqrt{N_2-\cN_2}\Big(T^{(2)}_{m0}+V^{(2)}_{m000}\frac{N_2-\cN_2-1}{N}+V^{(12)}_{m000}\frac{N_1-\cN_1}{N}\Big)\\
	& +\Big(  T^{(1)}_{0m} + V^{(1)}_{00m0} \frac{N_1-\cN_1-1}{N}+V^{(12)}_{00m0}\frac{N_2-\cN_2}{N} \Big)\sqrt{N_1-\cN_1}a_m \\
	&+\Big(T^{(2)}_{0m}+V^{(2)}_{00m0}\frac{N_2-\cN_2-1}{N}+V^{(12)}_{00m0}\frac{N_1-\cN_1}{N}\Big)\sqrt{N_2-\cN_2}b_m\Big].
	\end{split}
	\end{equation}
	\begin{equation}\label{eq:def_M_2}
	\begin{split}
	M_2:=\sum_{m,n\geqslant 1}\Big[& T^{(1)}_{mn}a^*_ma_n+T^{(2)}_{mn}b^*_mb_n-\mu_1\cN_1-\mu_2\cN_2-\frac{c_1}{2}V^{(1)}_{0000}-\frac{c_2}{2}V^{(2)}_{0000}\\
	&+\frac{1}{2N}V^{(1)}_{mn00}a^*_ma^*_n\sqrt{N_1-\cN_1}\sqrt{N_1-\cN_1-1}\\
	&+\frac{1}{2N}V^{(1)}_{00mn}\sqrt{N_1-\cN_1-1}\sqrt{N_1-\cN_1}a_ma_n\\
	&+\frac{1}{N}V^{(1)}_{m0n0}a^*_ma_n(N_1-\cN_1)+\frac{1}{N}V^{(1)}_{m00n}a^*_ma_n(N_1-\cN_1)\\
	&+\frac{1}{2N}V^{(2)}_{00mn}b^*_mb^*_n\sqrt{N_2-\cN_2}\sqrt{N_2-\cN_2-1}\\
	&+\frac{1}{2N}V^{(2)}_{00mn}\sqrt{N_2-\cN_2-1}\sqrt{N_2-\cN_2}b_mb_n\\
	&+\frac{1}{N}V^{(2)}_{m0n0}b^*_mb_n(N_2-\cN_2)+\frac{1}{N}V^{(2)}_{m00n}b^*_mb_n(N_2-\cN_2)\\
	&+\frac{1}{N}V^{(12)}_{mn00}a^*_mb^*_n\sqrt{N_1-\cN_1}\sqrt{N_2-\cN_2}\\
	&+\frac{1}{N}V^{(12)}_{00mn}\sqrt{N_1-\cN_1}\sqrt{N_2-\cN_2}a_mb_n\\
	&+\frac{1}{N}V^{(12)}_{m0n0}a^*_ma_n(N_2-\cN_2)+\frac{1}{N}V^{(12)}_{m00n}b^*_mb_n(N_1-\cN_1)\\
	&+\frac{1}{N}V^{(12)}_{m00n}a^*_mb_n\sqrt{N_1-\cN_1}\sqrt{N_2-\cN_2}\\
	&+\frac{1}{N}V^{(12)}_{m00n}\sqrt{N_1-\cN_1}\sqrt{N_2-\cN_2}a_mb^*_n
	\Big].
	\end{split}
	\end{equation}
	\begin{equation}\label{eq:def_M_3}
	\begin{split}
	M_3:=\frac{1}{N}\sum_{m,n,q\geqslant 1}\Big[&V^{(1)}_{mnp0}a^*_ma^*_na_p\sqrt{N_1-\cN_1}+V^{(2)}_{mnp0}b^*_mb^*_nb_p\sqrt{N_2-\cN_2}\\
	& +V^{(12)}_{mnp0}a^*_ma_pb^*_n\sqrt{N_2-\cN_2} +V^{(12)}_{m0np}a^*_mb^*_nb_p\sqrt{N_1-\cN_1}\\
	&+V^{(1)}_{p0mn}\sqrt{N_1-\cN_1}a^*_pa_ma_n + V^{(2)}_{p0mn}\sqrt{N_2-\cN_2} b^*_pb_mb_n\\
	& +V^{(12)}_{p0mn}\sqrt{N_2-\cN_2} a^*_pa_mb_n +V^{(12)}_{npm0}\sqrt{N_1-\cN_1} a_mb^*_pb_n\Big].
	\end{split}
	\end{equation}
	
	\begin{equation}\label{eq:def_M_4}
	\begin{split}
	M_4:=\sum_{m,n,p,q\geqslant 1}\Big[&\frac{1}{2N}V^{(1)}_{mnpq}a^*_ma^*_na_pa_q+\frac{1}{2N}V^{(2)}_{mnpq}b^*_mb^*_nb_pb_q\\
	&+\frac{1}{N}V^{(12)}_{mnpq}a^*_ma_pb^*_nb_q\Big].
	\end{split}
	\end{equation}
	Then, 
	\begin{equation}
	U_NH_N^{\operatorname{MF}}U^*_N=\sum_{j=0}^4 M_j.
	\end{equation}
\end{lemma}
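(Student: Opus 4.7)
The proof is a purely algebraic identity. The plan is to conjugate $H_N^{\operatorname{MF}}$ as written in \eqref{eq:many_body_Hamiltonian_second_quantized} term by term via $U_N$, and to sort the resulting terms according to the number of excited-mode operators $a_m^{(*)}, b_m^{(*)}$ (with $m \geqslant 1$) that each one carries. Concretely, $M_j$ collects the contributions with exactly $j$ such factors; the only cosmetic adjustment is the compensating quantity $\mu_1\cN_1 + \mu_2\cN_2 + \tfrac{c_1}{2}V^{(1)}_{0000} + \tfrac{c_2}{2}V^{(2)}_{0000}$, which is added to $M_0$ and subtracted from $M_2$ so as to present $M_2$ in the quadratic form later to be compared with the Bogoliubov Hamiltonian $\bH$.

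The one-body piece $\sum_{m,n\geqslant 0}(T^{(1)}_{mn}a^*_m a_n + T^{(2)}_{mn}b^*_m b_n)$ would be split into the four cases distinguished by whether $m$ or $n$ equals $0$; direct application of the appropriate line of \eqref{eq:relations} produces $T^{(\alpha)}_{00}(N_\alpha - \cN_\alpha)$ in $M_0$, the $\sqrt{N_\alpha - \cN_\alpha}$-linear terms in $M_1$, and the bilinears $T^{(\alpha)}_{mn}a_m^*a_n$ (and its $b$-analogue) in $M_2$. Each of the three interaction sums is then treated in the same spirit, by partitioning over the $2^4 = 16$ sub-cases indexed by which of $m,n,p,q$ vanish. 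Using the symmetries $V^{(\alpha)}_{mnpq} = V^{(\alpha)}_{nmqp}$ inherited from \eqref{eq:def-Vmnpq} to merge symmetry-equivalent sub-cases, and the CCR \eqref{CCR} to normal-order the zero-mode operators --- via identities such as $a_0^* a_0^* a_0 a_0 = (a_0^*a_0)(a_0^*a_0 - 1)$, $a_m^* a_0^* a_0 a_0 = a_m^* a_0 (a_0^*a_0 - 1)$, and $a_0^* a_0 b_0^* b_0 = (a_0^*a_0)(b_0^*b_0)$ --- one reduces every conjugated contribution to an expression involving only $a_m^{(*)}, b_m^{(*)}$ with $m \geqslant 1$ and polynomials in $\sqrt{N_\alpha - \cN_\alpha}$, via the rules \eqref{eq:relations}. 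The resulting terms sort uniquely into $M_0, \dots, M_4$ according to how many of $m,n,p,q$ were non-zero.

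The final bookkeeping step is the cosmetic rearrangement described above: one adds $\mu_1\cN_1 + \mu_2\cN_2 + \tfrac{c_1}{2}V^{(1)}_{0000} + \tfrac{c_2}{2}V^{(2)}_{0000}$ to the contribution with zero excited-mode factors and subtracts the same quantity from the contribution with two such factors, obtaining the $M_0$ and $M_2$ displayed in \eqref{eq:def_M_0}--\eqref{eq:def_M_2}. This manoeuvre is a tautology at this stage, but yields the convenient form for the later analysis, where the Hartree equations \eqref{eq:mf-eq} will be invoked to control $M_1$ and the quadratic form of $M_2$ will be matched against $\bH$.

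The main obstacle is not conceptual but combinatorial: keeping the enumeration of the $4 + 3\times 16 = 52$ atomic cases consistent, tracking the $\tfrac{1}{2N}$ vs.\ $\tfrac{1}{N}$ prefactors as symmetry-equivalent sub-cases are merged, and respecting operator ordering when composing \eqref{eq:relations} for the quartic terms --- the factors $\sqrt{N_\alpha - \cN_\alpha}$ do not commute with $a_m^{(*)}, b_m^{(*)}$ in general, so the position of the square roots in \eqref{eq:def_M_1}--\eqref{eq:def_M_4} is dictated by the order in which the zero-mode operators appear after normal-ordering. Once this bookkeeping is performed, the identity $U_N H_N^{\operatorname{MF}} U_N^* = \sum_{j=0}^4 M_j$ follows by direct inspection, and no analytic input beyond the CCR \eqref{CCR} and the conjugation formulas \eqref{eq:relations} is needed.
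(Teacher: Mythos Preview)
Your proposal is correct and follows essentially the same approach as the paper: the paper's proof consists of a single sentence stating that the identity follows from a direct computation using the conjugation relations \eqref{eq:relations}, together with the remark that the term $\mu_1\cN_1+\mu_2\cN_2+\tfrac{c_1}{2}V^{(1)}_{0000}+\tfrac{c_2}{2}V^{(2)}_{0000}$ is added to $M_0$ and subtracted from $M_2$. Your write-up is in fact more detailed than the paper's own proof, spelling out the case-splitting and the normal-ordering manipulations that the paper leaves implicit.
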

\begin{proof}
	The proof is obtained by means of a direct computation that systematically uses the relations \eqref{eq:relations}. Notice that the term
	\begin{equation*}
	\mu_1\cN_1+\mu_2\cN_2+\frac{c_1}{2}V^{(1)}_{0000}+\frac{c_2}{2}V^{(2)}_{0000}
	\end{equation*}
	has been added in the last line of $M_0$ and subtracted in the first of $M_2$.
\end{proof}

We now show that the relevant terms can be isolated from $M_0$ and $M_2$ and that there is an exact cancellation in $M_1$, due to the fact that $(u_0,v_0)$ is the minimiser of the Hartree functional.

\begin{lemma} \label{lemma:isolation}
	For $M_0$, $M_1$, $M_2$ defined in \eqref{eq:def_M_0}-\eqref{eq:def_M_4}, one has the following re-arrangements.
	\begin{itemize}
		\item[(i)] (Isolation of the leading term from $M_0$) 
		\begin{equation} \label{eq:isolation_0}
		\begin{split}
		M_0=& Ne_H+\frac{1}{2N}V^{(1)}_{0000}\,\cN_1(\cN_1+1)+ \frac{1}{2N}V^{(2)}_{0000} \,\cN_2(\cN_2+1)\\
		&+\frac{1}{N}V^{(12)}_{0000}\,\cN_1\cN_2.
		\end{split}
		\end{equation}
		\item[(ii)] (Cancellation of the linear contribution to $M_1$)
		\begin{equation} \label{eq:cancellation}
		\begin{split}
		M_1=\frac{1}{N}\sum_{m\geqslant 1}\Big[&-V^{(1)}_{m000}a^*_m\sqrt{N_1-\cN_1}\,(\cN_1+1)-V^{(2)}_{m000}b^*_m\sqrt{N_2-\cN_2}\,(\cN_2+1) \\
		&-V^{12}_{m000}a^*_m\sqrt{N_1-\cN_1}\,\cN_2-V^{(12)}_{0m00}b^*_m\sqrt{N_2-\cN_2}\cN_1\\
		&-V^{(1)}_{00m0}(\cN_1+1)\sqrt{N_1-\cN_1}a_m-V^{(2)}_{00m0}\,(\cN_2+1)\sqrt{N_2-\cN_2} b_m\\
		&-V^{12}_{00m0}\,\sqrt{N_1-\cN_1}\,\cN_2\,a_m-V^{(12)}_{000m}\,\cN_1\sqrt{N_2-\cN_2}\,b_m\Big].
		\end{split}
		\end{equation}
		\item[(iii)] (Isolation of the Bogoliubov Hamiltonian from $M_2$)
		\begin{equation} \label{eq:isolation_2}
		\begin{split}
		M_2=\bH+\sum_{m,n\geqslant 1}\Big(\;&\frac{1}{2}V^{(1)}_{mn00}a^*_ma^*_n\frac{\sqrt{N_1-\cN_1}\sqrt{N_1-\cN_1-1}-N_1}{N}\\
		&\frac{1}{2}V^{(1)}_{00mn}\frac{\sqrt{N_1-\cN_1}\sqrt{N_1-\cN_1-1}-N_1}{N}a_ma_n\\
		&+\frac{1}{N}\big(V^{(1)}_{mn0}+V^{(1)}_{m00n}\big)a^*_ma_n(1-\cN_1)\\
		&+\frac{1}{2}V^{(2)}_{mn00}b^*_mb^*_n\frac{\sqrt{N_2-\cN_2}\sqrt{N_2-\cN_2-1}-N_2}{N}\\
		&+\frac{1}{2}V^{(2)}_{00mn}\frac{\sqrt{N_2-\cN_2}\sqrt{N_2-\cN_2-1}-N_2}{N}b_mb_n\\
		&+\frac{1}{N}\big(V^{(2)}_{m0n0}+V^{(2)}_{m00n}\big)b^*_mb_n(1-\cN_2)\\
		&+V^{(12)}_{mn00}a^*_mb^*_n\frac{\sqrt{N_1-\cN_1}\sqrt{N_2-\cN_2}-\sqrt{N_1N_2}}{N}\\
		&+V^{(12)}_{00mn}\frac{\sqrt{N_1-\cN_1}\sqrt{N_2-\cN_2}-\sqrt{N_1N_2}}{N}a_mb_n\\
		&-\frac{1}{N}V^{(12)}_{m0n0}a^*_ma_n\cN_2-\frac{1}{N}V^{(12)}_{0m0n}b^*_mb_n\cN_1\\
		&+V^{(12)}_{m00n}a^*_mb_n\frac{\sqrt{N_1-\cN_1}\sqrt{N_2-\cN_2}-\sqrt{N_1N_2}}{N}\\
		&+V^{(12)}_{0nm0}\frac{\sqrt{N_1-\cN_1}\sqrt{N_2-\cN_2}-\sqrt{N_1N_2}}{N}a_mb^*_n\;\Big).
		\end{split}
		\end{equation}
	\end{itemize}
\end{lemma}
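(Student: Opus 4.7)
The proof is a direct algebraic verification that the defining expressions \eqref{eq:def_M_0}--\eqref{eq:def_M_2} rearrange, via only three structural inputs, into the forms \eqref{eq:isolation_0}--\eqref{eq:isolation_2}. Those inputs are (a) the definition of the Hartree energy $e_{\operatorname{H}}$, (b) the definitions of the chemical potentials $\mu^{(1)}$, $\mu^{(2)}$ in \eqref{eq:chemical_potentials}, and (c) the Euler--Lagrange equations \eqref{eq:mf-eq}. The three claims are independent and I would treat them one at a time, freely using $N_j/N=c_j$.

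For part (i), I would expand $(N_j-\cN_j)(N_j-\cN_j-1) = N_j(N_j-1) - 2N_j\cN_j + \cN_j(\cN_j+1)$ and $(N_1-\cN_1)(N_2-\cN_2) = N_1 N_2 - N_1\cN_2 - N_2\cN_1 + \cN_1\cN_2$ inside \eqref{eq:def_M_0}. Using the definition of $e_{\operatorname{H}}$, the purely $N$-dependent (constant in $\cN_j$) contributions sum to $N e_{\operatorname{H}} - \tfrac{c_1}{2}V^{(1)}_{0000} - \tfrac{c_2}{2}V^{(2)}_{0000}$, which is cancelled by the two explicit constants in the last line of \eqref{eq:def_M_0}. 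The coefficients of $\cN_1$ and $\cN_2$ collect into $[\mu^{(1)} - T^{(1)}_{00} - c_1 V^{(1)}_{0000} - c_2 V^{(12)}_{0000}]\cN_1$ plus its $\cN_2$-analogue, and both vanish thanks to \eqref{eq:chemical_potentials}. What survives are the $O(1/N)$ terms quadratic in the excitation number operators, which are precisely the right-hand side of \eqref{eq:isolation_0}.

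For part (ii), for each $m\geqslant 1$ the operator factor multiplying $a^*_m\sqrt{N_1-\cN_1}$ in \eqref{eq:def_M_1} reads $T^{(1)}_{m0} + V^{(1)}_{m000}\tfrac{N_1-\cN_1-1}{N} + V^{(12)}_{m000}\tfrac{N_2-\cN_2}{N}$. Splitting $\tfrac{N_1-\cN_1-1}{N}=c_1-\tfrac{\cN_1+1}{N}$ and $\tfrac{N_2-\cN_2}{N}=c_2-\tfrac{\cN_2}{N}$, the leading piece equals $\langle u_m,h^{(1)} u_0\rangle + \mu^{(1)}\langle u_m,u_0\rangle$, which is zero by \eqref{eq:mf-eq} together with $u_m\perp u_0$ for $m\geqslant 1$. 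Only the $1/N$-remainders survive, producing the first line of \eqref{eq:cancellation}; the same manipulation on the three other families of $M_1$-terms (those multiplied by $b^*_m$, $a_m$, $b_m$), using $h^{(2)}v_0=0$, yields the remaining lines. For part (iii), I would pair each term of $M_2$ with its Bogoliubov counterpart in \eqref{eq:Bogoliubov_Hamiltonian} by writing, for the pairing terms, $\tfrac{1}{N}\sqrt{N_j-\cN_j}\sqrt{N_j-\cN_j-1} = c_j + \tfrac{1}{N}\bigl(\sqrt{N_j-\cN_j}\sqrt{N_j-\cN_j-1}-N_j\bigr)$ and the analogous splitting of $\tfrac{1}{N}\sqrt{N_1-\cN_1}\sqrt{N_2-\cN_2}$, and, for the number-preserving terms, $\tfrac{N_j-\cN_j}{N}=c_j-\tfrac{\cN_j}{N}$. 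The leading (in $c_j$) pieces, combined with the explicit $-\mu^{(1)}\cN_1 - \mu^{(2)}\cN_2 - \tfrac{c_1}{2}V^{(1)}_{0000} - \tfrac{c_2}{2}V^{(2)}_{0000}$ already present in \eqref{eq:def_M_2}, and with the identity $\langle u_m,h^{(1)}u_n\rangle = T^{(1)}_{mn}+c_1 V^{(1)}_{m0n0}+c_2 V^{(12)}_{m0n0}-\mu^{(1)}\delta_{mn}$ (and its $h^{(2)}$-counterpart), reconstruct exactly $\bH$ as in \eqref{eq:Bogoliubov_Hamiltonian}; all leftover contributions are the explicit residuals displayed in \eqref{eq:isolation_2}.

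The main obstacle is bookkeeping rather than any analytic subtlety: there are many $V^{(\alpha)}_{ijk\ell}$ coefficients with distinct index patterns (direct versus exchange, intra- versus inter-species) that must be tracked to their correct Bogoliubov origin, and one must respect operator ordering, since $a^*_m$ and $\sqrt{N_1-\cN_1}$ do not commute---by \eqref{CCR} they differ by an $\cN_1$-shift which generates further $1/N$ corrections of the same order as the ones being extracted. Once this accounting is performed, the only structural inputs used are the Euler--Lagrange identities $h^{(1)}u_0=0$ and $h^{(2)}v_0=0$, together with the definitions of $\mu^{(j)}$, $e_{\operatorname{H}}$ and $\bH$ already in place.
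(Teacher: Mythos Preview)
Your proposal is correct and matches the paper's own proof in both structure and substance: the paper likewise derives (i) by direct expansion and the formula for $e_{\operatorname{H}}$, obtains (ii) from the Euler--Lagrange identities $T^{(1)}_{m0}+c_1V^{(1)}_{m000}+c_2V^{(12)}_{m000}=0$ (and its species-$2$ analogue), and treats (iii) as a pure regrouping after splitting off the $N$-independent pieces. Your write-up is in fact more explicit than the paper's about the role of the chemical potentials in the $\cN_j$-linear cancellations in (i), which is helpful.
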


\begin{proof}
	We recall that the minimum of the Hartree functional is
	\begin{equation}
	e_H=c_1T^{(1)}_{00}+c_2T^{(2)}_{00}+\frac{c_1^2}{2}V^{(1)}_{0000}+\frac{c_2^2}{2}V^{(2)}_{0000}+c_1c_2V^{(12)}_{0000}.
	\end{equation}
	A direct computation then yields \eqref{eq:isolation_0}.
	
	To prove \eqref{eq:cancellation}, we note that, since $(u_0,v_0)$ minimizes the Hartree functional, we have the identities
	\begin{equation}
	\begin{split}
	T^{(1)}_{m0}+c_1V^{(1)}_{m000}+c_2V^{(12)}_{m000}&=0\\
	T^{(2)}_{m0}+V^{(2)}_{m000}+c_1V^{(12)}_{m000}&=0.
	\end{split}
	\end{equation}
	Since $c_i=N_i/N$, last two identities yield an exact cancellation in $M_1$: for example, the contribution coming from the first line of the r.h.s. of \eqref{eq:def_M_1} reduces to
	\begin{equation*}
	\frac{1}{N}\sum_{m\geqslant 1}\Big[-V^{(1)}_{m000}a^*_m\sqrt{N_1-\cN_1}\,(\cN_1+1)-V^{(2)}_{m000}b^*_m\sqrt{N_2-\cN_2}\,(\cN_2+1)\Big].
	\end{equation*} 
	This allows us to bring $M_1$ to the form \eqref{eq:cancellation}.
	
	Finally, \eqref{eq:isolation_2} is obtained by a mere regrouping of terms. For example, the contribution from the second line of \eqref{eq:def_M_2} can be rewritten as
	\begin{equation*}
	\frac{c_1}{2}\sum_{m,n\geqslant1}V^{(1)}_{mn00}a^*_ma^*_n+\frac{1}{2}\sum_{m,n\geqslant1}V^{(1)}_{mn00}a^*_ma^*_n\frac{\sqrt{N_1-\cN_1}\sqrt{N_1-\cN_1-1}-N_1}{N},
	\end{equation*}
	having isolated the $N$-independent contribution. The same is done for all the other summands of \eqref{eq:def_M_2}. Recalling the definition \eqref{eq:Bogoliubov_Hamiltonian} of $\bH$, the outcome is \eqref{eq:isolation_2}.
\end{proof}

The final step in order to prove Theorem \ref{thm:truncated_bound} is the following Lemma, that provides the appropriate estimate for all the remainders $M_0-Ne_{\operatorname{H}}$, $M_1$, $M_2-\bH$, $M_3$, and $M_4$.

\begin{lemma}\label{lemma:estimates}
	There exists a constant $C>0$ such that, for any $\Phi\in \cF_+^{\leqslant M}\cap\cD[\bH]$
	\begin{align} \label{eq:estimate_M_0}
	\big|\langle M_0\rangle_\Phi-Ne_{\operatorname{H}}\big|&\leqslant C\frac{M}{N}\langle \cN\rangle_\Phi\\
	\label{eq:estimate_M_1}
	|\langle M_1\rangle_\Phi|&\leqslant C\sqrt{\frac{M}{N}}\langle \cN\rangle_\Phi\\
	\big|\langle M_2\rangle_\Phi-\langle\bH\rangle_\Phi\big|& \leqslant C\frac{M}{N}\langle\cN\rangle_{\Phi}  \label{eq:estimate_M_2}\\
	|\langle M_3\rangle_\Phi|&\leqslant C\sqrt{\frac{M}{N}}\Big(\langle \bH\rangle_\Phi+\langle\cN\rangle_\Phi+C\Big)  \label{eq:estimate_M_3}\\
	|\langle M_4\rangle_\Phi|&\leqslant C\frac{M}{N}\Big(\langle \bH\rangle_\Phi+\langle\cN\rangle_\Phi+C\Big)  \label{eq:estimate_M_4}
	\end{align}
\end{lemma}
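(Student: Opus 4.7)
The plan is to bound the five quantities in \eqref{eq:estimate_M_0}--\eqref{eq:estimate_M_4} separately, relying on three technical ingredients. The first is the family of operator inequalities $\cN_j^k \leqslant M^{k-1}\cN_j$ and $\cN_1 \cN_2 \leqslant M(\cN_1+\cN_2)$ valid on the truncated space $\cF_+^{\leqslant M}$. The second is the Taylor expansion $\sqrt{(N_j-\cN_j)(N_j-\cN_j-1)} - N_j = -(\cN_j+\tfrac{1}{2}) + O(\cN_j^2/N)$ together with the analogous identity for $\sqrt{(N_1-\cN_1)(N_2-\cN_2)}-\sqrt{N_1 N_2}$, which yield operator-norm remainders of size $C(\cN+1)/N$ on $\cF_+^{\leqslant M}$. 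The third is the Hilbert-Schmidt character of the kernels $K^{(\alpha)}$ (Assumption $(A_2^{\mathrm{MF}})$, see \eqref{eq:K_HS}), ensuring that the index tensors $V^{(\alpha)}_{mn00}$, $V^{(\alpha)}_{mnp0}$, $V^{(\alpha)}_{mnpq}$ have finite $\ell^2$ or Hilbert-Schmidt norms. Residual kinetic-energy contributions will be absorbed using the lower bound $\bH+C \geqslant C^{-1}(\d\Gamma^{(1)}(h^{(1)}) + \d\Gamma^{(2)}(h^{(2)}) + \cN)$ of Theorem \ref{thm:upper_lower_bound}.

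I would first dispatch the operator-norm-type estimates \eqref{eq:estimate_M_0} and \eqref{eq:estimate_M_2}. Formula \eqref{eq:isolation_0} shows that $M_0 - Ne_{\mathrm{H}}$ is $\tfrac{1}{N}$ times a sum of expressions quadratic in $\cN_1,\cN_2$, and the first ingredient above yields \eqref{eq:estimate_M_0} directly. For \eqref{eq:estimate_M_2}, the display \eqref{eq:isolation_2} expresses $M_2 - \bH$ as a sum of bilinears in $a,a^*,b,b^*$ multiplied by the normalized remainders $(\sqrt{\cdots}-N_j)/N$ or $(\sqrt{\cdots}-\sqrt{N_1 N_2})/N$, each bounded in operator norm by $C(\cN+1)/N$ thanks to the Taylor expansion. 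Combined with the standard pair-creation estimate $\|\sum_{mn} K_{mn} a^*_m a^*_n \Phi\| \leqslant C\|K\|_{\mathrm{HS}} \|(\cN_j+1)\Phi\|$, this yields \eqref{eq:estimate_M_2}.

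The bound \eqref{eq:estimate_M_1} is the first essentially non-trivial one. The Euler-Lagrange equations \eqref{eq:mf-eq} force the cancellation of the $O(\sqrt{N})$ contributions in $M_1$, leaving the cubic form \eqref{eq:cancellation}; each term there is of the type $\tfrac{1}{N} a^*(G)\sqrt{N_j-\cN_j}(\cN_j+1)$ with $G = Q^{(j)}[V^{(\alpha)}*|u_0|^2] u_0$ (or the $v_0$-analogue) having finite $L^2$-norm. Cauchy-Schwarz gives $|\langle\Phi, a^*(G) X \Phi\rangle|\leqslant \|a(G)\Phi\|\cdot\|X\Phi\|$; the first factor is estimated by $\|G\|\,\langle\cN_j\rangle_\Phi^{1/2}$, while the second is controlled via $\sqrt{N_j-\cN_j}\leqslant\sqrt{N}$ together with $(\cN_j+1)^2 \leqslant (M+1)(\cN_j+1)$ on $\cF_+^{\leqslant M}$. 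Combining and applying AM-GM then yields \eqref{eq:estimate_M_1}, up to absorbable constants that can be reincorporated via Theorem \ref{thm:upper_lower_bound}.

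The main obstacle is the cubic estimate \eqref{eq:estimate_M_3}. The sums $\frac{1}{N}\sum_{mnp} V^{(\alpha)}_{mnp0} a^*_m a^*_n a_p \sqrt{N_j-\cN_j}$ and their species-mixed analogues through $V^{(12)}_{mnp0}$ must be split by Cauchy-Schwarz into creation and annihilation pieces, each bounded using an appropriate slice of the tensor viewed as the matrix element of a Hilbert-Schmidt operator with kernel involving $V^{(\alpha)}*|u_0|^2$. The crucial device is the weighted estimate $\|a(G)\Phi\|^2 \leqslant \|(h^{(j)})^{-1/2}G\|^2\langle \d\Gamma^{(j)}(h^{(j)})\rangle_\Phi$, used to trade the surplus creation/annihilation operator for a factor of $\d\Gamma^{(j)}(h^{(j)})$, which is then absorbed into $\langle\bH\rangle_\Phi + C$ by Theorem \ref{thm:upper_lower_bound}. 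The delicate bookkeeping is to balance the $\sqrt{N_j}$ from the square-root against the $1/N$ prefactor so as to extract precisely the factor $\sqrt{M/N}$ in the final bound, while keeping both the $\cN$ and the $\d\Gamma(h)$ contributions properly in the estimate. The quartic term \eqref{eq:estimate_M_4} follows the same pattern without the square-root complication; the extra power of $\cN$ turns the $\sqrt{M/N}$ into $M/N$.
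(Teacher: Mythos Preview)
Your treatment of $M_0$, $M_1$, and $M_2$ is essentially correct and matches the paper's argument. The genuine problem is with $M_3$ and $M_4$.

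Your third ingredient asserts that the three- and four-index tensors $V^{(\alpha)}_{mnp0}$ and $V^{(\alpha)}_{mnpq}$ have finite $\ell^2$ or Hilbert--Schmidt norms. This is \emph{false} under Assumption $(A_2^{\mathrm{MF}})$, which allows Coulomb-type interactions. Indeed, $\sum_{mnpq}|V^{(\alpha)}_{mnpq}|^2$ is the squared Hilbert--Schmidt norm of the two-body multiplication operator $V^{(\alpha)}(x_1-x_2)$, i.e.\ essentially $\int|V^{(\alpha)}|^2$, which diverges for $V^{(\alpha)}(x)=|x|^{-1}$. Similarly, $\sum_{mnp}|V^{(\alpha)}_{mnp0}|^2=\|Q^{\otimes 2}V^{(\alpha)}(x_1-x_2)(\,\cdot\,\otimes u_0)\|_{\mathrm{HS}}^2$ requires $\int|V^{(\alpha)}(x-y)|^2|u_0(y)|^2\,dx\,dy<\infty$, which again fails for Coulomb. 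Only the two-index tensors $V^{(\alpha)}_{mn00}$, $V^{(\alpha)}_{m00n}$ are controlled by $\|K^{(\alpha)}\|_{\mathrm{HS}}$ as in \eqref{eq:K_HS}. Consequently, the proposed Cauchy--Schwarz split of the cubic term ``using an appropriate slice of the tensor viewed as a Hilbert--Schmidt operator'' cannot close. Your fallback device, the weighted bound $\|a(G)\Phi\|^2\leqslant\|(h^{(j)})^{-1/2}G\|^2\langle\d\Gamma^{(j)}(h^{(j)})\rangle_\Phi$, is a correct inequality in the abstract, but you never identify what $G$ should be for the cubic term, nor verify that $(h^{(j)})^{-1/2}G$ is finite; making this precise would force you back to the relative-boundedness assumption and essentially reproduce the paper's argument.

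The paper avoids Hilbert--Schmidt summability altogether for $M_3$ and $M_4$. It rewrites the cubic piece in physical space as $\frac{1}{N}\langle\Phi,U_N\sum_{j,k}Q\otimes Q\,V^{(\alpha)}(x_j-y_k)\,Q\otimes P\,U_N^*\Phi\rangle$, applies an operator Cauchy--Schwarz with a free parameter $\eps$ to obtain $Q\otimes Q\,|V^{(\alpha)}|\,Q\otimes Q$ and $Q\otimes P\,|V^{(\alpha)}|\,Q\otimes P$ pieces, and then invokes the relative form bound $|V^{(\alpha)}|\leqslant C(1-\Delta)$ from $(A_2^{\mathrm{MF}})$ directly. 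This produces factors of $\d\Gamma^{(j)}(T^{(j)})$, which are converted to $\bH+\cN+C$ via the auxiliary Lemma (your Theorem~\ref{thm:upper_lower_bound}); the choice $\eps=\sqrt{M/N}$ balances the two pieces. The quartic $M_4$ is handled the same way, writing it sector-by-sector as $\frac{1}{N}\sum_{j,k}jk\langle\Phi_{j,k},V^{(\alpha)}(x_1-y_1)\Phi_{j,k}\rangle$ and bounding $V^{(\alpha)}$ by $1-\Delta_{x_1}-\Delta_{y_1}$. You need this physical-space step (or an equivalent use of the relative bound) to make the argument go through.
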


Let us postpose the proof of Lemma \ref{lemma:estimates} and now conclude.
\begin{proof}[Proof of Theorem \ref{thm:truncated_bound}]
	Let us fix $\Phi\in\cF_+^{\leqslant M}\cap \cD[\bH]$. By Lemma \ref{lemma:splitting} we get 
	\begin{equation*}
	\langle U_NH_NU^*_N\rangle_\Phi= N e_{\operatorname{H}}+\langle\bH\rangle_\Phi+\langle M_0-Ne_{\operatorname{H}}\rangle_\Phi+\langle M_1\rangle_\Phi+\langle M_2-\bH\rangle_\Phi+\langle M_3\rangle_\Phi+\langle M_4\rangle_\Phi,
	\end{equation*}
	and hence, applying Lemma \ref{lemma:estimates}, we find 
	\begin{equation}
	\begin{split}
	\big|\langle U_N H_N^{\operatorname{MF}} U_N^*\rangle_\Phi-Ne_{\operatorname{H}}-\langle\bH\rangle_\Phi\big| \leqslant C\sqrt{\frac{M}{N}}\big(2\langle \bH\rangle_\Phi+5\langle \cN\rangle_\Phi+2C\big),
	\end{split}
	\end{equation}
	having used $M/N\leqslant \sqrt{M/N}$. Using positivity of $h^{(1)}$ and $h^{(2)}$ and Theorem \ref{thm:upper_lower_bound}, one finds
	\begin{equation*}
	\cN\leqslant \cN+\d\Gamma^{(1)}(h^{(1)})+\d\Gamma^{(2)}(h^{(2)})\leqslant C\,\bH + C^2.
	\end{equation*}
	This yields the bound
	\begin{equation}
	\big|\langle U_N H_N^{\operatorname{MF}} U_N^*\rangle_\Phi-Ne_{\operatorname{H}}-\langle\bH\rangle_\Phi\big| \leqslant C\sqrt{\frac{M}{N}}\big(\langle \bH\rangle_\Phi+C\big),
	\end{equation}
	for a suitable constant $C$.
\end{proof}

It remains to prove Lemma \ref{lemma:estimates}. We first state a technical Lemma that we will use through the proof.
\begin{lemma} \label{lemma:technical}
	Under the same assumptions of Theorem \ref{thm:mf_correction},
	\begin{align}
	&\d\Gamma^{(i)}(T^{(j)})\leqslant \alpha\bH+\alpha\cN+\alpha,\qquad \text{for } j\in\{1,2\} \label{eq:technical1}\\
	&\d\Gamma^{(1)}\Big(|V^{(12)}|*|v_0|^2\Big)\leqslant \beta\cN \label{eq:technical2}\\
	&\d\Gamma^{(2)}\Big(|V^{(12)}|*|u_0|^2\Big)\leqslant \gamma\cN\label{eq:technical3}
	\end{align}
	for positive constants $\alpha,\beta,\gamma$.
\end{lemma}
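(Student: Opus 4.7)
The plan is to first establish the interaction-potential bounds \eqref{eq:technical2}--\eqref{eq:technical3}, which reduce to showing that the convolutions $|V^{(12)}|\!*\!|v_0|^2$ and $|V^{(12)}|\!*\!|u_0|^2$ belong to $L^\infty(\R^3)$. Once those pointwise bounds are in place, \eqref{eq:technical1} will follow by inverting \eqref{eq:h's} to express $T^{(i)}$ as $h^{(i)}$ minus the mean-field potentials plus the chemical potential, and then invoking Theorem \ref{thm:upper_lower_bound} to control $\d\Gamma^{(i)}(h^{(i)})$.

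The key step is a Cauchy--Schwarz argument combined with the form-bound \eqref{eq:potentials_bdd_wrt_Laplacian}: for any $w\in H^1(\R^3)$ and any $\alpha\in\{1,2,12\}$,
\begin{equation*}
\big(|V^{(\alpha)}|\!*\!|w|^2\big)(x) \leqslant \|w\|_{L^2}\Big(\int |V^{(\alpha)}(x-y)|^2 |w(y)|^2 \d y\Big)^{1/2}.
\end{equation*}
Translation-invariance of the operator inequality $(V^{(\alpha)})^2 \leqslant C(\mathbf{1}-\Delta)$ bounds the integral on the right as a quadratic form in $w$ uniformly in $x$, so the whole right-hand side is at most $C\|w\|_{L^2}\|w\|_{H^1}$. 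Since $u_0,v_0\in H^1(\R^3)$ (by standard elliptic regularity for \eqref{eq:mf-eq}), we conclude $|V^{(\alpha)}|\!*\!|u_0|^2,\,|V^{(\alpha)}|\!*\!|v_0|^2 \in L^\infty(\R^3)$. Viewed as positive multiplication operators, they are dominated by constants times the identity on the relevant one-particle sector, and monotonicity of second quantization with respect to operator ordering gives
\begin{equation*}
\d\Gamma^{(1)}(|V^{(12)}|\!*\!|v_0|^2) \leqslant \big\||V^{(12)}|\!*\!|v_0|^2\big\|_{L^\infty}\,\cN_1 \leqslant \beta\cN,
\end{equation*}
proving \eqref{eq:technical2}; the inequality \eqref{eq:technical3} is identical after interchanging $u_0$ and $v_0$.

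For \eqref{eq:technical1} with $i=1$ (the case $i=2$ being completely analogous), \eqref{eq:h's} yields
\begin{equation*}
T^{(1)} = h^{(1)} - c_1 V^{(1)}\!*\!|u_0|^2 - c_2 V^{(12)}\!*\!|v_0|^2 + \mu^{(1)},
\end{equation*}
and the pointwise inequality $\pm V^{(\alpha)}\!*\!|w|^2 \leqslant |V^{(\alpha)}|\!*\!|w|^2$ upgrades to the operator inequality
\begin{equation*}
T^{(1)} \leqslant h^{(1)} + c_1 |V^{(1)}|\!*\!|u_0|^2 + c_2 |V^{(12)}|\!*\!|v_0|^2 + |\mu^{(1)}|\mathbf{1}.
\end{equation*}
Second-quantizing this inequality and combining Theorem \ref{thm:upper_lower_bound} (which gives $\d\Gamma^{(1)}(h^{(1)}) \leqslant C(\bH+\cN+1)$) with the $L^\infty$ bounds of the previous paragraph (which bound each convolution term by a multiple of $\cN$) yields \eqref{eq:technical1}.

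The only genuinely technical point is the Cauchy--Schwarz argument that converts the quadratic-form hypothesis \eqref{eq:potentials_bdd_wrt_Laplacian} into uniform pointwise bounds on the mean-field potentials; this is precisely why Assumption $(A_2^{\mathrm{MF}})$ is formulated at the form level rather than as a pointwise hypothesis on $V^{(\alpha)}$. Everything else is a routine consequence of Theorem \ref{thm:upper_lower_bound} together with order-preservation of $\d\Gamma$.
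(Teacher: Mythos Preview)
Your proof is correct and follows essentially the same strategy as the paper: show that the convolutions $|V^{(\alpha)}|*|w|^2$ are bounded (giving \eqref{eq:technical2}--\eqref{eq:technical3} immediately), then write $T^{(i)}=h^{(i)}-(\text{mean-field potentials})+\mu^{(i)}$ and invoke Theorem~\ref{thm:upper_lower_bound} to obtain \eqref{eq:technical1}. The only minor variation is that for \eqref{eq:technical1} the paper uses the $\eps$-form of the operator bound, $|V^{(\alpha)}|\leqslant \eps C(1-\Delta)+\eps^{-1}$, and absorbs the resulting small $-\Delta$ term back into $T^{(i)}$ (yielding $T^{(i)}\leqslant \widetilde\alpha\, h^{(i)}+\widetilde\alpha$ with $\widetilde\alpha>1$), whereas your Cauchy--Schwarz argument gives the cleaner $L^\infty$ bound on the convolutions outright and renders that absorption step unnecessary.
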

\begin{proof}
	Let us prove \eqref{eq:technical1} for the case $j=1$. By assumption ($A_2^{\mathrm{MF}}$) we know that, for every $\eps>0$, $V^{(1)}\geqslant -\eps\, C^{(1)}(1-\Delta)-\eps^{-1}$ and $V^{(12)}\geqslant -\eps\, C^{(12)}(1-\Delta)-\eps^{-1}$. Hence, we deduce	
	\begin{equation}
	V^{(1)}*|u_0|^2\geqslant -\eps C^{(1)}-\eps C^{(1)}(-\Delta)-\eps C^{(1)}\|u_0\|_{H^{1}}^2 - \eps^{-1}
	\end{equation}
	and
	\begin{equation} \label{eq:technical_intermediate}
	V^{(12)}*|v_0|^2\geqslant -\eps C^{(12)}-\eps C^{(12)}(-\Delta)-\eps C^{(12)}\|v_0\|_{H^{1}}^2 -\eps^{-1}.
	\end{equation}
	By recalling the definition of $h^{(1)}$ from \eqref{eq:h's}, the last two estimates imply the existence of a constant $\widetilde\alpha>0$ large enough such that
	\begin{equation}
	T^{(1)}\leqslant \widetilde\alpha h^{(1)}+\widetilde\alpha.
	\end{equation}
	Taking the second quantization $\d\Gamma^{(1)}(\cdot)$ of both sides and using \eqref{eq:upper_lower_bound} we obtain \eqref{eq:technical1} for $\alpha>0$ big enough. The same holds for $T^{(2)}$.
	
	To prove \eqref{eq:technical2} it is enough to note that, by Assumption ($A_2^{\mathrm{MF}}$), the multiplication operator $V^{(12)}*|v_0|^2$ is bounded. The desired inequality is hence trivial, since the second quantization of a bounded positive operator is always estimated by a multiple of the number operator. An analogous proof holds for \eqref{eq:technical3}.
	
\end{proof}

\begin{proof}[Proof of Lemma \ref{lemma:estimates}]
	Let us write
	\begin{align}
	\langle M_0-Ne_{\operatorname{H}}\rangle_\Phi&=M_0^{(1)}+M_0^{(2)}+M_0^{(12)}\\
	\langle M_1\rangle_\Phi&=M_1^{(1)}+M_1^{(2)}+M_1^{(12)}\\
	\langle M_2-\bH\rangle_\Phi&=M_2^{(1)}+M_2^{(2)}+M_2^{(12)}\\
	\langle M_3\rangle_\Phi&=M_3^{(1)}+M_3^{(2)}+M_3^{(12)}\\
	\langle M_4\rangle_\Phi&=M_4^{(1)}+M_4^{(2)}+M_4^{(12)},
	\end{align}
	where, in self-explanatory notation, each summand with label ${(\alpha)}$ contains all the terms depending on the interaction potential $V^{(\alpha)}$. We will estimate the $M^{(12)}_k$'s; all the other terms do not involve interactions between particles of different type, and hence, they are on the same footing as the terms estimated in \cite[Proposition 5.2]{LewNamSerSol-15}.
	
	Let us consider $M_0^{(12)}$. Since $\Phi\in\cF_+^{\leqslant M}$, we have $\langle \cN_i\rangle_\Phi\leqslant \langle\cN\rangle_\Phi\leqslant M$, and hence
	\begin{equation} \label{eq:final_M0}
	\big|M_0^{(12)}\big|=\Big|V^{(12)}_{0000}\langle\frac{\cN_1\cN_2}{N}\rangle_\Phi\Big|\leqslant  \cK_0\frac{M}{N}\langle \cN\rangle_{\Phi},
	\end{equation}
	for $\cK_0=|V^{(12)}_{0000}|$.
	
	Let us now consider $M_1^{(12)}$. By a Cauchy-Schwarz inequality we can write
	\begin{equation*}
	\begin{split}
	\Big|\sum_{m\geqslant 1}V^{(12)}_{m000}\frac{1}{N}\langle a^*_m\sqrt{N_1-\cN_1}\cN_2\rangle_\Phi +\operatorname{h.c.}\Big|&\le	\frac{2}{N}\Big(\sum_{m\geqslant 1}|V^{(12)}_{m000}|^2\Big)^{1/2}\\
	&\qquad\times\Big(\sum_{m\geqslant 1}\langle a^*_ma_m\rangle_\Phi\langle (N_1-\cN_1)\cN_2^2\rangle_\Phi\Big)^{1/2}\\
	&\leqslant\frac{2\,\|K^{(12)}\|_{\operatorname{HS}}}{N}\Big(N\langle\cN\rangle_\Phi\langle\cN^2\rangle_\Phi\Big)^{1/2}\\
	&\leqslant 2\,\|K^{(12)}\|_{\operatorname{HS}}\,\sqrt{\frac{M}{N}}\langle \cN\rangle_\Phi.
	\end{split}
	\end{equation*}
	In the second and third step we have used $N_1\leqslant N$, positivity of $\cN_1$, the inequality $\langle\cN\rangle_\Phi\leqslant M$, together with the property
	\begin{equation*}
	\begin{split}
	\sum_{m\geqslant 1}|V^{(12)}_{m000}|^2=\sum_{m\geqslant 1}|\langle u_m, K^{(12)} u_0\rangle|^2&\leqslant\sum_{m\geqslant 1}\langle u_m, K^{(12)}\ket{u_0}\bra{u_0} K^{(12)} u_m\rangle\\
	&\leqslant \|K^{(12)}\|_{\operatorname{HS}}^2<+\infty.
	\end{split}
	\end{equation*}
	There is another summand in $M_1^{(12)}$, but it differs from the one we just estimated only by the interchange of the two components; for this reason, we omit the details of its estimate. Thus,
	\begin{equation} \label{eq:final_M1}
	|M_1^{(12)}|\leqslant 4 \|K^{(12)}\|^2_{\operatorname{HS}}\sqrt{\frac{M}{N}}\langle \cN\rangle_\Phi.
	\end{equation}
	
	Let us consider $M_2^{(12)}$, whose expression is
	\begin{align}
	M_2^{(12)}=&\sum_{m,n\geqslant 1}\Big[\frac{1}{N}V^{(12)}_{mn00}\langle a^*_mb^*_n\big(\sqrt{N_1-\cN_1}\sqrt{N_2-\cN_2}-\sqrt{N_1N_2}\big)\rangle_\Phi+\operatorname{h.c.}\Big] \label{eq:M_2_V12_1}\\
	&-\sum_{m,n\geqslant 1}\Big[\frac{1}{N}V^{(12)}_{m0n0}\langle a^*_ma_n\cN_2\rangle_\Phi\Big] \label{eq:M_2_V12_2}\\
	&-\sum_{m,n\geqslant 1}\Big[\frac{1}{N}V^{(12)}_{0m0n}\langle b^*_mb_n\cN_1\rangle_\Phi\Big]\label{eq:M_2_V12_3}\\
	&+\sum_{m,n\geqslant 1}\Big[\frac{1}{N}V^{(12)}_{m00n}\langle a^*_mb_n\big(\sqrt{N_1-\cN_1}\sqrt{N_2-\cN_2}-\sqrt{N_1N_2}\big)\rangle_\Phi+\operatorname{h.c.}\Big], \label{eq:M_2_V12_4}
	\end{align}
	and let us treat the four summands one by one. First let us define the operator
	\begin{equation*}
	X:=\sqrt{\frac{N_1-\cN_1}{N_1}}\sqrt{\frac{N_2-\cN_2}{N_2}}.
	\end{equation*}
	By a Cauchy-Schwarz inequality we get
	\begin{equation*}
	\begin{split}
	|\eqref{eq:M_2_V12_1}|\;\leqslant& \;2\,\sqrt{\frac{N_1N_2}{N^2}}\Big(\sum_{m,n\geqslant 1}|\langle u_m,K^{(12)} \overline{v_n}\rangle|^2\Big)^{1/2}\Big(\sum_{m,n\geqslant 1}\langle a^*_mb^*_na_mb_n\rangle_\Phi\Big)^{1/2}\langle\, (X-\mathbbm{1})^2\,\rangle_\Phi^{1/2}\\
	\;\leqslant&\;2\,\sqrt{\frac{N_1N_2}{N^2}}\Big(\sum_{m\geqslant 1}\langle u_m, \big|K^{(12)}\big|^2 u_m\rangle\Big)^{1/2} \langle \cN_1\cN_2\rangle_\Phi^{1/2}\Big\langle \Big(-\frac{\cN_1}{N_1}-\frac{\cN_2}{N_2}+\frac{\cN_1\cN_2}{N_1N_2}\Big)^2\Big\rangle_\Phi^{1/2},
	\end{split}
	\end{equation*}
	having used the estimate $(X-\mathbbm{1})^2\leqslant (X^2-\mathbbm{1})^2$. Now, using $N_1N_2\leqslant N^2$, $\langle\cN_i\rangle_\Phi\leqslant\langle\cN\rangle_\Phi\leqslant M$, and the fact that $K^{(12)}$ is Hilbert-Schmidt, we obtain
	\begin{equation*}
	\begin{split}
	|\eqref{eq:M_2_V12_1}|\;\leqslant& \;2 \|K^{(12)}\|_{\operatorname{HS}}\,\frac{M^{1/2}}{N_1N_2}\langle\cN\rangle_\Phi^{1/2} \big\langle\big( N_2\cN_1+N_1\cN_2-\cN_1\cN_2\big)^2\big\rangle^{1/2}.
	\end{split}
	\end{equation*}
	Since $\cN_1\cN_2\leqslant N_2\cN_1+N_1\cN_2$ on $\cF_+^{\leqslant M}$, we finally get
	\begin{equation} \label{eq:M_2_V12_1final}
	\begin{split}
	|\eqref{eq:M_2_V12_1}|\;\leqslant&\; 4 \|K^{(12)}\|_{\operatorname{HS}}\,\frac{M^{1/2}}{N_1N_2}\langle\cN\rangle_\Phi^{1/2} \big\langle\big( N_2\cN_1+N_1\cN_2\big)^2\big\rangle^{1/2}   \leqslant\widetilde{\cK_2}\frac{M}{N}\langle\cN\rangle_\Phi,
	\end{split}
	\end{equation}
	for some $\widetilde{\cK_2}>0$ .

	To estimate \eqref{eq:M_2_V12_2} we note that
	\begin{equation*}
	\eqref{eq:M_2_V12_2}=-\Big\langle\d\Gamma^{(1)}(V^{(12)}*|v_0|^2)\frac{\cN_2}{N}\Big\rangle_\Phi,
	\end{equation*}
	and, by \eqref{eq:technical2},
	\begin{equation} \label{eq:M_2_V12_2final}
	|\eqref{eq:M_2_V12_2}|\leqslant \beta\frac{M}{N}\langle \cN\rangle_\Phi.
	\end{equation}
	Analogously,
	\begin{equation}  \label{eq:M_2_V12_3final}
	|\eqref{eq:M_2_V12_3}|\leqslant \gamma\frac{M}{N}\langle\cN\rangle_\Phi
	\end{equation}
	for $\gamma$ given by \eqref{eq:technical3}.
    
    To estimate \eqref{eq:M_2_V12_4} we write
	\begin{equation*}
	\eqref{eq:M_2_V12_4}= \frac{1}{N}\sum_{\substack{j\geqslant 1,\,\,k\geqslant 0\\j+k\leqslant M}} j(k+1)\langle \Phi_{j,k},K^{(12)}_{1,1}\Phi_{j-1,k+1}\rangle+\operatorname{h.c.},
	\end{equation*}
	where $\Phi=(\Phi_{j,k})_{jk}\in\cF^{\leqslant M}_+$ and $K^{(12)}_{1,1}$ is the integral operator $K^{(12)}$ defined in \eqref{eq:K_3} and taken with kernel $K^{(12)}(x_1,y_1)$. By using Cauchy-Schwarz we get
	\begin{equation*}
	\begin{split}
	|\eqref{eq:M_2_V12_4}|\leqslant& \frac{2}{N} \sum_{\substack{j\geqslant 1,\,\,k\geqslant 0\\j+k\leqslant M}} j(k+1) \Big(\langle\Phi_{j,k},\Phi_{j,k}\rangle+\|K^{(12)}\|_{\operatorname{op}}^2 \langle \Phi_{j-1,k+1},\Phi_{j-1,k+1}\rangle\Big)\\
	=& \frac{2}{N}(1+\|K^{(12)}\|_{\operatorname{op}})\langle\Phi,\cN_1 (\cN_2+1)\Phi\rangle,
	\end{split}
	\end{equation*}
	and hence, the inequality $\cN_i\leqslant\cN\leqslant M$ valid on $\cF_+^{\leqslant M}$ yields
	\begin{equation} \label{eq:M_2_V12_4final}
	|\eqref{eq:M_2_V12_4}|\leqslant \cK_2'\frac{M}{N}\langle\cN\rangle_\Phi,
	\end{equation}
	for some $\cK_2'>0$.
	Putting together \eqref{eq:M_2_V12_1final}, \eqref{eq:M_2_V12_2final}, \eqref{eq:M_2_V12_3final}, and \eqref{eq:M_2_V12_4final} we conclude
	\begin{equation} \label{eq:final_M2}
	\big|M_2^{(12)}\big|\leqslant \cK_2\frac{M}{N}\langle\cN\rangle_\Phi,
	\end{equation}
	with $\cK_2:=\widetilde{\cK_2}+\beta+\gamma+\cK_2'$. 
	
	Let us consider $M_3^{(12)}$, whose expression is
	\begin{align}
	M_3^{(12)}=&\sum_{m,n,p\geqslant 1}\Big[\frac{1}{N}V^{(12)}_{mnp0} \langle a^*_ma_pb^*_n\sqrt{N_2-\cN_2}\rangle_\Phi+\operatorname{h.c.}\Big] \label{eq:M_3_V12_1}\\
	&+\sum_{m,n,p\geqslant 1}\Big[\frac{1}{N}V^{(12)}_{m0np} \langle a^*_mb^*_nb_p\sqrt{N_1-\cN_1} \rangle_\Phi +\operatorname{h.c.}\Big] \label{eq:M_3_V12_2}.
	\end{align}
	First, we notice that
	\begin{equation} \label{eq:M_3_V12_1a}
	\eqref{eq:M_3_V12_1}=\langle \Phi,U_N\sum_{j=1}^{N_1}\sum_{k=1}^{N_2}Q^{(1)}_{x_j}\otimes Q^{(2)}_{y_k}V^{(12)}(x_j-y_k)Q^{(1)}_{x_j}\otimes P^{(2)}_{y_k} U_N^*\,\Phi\rangle+\operatorname{h.c.}.
	\end{equation}
	Now, by splitting $V^{(12)}$ into positive and negative part and using Cauchy-Schwarz, one obtains the inequality
	\begin{equation}
	\begin{split}
	Q^{(1)}_{x}\otimes& Q^{(2)}_{y}V^{(12)}(x-y)Q^{(1)}_{x}\otimes P^{(2)}_{y}+ Q^{(1)}_{x}\otimes P^{(2)}_{y}V^{(12)}(x-y)Q^{(1)}_{x}\otimes Q^{(2)}_{y}\\
	\leqslant&\, \eps^{-1}Q^{(1)}_{x}\otimes Q^{(2)}_{y}\big|V^{(12)}(x-y)\big|Q^{(1)}_{x}\otimes Q^{(2)}_{y}\\
	&+\eps \,Q^{(1)}_{x}\otimes P^{(2)}_{y}\big|V^{(12)}(x-y)\big|Q^{(1)}_{x}\otimes P^{(2)}_{y},
	\end{split}
	\end{equation}
	whence, substituting into \eqref{eq:M_3_V12_1a}, one gets
	\begin{equation} \label{eq:M_3_V12_1b}
	\begin{split}
	\eqref{eq:M_3_V12_1}\leqslant& \frac{1}{\eps N} \sum_{m,n,p,q\geqslant 1} \big|V^{(12)}\big|_{mnpq}\langle a^*_mb^*_na_pb_q \rangle_\Phi \\
	&+\frac{\eps}{N} \big\langle  \d\Gamma^{(1)} \Big(\big|V^{(12)}\big|*|v_0|^2\Big) (N_2-\cN_2)\big\rangle_\Phi
	\end{split}
	\end{equation}
	for some $\eps>0$ that we are going to specify in a moment.
	We first focus on the first summand of the r.h.s of \eqref{eq:M_3_V12_1b}. Using the expression in components $\Phi=(\Phi_{j,k})_{jk}\in\cF_+^{\leqslant M}$, we can write
	\begin{equation*}
	\begin{split}
	\frac{1}{\eps N}\sum_{m,n,p,q\geqslant 1}\big|V^{(12)}_{mnpq}\big|\langle\Phi,a^*_ma_pb^*_nb_q\Phi&\rangle=\;\frac{1}{\eps N}\sum_{\substack{j,k\geqslant 1,\\j+k\leqslant M}}jk\langle\Phi_{j,k}, \big|V(x_1-y_1)\big|\Phi_{j,k}\rangle\\
	\leqslant&\;\frac{C^{(12)}}{\eps N}\sum_{\substack{j,k\geqslant 1,\\j+k\leqslant M}}jk\langle\Phi_{j,k},\big(1-\Delta_{x_1}-\Delta_{y_1}\big) \Phi_{j,k}\rangle\\
	=&\;\frac{C^{(12)}}{\eps N}\langle\Phi,\big (\cN_1\cN_2+\d\Gamma^{(1)}(T^{(1)})\cN_2+\cN_1\d\Gamma^{(2)}(T^{(2)})\big)\Phi\rangle,
	\end{split}
	\end{equation*}
	having used Assumption ($A_2^{\mathrm{MF}}$) in the second step. Thanks to the inequality $\cN_i\leqslant \cN\leqslant M$ valid on $\cF^{\leqslant M}_+$ and Lemma \ref{lemma:technical}, we obtain
	\begin{equation*}
	\frac{1}{\eps N} \sum_{mnpq} |V^{(12)}_{mnpq}|\langle\Phi,a^*_mb^*_na_pb_q\Phi\rangle \leqslant \widetilde{\cK_3}\frac{M}{\eps N}\big(\langle \cN\rangle_\Phi+\langle\bH\rangle_\Phi+\widetilde{\cK_3}\big),
	\end{equation*} 
	for some $\widetilde{\cK_3}>0$. The second summand in the r.h.s of \eqref{eq:M_3_V12_1b}, in turn, is estimated using \eqref{eq:technical2} as
	\begin{equation*}
	\frac{\eps}{N} \big\langle\Phi,\d\Gamma^{(1)}\Big(\big|V^{(12)}\big|*|v_0|^2\Big)(N_2-\cN_2)\big\rangle_\Phi \leqslant  \alpha\eps \big(\langle \cN\rangle_\Phi+\langle\bH\rangle_\Phi\big).
	\end{equation*}
	By finally choosing $\eps=(M/N)^{1/2}$, the last two inequalities yield
	\begin{equation*}
	|\eqref{eq:M_3_V12_1}|\leqslant (\widetilde{\cK_3}+\alpha)\sqrt{\frac{M}{N}}\big(\langle \cN\rangle_\Phi+\langle\bH\rangle_\Phi+\widetilde{\cK_3}\big).
	\end{equation*} 
	The term \eqref{eq:M_3_V12_2} differs from \eqref{eq:M_3_V12_1} only by the exchange of the two components, and hence, it is treated analogously. This yields
	\begin{equation} \label{eq:final_M3}
	\big|M_3^{(12)}\big|\leqslant \cK_3\sqrt{\frac{M}{N}} \big(\langle \cN\rangle_\Phi+\langle\bH\rangle_\Phi+\cK_3\big),
	\end{equation}
	for a positive constant $\cK_3$ big enough.
	
	Let us finally consider $M_4^{(12)}$. Using $\Phi=(\Phi_{j,k})_{jk}$, we can write
	\begin{equation*}
	\begin{split}
	M_4^{(12)}=&\;\frac{1}{N}\sum_{\substack{j,k\geqslant 1\\j+k\leqslant M}}jk\langle\Phi_{j,k}, V(x_1-y_1)\Phi_{j,k}\rangle\\
	\leqslant&\;\frac{C^{(12)}}{N}\sum_{\substack{j,k\geqslant 1\\j+k\leqslant M}}jk\langle\Phi_{j,k},\big(1-\Delta_{x_1}-\Delta_{y_1}\big) \Phi_{j,k}\rangle\\
	\leqslant&\;\frac{C^{(12)}}{N}\langle\Phi,\big(\cN_1\cN_2+\d\Gamma^{(1)}(T^{(1)})\cN_2+\cN_1\d\Gamma^{(2)}(T^{(2)})\big)\Phi\rangle,
	\end{split}
	\end{equation*}
	having used Assumption ($A_2^{\mathrm{MF}}$) in the second step. By the inequality $\cN_i\leqslant \cN\leqslant M$, valid on $\cF^{\leqslant M}_+$ and \eqref{eq:technical1}, we get
	\begin{equation} \label{eq:final_M4}
	M_4^{(12)} \leqslant \;\cK_4\frac{M}{ N}\big(\langle \cN\rangle_\Phi+\langle\bH\rangle_\Phi+\cK_4\big),
	\end{equation} 
	for some constant $\cK_4>0$.
	
	Eqs. \eqref{eq:final_M0}, \eqref{eq:final_M1}, \eqref{eq:final_M2}, \eqref{eq:final_M3}, and \eqref{eq:final_M4}, together with their analogous for the terms depending on $V^{(1)}$ and $V^{(2)}$ yield the desired claim, provided that the overall constant $C$ is chosen large enough.
\end{proof}

\subsection{Localization in Fock space}
Theorem \ref{thm:truncated_bound} provides an estimate for the expectation value the difference $U_N H_N^{\operatorname{MF}}U_N^*-N e_{\operatorname{H}}-\bH$ in the truncated space $\cF_+^{\leqslant M}$. In what follows we recall a result that allows us to localize the energy of a state in the space $\cF_+^{\leqslant M}$. As we shall see, at the end of the proof we will be able to choose $M\ll N$ in such a way that the localization produces only negligible remainders. This idea goes back to \cite[Theorem A.1]{LieSol-01} and we will follow the simplified representation in \cite[Proposition 6.1]{LewNamSerSol-15}. 

Consider two smooth, real functions $f$ and $g$ such that $0\leqslant f,g\leqslant1$, $f^2+g^2=1$, $f(x)=1$ for $|x|\le1/2$, and $f(x)=0$ for $|x|\geqslant 1$. By spectral calculus, we define
\begin{equation}
\begin{split}
f_M:=&f(\cN/M)\\
g_M:=&g(\cN/M).
\end{split}
\end{equation}
Let us also define the orthogonal projection $\cP_L$ onto the sector of $\cF_+$ with exactly $L$ particles, namely the subspace
\begin{equation*}
\bigoplus_{\substack{j\geqslant 0,\,\,k\geqslant 0\\ j+k= L}}\big(\fh^{(1)}_+\big)^{\otimes_{\operatorname{sym}}j}\otimes \big(\fh^{(2)}_+\big)^{\otimes_{\operatorname{sym}}k}= U^*\Big( \big(\fh^{(1)}_+\oplus\fh^{(2)}_+\big)^{\otimes_{\operatorname{sym}}L}\Big).
\end{equation*}
Here $U:\cF_+\to\cG_+$ is the unitary operator given by Theorem \ref{thm:isomorphism}.
\begin{proposition} \label{prop:localization}
	Let $A$ be a non-negative operator on $\cF_+$ such that $\cP_L \cD(A)\subset\cD(A)$ for any $L\in\bN$. Suppose moreover that there exists $\sigma\geqslant 0$ such that $\cP_LA\cP_{L'}=0$ when $|L-L'|>\sigma$. Then
		\begin{equation} \label{eq:localization}
		\pm A-f_MAf_M-g_MAg_M\leqslant \frac{C_f\sigma^3}{M^2}A_0,
		\end{equation}
	where $A_0:=\sum_{L\in\bN} \cP_LA\cP_L$ and $C_f$ is a positive constant depending only on $f$.
\end{proposition}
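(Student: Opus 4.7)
\medskip

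\noindent\textbf{Proof plan for Proposition \ref{prop:localization}.}

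The plan is to use a standard IMS-type commutator identity together with the finite-range condition $\cP_L A \cP_{L'}=0$ for $|L-L'|>\sigma$ to reduce the claim to a block-diagonal estimate, which will be closed by Cauchy--Schwarz applied to the non-negative form $A\geqslant 0$.

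First, I would note that because $f^{2}+g^{2}=1$ we have $f_M^{2}+g_M^{2}=\mathbbm{1}$ (by spectral calculus). Expanding and rearranging one obtains the IMS-type algebraic identity
\begin{equation*}
A-f_M A f_M - g_M A g_M \;=\; \tfrac{1}{2}\bigl[f_M,[f_M,A]\bigr]+\tfrac{1}{2}\bigl[g_M,[g_M,A]\bigr].
\end{equation*}
Next, using the spectral representation $f_M=\sum_{L\in\mathbb{N}}f(L/M)\,\cP_L$ (and analogously for $g_M$) together with the hypothesis $\cP_L A\cP_{L'}=0$ when $|L-L'|>\sigma$, the above double commutator rewrites as
\begin{equation*}
A-f_M A f_M - g_M A g_M \;=\; \tfrac{1}{2}\!\!\sum_{\substack{L,L'\in\mathbb{N} \\ 0<|L-L'|\leqslant \sigma}}\!\! \alpha_{L,L'}\,\cP_L A \cP_{L'},\qquad \alpha_{L,L'}:=\bigl(f(L/M)-f(L'/M)\bigr)^{2}+\bigl(g(L/M)-g(L'/M)\bigr)^{2}.
\end{equation*}
Since $f,g\in C^{\infty}$, the mean value theorem gives $\alpha_{L,L'}\leqslant C_{f}(L-L')^{2}/M^{2}$ with $C_{f}$ depending only on $\|f'\|_{\infty}$ and $\|g'\|_{\infty}$.

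For a vector $\Phi$ in the form domain of $A$, decompose $\Phi_L:=\cP_L\Phi$ and compute
\begin{equation*}
\bigl\langle \Phi,\bigl(A-f_M A f_M - g_M A g_M\bigr)\Phi\bigr\rangle \;=\; \tfrac{1}{2}\!\!\sum_{0<|L-L'|\leqslant\sigma}\!\! \alpha_{L,L'}\,\langle \Phi_L,A\Phi_{L'}\rangle.
\end{equation*}
Here comes the key (and, in my view, only subtle) step: since $A\geqslant 0$, the sesquilinear form $(\varphi,\psi)\mapsto \langle \varphi,A\psi\rangle$ is a non-negative inner product, so Cauchy--Schwarz yields
\begin{equation*}
|\langle \Phi_L,A\Phi_{L'}\rangle|\;\leqslant\; \langle \Phi_L,A\Phi_L\rangle^{1/2}\langle \Phi_{L'},A\Phi_{L'}\rangle^{1/2}\;\leqslant\; \tfrac{1}{2}\bigl(\langle \Phi_L,A\Phi_L\rangle+\langle \Phi_{L'},A\Phi_{L'}\rangle\bigr).
\end{equation*}
Inserting this and symmetrizing in $L\leftrightarrow L'$ reduces the estimate to a purely combinatorial one: for each fixed $L$ one has
\begin{equation*}
\sum_{L':\,0<|L-L'|\leqslant\sigma}\alpha_{L,L'}\;\leqslant\; \frac{C_f}{M^{2}}\sum_{k=1}^{\sigma}2k^{2}\;\leqslant\; \frac{C'_{f}\,\sigma^{3}}{M^{2}}.
\end{equation*}
Combining everything gives
\begin{equation*}
\bigl|\bigl\langle \Phi,\bigl(A-f_M A f_M - g_M A g_M\bigr)\Phi\bigr\rangle\bigr| \;\leqslant\; \frac{C'_{f}\,\sigma^{3}}{M^{2}}\sum_{L}\langle \Phi_L,A\Phi_L\rangle\;=\;\frac{C'_{f}\,\sigma^{3}}{M^{2}}\,\langle \Phi, A_{0}\,\Phi\rangle,
\end{equation*}
which is the operator inequality \eqref{eq:localization} after relabeling the constant.

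The only real obstacle is the Cauchy--Schwarz step, which crucially uses $A\geqslant 0$; without this, one could only bound the off-diagonal pieces by the \emph{operator norm} of $A$, which is meaningless here since $A$ is unbounded. The hypothesis $\cP_L\cD(A)\subset\cD(A)$ is what legitimizes the formal spectral manipulations on $f_M,g_M$, and the range-$\sigma$ hypothesis is what produces the crucial cubic-in-$\sigma$ counting. Everything else is bookkeeping.
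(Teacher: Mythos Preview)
Your argument is correct and is precisely the standard IMS--localization computation that the paper invokes: the paper does not supply its own proof of Proposition~\ref{prop:localization} but defers to \cite[Proposition~6.1]{LewNamSerSol-15} and \cite[Theorem~A.1]{LieSol-01}, whose proofs proceed exactly via the identity $A-f_MAf_M-g_MAg_M=\tfrac12\sum_{L,L'}\alpha_{L,L'}\cP_LA\cP_{L'}$, the mean-value bound on $\alpha_{L,L'}$, and the Cauchy--Schwarz inequality for the non-negative form of $A$. There is nothing to add.
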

This result is a variant of the IMS formula, that can be found in \cite[Proposition 6.1]{LewNamSerSol-15}, which is, in turn, an adaptation of \cite[Theorem A.1]{LieSol-01}. As a direct consequence of Proposition \ref{prop:localization}, Lemma \ref{lemma:localization} provides the precise estimates that enable us to localize the energy of a state in $\cF_+^{\leqslant N}$ into the subspace $\cF^{\leqslant M}_+$.

Let us define the operator
\begin{equation}
\widetilde{H}_{N}:=U_{N}H_{N}^{\operatorname{MF}}U^*_{N}-Ne_H.
\end{equation} 
If $\psi_N^{\operatorname{MF}}$ is a ground state of $H_N^{\operatorname{MF}}$, then, by unitarity,
\begin{equation} \label{eq:gs_tilde}
\Phi_N:=U_N\psi_N^{\operatorname{MF}}
\end{equation}
is a ground state of $\widetilde{H}_N$. We will use the following notations for the ground state energies of $\widetilde{H}_{N}$ and $\bH$
\begin{equation}
\begin{split}
\lambda(\widetilde{H}_N):=&\langle \Phi_N,\widetilde{H}_N\Phi_N\rangle=E^{\operatorname{MF}}_N-Ne_{\operatorname{H}}\\
\lambda(\bH):=&\langle \Phi^{\operatorname{gs}},\bH\Phi^{\operatorname{gs}}\rangle.
\end{split}
\end{equation}

\begin{lemma} \label{lemma:localization} There exist positive constants $\kappa_1,\kappa_2$ such that, for any $M\leqslant N$,
	\begin{equation} \label{eq:localization_bogoliubov}
	\pm\big(\bH-f_M\bH f_M-g_M\bH g_M\big)\leqslant \frac{\kappa_1}{M^2}(\bH+\kappa_1)
	\end{equation}
	and
	\begin{equation} \label{eq:localization_many_body}
	\begin{split}
	\pm\big( \widetilde{H}_{N}-&f_M\widetilde{H}_{N} f_M-g_M\widetilde{H}_{N}g_M\big) \leqslant \frac{\kappa_2}{M^2}\big(\widetilde{H}_{N}+\kappa_2 N\big).
	\end{split}
	\end{equation}
\end{lemma}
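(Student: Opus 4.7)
The plan is to deduce both bounds from the IMS-type localization formula of Proposition \ref{prop:localization}. In each case the key step is to verify the sector-selection rule with $\sigma=2$, to fix an additive constant large enough to make the relevant operator nonnegative, and then to estimate the diagonal part $A_0$ by the original operator.

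For \eqref{eq:localization_bogoliubov} I would proceed as follows. Inspection of \eqref{eq:Bogoliubov_Hamiltonian} shows that every term of $\bH$ is jointly quadratic in $a,a^*,b,b^*$, hence changes the total excitation number $\cN$ by at most $2$, giving $\sigma=2$. By Theorem \ref{thm:upper_lower_bound} one has $\bH\geqslant -C_0$ for some $C_0>0$, so $A:=\bH+C_0\geqslant 0$ and Proposition \ref{prop:localization} applies. The additive constant $C_0$ cancels from the left-hand side by $f_M^2+g_M^2=\mathbbm{1}$, leaving
\begin{equation*}
\pm(\bH-f_M\bH f_M-g_M\bH g_M)\leqslant \frac{8C_f}{M^2}\,A_0.
\end{equation*}
The sector-preserving part of $\bH$ consists of the second quantizations $\d\Gamma^{(\alpha)}(h^{(\alpha)})$, the direct-interaction contributions with coefficients $c_\alpha V^{(\alpha)}_{m00n}$, and the cross-species term with coefficients $\sqrt{c_1c_2}\,V^{(12)}_{0mn0}$. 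By the Hilbert--Schmidt bound \eqref{eq:K_HS} the last two pieces are relatively bounded by $\cN+1$, and the lower bound in \eqref{eq:upper_lower_bound} then yields $A_0\leqslant \kappa_1(\bH+\kappa_1)$, which is what we want.

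For \eqref{eq:localization_many_body} the same scheme applies after replacing $\bH$ by $\widetilde{H}_N$. The decomposition of Lemma \ref{lemma:splitting}, together with the fact that $\sqrt{N_i-\cN_i}$ is a function of $\cN_i$ and therefore preserves each sector, shows that each $M_j$ changes $\cN$ by at most $2$, so again $\sigma=2$. Assumption $(A_2^{\operatorname{MF}})$ provides the crude lower bound $H_N^{\operatorname{MF}}\geqslant -C_0 N$ (via $V^{(\alpha)}\geqslant -\varepsilon(1-\Delta)-\varepsilon^{-1}$ distributed over pairs), so $A':=\widetilde{H}_N+CN\geqslant 0$ for $C$ large. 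Applying Proposition \ref{prop:localization} and cancelling the scalar $CN$, the remaining task is to show $(\widetilde{H}_N)_0\leqslant C(\widetilde{H}_N+CN)$. Here one uses \eqref{eq:isolation_0} to bound $M_0-Ne_{\operatorname{H}}$ by $C\cN(\cN+1)/N\leqslant C\cN$ on $\cF_+^{\leqslant N}$, the isolation \eqref{eq:isolation_2} together with the operator inequalities in $(A_2^{\operatorname{MF}})$ to bound the sector-preserving part of $M_2$ by $\d\Gamma^{(1)}(h^{(1)})+\d\Gamma^{(2)}(h^{(2)})+C\cN+C$, and the argument used to prove \eqref{eq:final_M4} in Lemma \ref{lemma:estimates} to handle $M_4$. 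Since $\d\Gamma^{(\alpha)}(h^{(\alpha)})\leqslant C\,U_N H_N^{\operatorname{MF}} U_N^* + CN$ (which is the Fock-space form of the kinetic bound derivable from $(A_2^{\operatorname{MF}})$), one arrives at $A'_0\leqslant \kappa_2(\widetilde{H}_N+\kappa_2 N)$, yielding \eqref{eq:localization_many_body}.

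The main obstacle, in both cases, is controlling the diagonal $A_0$ by the full operator $A$. For the Bogoliubov Hamiltonian this comparison is essentially contained in the two-sided estimate \eqref{eq:upper_lower_bound} already at our disposal; for the many-body operator it requires combining the term-by-term bookkeeping afforded by Lemmas \ref{lemma:splitting}, \ref{lemma:isolation}, and \ref{lemma:estimates} with the basic kinetic estimate coming from $(A_2^{\operatorname{MF}})$, together with the observation that the scalar coefficients $\sqrt{N_i-\cN_i}$ do not create any further off-diagonal contribution.
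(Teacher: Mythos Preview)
Your argument is correct. For \eqref{eq:localization_bogoliubov} it coincides with the paper's proof: both apply Proposition~\ref{prop:localization} with $\sigma=2$ to $\bH$ shifted by a constant, identify the diagonal part as $\d\Gamma^{(1)}(h^{(1)})+\d\Gamma^{(2)}(h^{(2)})$ plus terms bounded by $\cN$ (the paper phrases this as $U^*\d\Gamma(\cB_1)U$ with $\cB_1$ differing from $\mathrm{diag}(h^{(1)},h^{(2)})$ by bounded $K$'s), and then close with the lower bound in \eqref{eq:upper_lower_bound}.

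For \eqref{eq:localization_many_body} you take a somewhat different route from the paper. You bound $(\widetilde H_N)_0$ directly by going through the decomposition $\widetilde H_N=\sum_j M_j-Ne_{\rm H}$ of Lemma~\ref{lemma:splitting}, isolating the sector-preserving pieces $(M_0-Ne_{\rm H})+(M_2)_0+M_4$ and estimating each. The paper instead shortcuts this step: it invokes Theorem~\ref{thm:truncated_bound} with $M=N$ to obtain $\widetilde H_N\leqslant\kappa_3(\bH+\kappa_3)$ on $\cF_+^{\leqslant N}$, compresses by $\cP_L$ on both sides, and then reuses the bound on $\sum_L\cP_L\bH\cP_L$ already obtained in the first half of the proof. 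Both routes land on the same intermediate estimate $(\widetilde H_N)_0\leqslant C\big(\d\Gamma^{(1)}(h^{(1)})+\d\Gamma^{(2)}(h^{(2)})+\cN+1\big)$ and then apply the stability inequality $\d\Gamma^{(1)}(h^{(1)})+\d\Gamma^{(2)}(h^{(2)})\leqslant C\widetilde H_N+CN$. The paper's route is shorter and exploits work already done; yours is more self-contained and makes explicit which pieces of $\widetilde H_N$ actually survive on the diagonal.
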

\begin{proof}
	To prove \eqref{eq:localization_bogoliubov}, we apply Proposition \ref{prop:localization} with $A=\bH-\lambda(\bH)\geqslant 0$ and $\sigma=2$. All is needed is the computation of the corresponding $A_0$. We notice that
	\begin{equation*}
	\sum_{L=0}^{\infty}\cP_L\big(\bH-\lambda(\bH)\big) \cP_L=U^*\d\Gamma(\cB_1)U-\lambda(\bH),
	\end{equation*}
	where
	\begin{equation*}
	\cB_1=\begin{pmatrix}
	h^{(1)}+c_1K^{(1)}  & \sqrt{c_1c_2} K^{(12)}\\
	\sqrt{c_1c_2} K^{(12)*} & h^{(2)}+c_2K^{(2)} 
	\end{pmatrix},
	\end{equation*}
	$\d\Gamma(\cdot)$ is defined in \eqref{eq:second_quantization_big} and $U$ is given by Theorem \ref{thm:isomorphism}. Since the $K^{(j)}$'s are bounded, there exists $\widetilde{\kappa_1}>0$ (depending on $\lambda(\bH)$) such that
	\begin{equation*}
	\sum_{L=0}^\infty \cP_L\big(\bH-\lambda(\bH)\big) \cP_L\leqslant \d\Gamma^{(1)}(h^{(1)})+\d\Gamma^{(2)}(h^{(2)})+\widetilde{\kappa_1}\cN,
	\end{equation*}
	and hence, by Theorem \ref{thm:upper_lower_bound}, there exists ${\kappa_1}>0$ such that
	\begin{equation*}
	\sum_{L=0}^\infty \cP_L\big(\bH-\lambda(\bH)\big) \cP_L\leqslant {\kappa_1}(\bH+{\kappa_1}).
	\end{equation*}
	The claim is then proven thanks to \eqref{eq:localization}.
	
	To prove \eqref{eq:localization_many_body} we apply Proposition \eqref{prop:localization} with $A=\widetilde{H}_{N}-\lambda(\widetilde{H}_{N})$ and $\sigma=2$. To compute the $A_0$ corresponding to $\widetilde{H}_N-\lambda(\widetilde{H}_N)$, we first note that Theorem \ref{thm:truncated_bound} implies the existence of ${\kappa_3}>0$ such that $\widetilde{H}_N\leqslant {\kappa_3}(\bH+{\kappa_3})$ on $\cF_+^{\leqslant M}$ with $M\leqslant N$. Hence,
	\begin{equation}  \label{eq:diagonal}
	\begin{split}
	\sum_{L=0}^N\cP_L \widetilde{H}_{N}\cP_L&\leqslant {\kappa_3}\sum_{L=0}^N\cP_L(\bH+{\kappa_3})\cP_L\\
	&\leqslant \kappa_{3}'\big( \d\Gamma^{(1)}(h^{(1)})\restriction_{\cF_+^{\,\leqslant N}}+\d\Gamma^{(2)}(h^{(2)})\restriction_{\cF_+^{\,\leqslant N}}+\kappa_3'\big),
	\end{split}
	\end{equation}
	where the second inequality is due to \eqref{eq:upper_lower_bound}. Now, as a consequence of Assumptions ($A_1$) and ($A_2^{\mathrm{MF}}$), there exist constants $\eta,\tau>0$ such that the following stability inequality holds
	\begin{equation*}
	H_N^{\operatorname{MF}}\geqslant \eta \Big( \sum_{i=1}^{N_1}h^{(1)}+\sum_{j=1}^{N_2}h^{(2)}\Big)-\tau N.
	\end{equation*}
	Through a conjugation by $U_N$, last estimate can be rewritten as
	\begin{equation} \label{eq:stability}
	\d\Gamma^{(1)}(h^{(1)})+\d\Gamma^{(2)}(h^{(2)})\leqslant \eta^{-1}\widetilde{H}_N+\eta^{-1}\tau N.
	\end{equation}
	Combining \eqref{eq:stability} with \eqref{eq:diagonal}, we obtain that there exists $\kappa_2'>0$ such that
	\begin{equation} \label{eq:diagonal1}
	\sum_{L=0}^N \cP_L \widetilde{H}_N\cP_L\leqslant \kappa_2'\big( \widetilde{H}_N+\kappa_2' N\big).
	\end{equation}
	Moreover, \eqref{eq:stability} also implies the estimate
	\begin{equation} \label{eq:diagonal2}
	\lambda(\widetilde{H}_N)\geqslant -\tau N
	\end{equation}
	The claim then follows from \eqref{eq:localization}, because \eqref{eq:diagonal1} and \eqref{eq:diagonal2} imply
	\begin{equation*}
	\sum_{L=0}^N \cP_L\big( \widetilde{H}_N-\lambda(\widetilde{H}_N)\big)\cP_L\leqslant \kappa_2\big( \widetilde{H}_N+\kappa_2 N\big)
	\end{equation*}
	for a suitable constant $\kappa_2$.
\end{proof}

\subsection{Validity of Bogoliubov correction}

Now we are ready to conclude the proof of Theorem \ref{thm:mf_correction}. The proof of parts (i)-(ii) has been provided in previous sections (see Theorems \ref{thm:leading-H} (i) and Theorem \ref{thm:upper_lower_bound}). Now we concentrate on parts (iii)-(iv).

\bigskip

\noindent
{\bf Energy upper bound.} We start by proving an upper bound for the ground state energy, namely $\lambda(\widetilde{H}_N)\leqslant \lambda(\bH)+o(1)$. Using Lemma \ref{eq:localization_bogoliubov}, Theorem \ref{thm:truncated_bound}, and the trivial estimate $\bH\geqslant\lambda(\bH)$, we get the following inequality, valid on the space $\cD(\bH)\cap\cF_+^{\,\leqslant N}$, for $1\leqslant M\leqslant N$:
\begin{equation} \label{eq:partial_upper_bound}
\bH\geqslant f_M\Big[\Big(   1+C\sqrt{\frac{M}{N}}\Big)^{-1}\widetilde{H}_N-C\sqrt{\frac{M}{N}}\Big]f_M+\lambda(\bH)g^2_M-\frac{\kappa_1}{M^2}(\bH+\kappa_1).
\end{equation}
Since, by construction, the function $g$ satisfies $g^2(x)\leqslant 2x$, we have
\begin{equation*}
 g^2_M\leqslant\frac{2\cN}{M},
\end{equation*}
and, using the estimate $\cN\leqslant C(\bH+C)$ which follows from Theorem \ref{thm:upper_lower_bound}, we get 
\begin{equation*}
\langle g^2_M\rangle_{\Phi^{\operatorname{gs}}}\leqslant\frac{C}{M}
\end{equation*}
for some constant $C$ depending on $\lambda(\bH)$. This implies that, eventually for $M$ and $N$ large enough, $\langle f^2_M\rangle_{\Phi^{\operatorname{gs}}}>0$. Hence, after taking the expectation value of \eqref{eq:partial_upper_bound} on $\Phi^{\operatorname{gs}}$, we are allowed to divide both sides of the outcome by $\langle f^2_M\rangle_{\Phi^{\operatorname{gs}}}$ and rearrange terms using $f^2+g^2=1$; what we get is
\begin{equation}
\lambda(\bH)\geqslant \Big(1+C\sqrt{\frac{M}{N}}\Big)^{-1}\frac{\langle f_M\Phi^{\operatorname{gs}},\widetilde{H}_Nf_M\Phi^{\operatorname{gs}}\rangle}{\langle f^2_M\rangle_{\Phi^{\operatorname{gs}}}}-C\sqrt{\frac{M}{N}}-\frac{\kappa_1}{M^2\langle f^2_M\rangle_{\Phi^{\operatorname{gs}}}}\big(\lambda(\bH)+\kappa_1\big).
\end{equation}
Now, in the first summand in the r.h.s. we can exploit the fact that the energy of $f_M\Phi^{\operatorname{gs}}$ is certainly bigger than the ground state energy of $\widetilde{H}_N$. For the third summand in the right, in turn, we can use the estimate $1-C/M\leqslant\langle f^2_M\rangle_{\Phi^{\operatorname{gs}}}\leqslant 1$. What we obtain is
\begin{equation*}
\lambda(\bH)\geqslant \Big(1+C\sqrt{\frac{M}{N}}\Big)^{-1} \lambda(\widetilde{H}_N)- C\sqrt{\frac{M}{N}} -\frac{C}{M^2},
\end{equation*}
for a large enough $C>0$. We can optimize last inequality by choosing $M=N^{1/5}$, and this yields the upper bound
\begin{equation}
\lambda(\widetilde{H}_N)\leqslant \lambda(\bH)+CN^{-2/5}.
\end{equation}

\bigskip

\noindent
{\bf Energy lower bound.} We now prove the lower bound $\lambda(\widetilde{H}_N)\geqslant \lambda(\bH)-o(1)$. Using \eqref{eq:localization_many_body}, Theorem \ref{thm:truncated_bound}, and the trivial estimate $\widetilde{H}_N\geqslant \lambda(\widetilde{H}_N)$, we get the inequality
\begin{equation} \label{eq:initial_lower_bound}
\begin{split}
\widetilde{H}_N\geqslant f_M\Big[\Big(1-C\sqrt{\frac{M}{N}}\Big)\bH-C\sqrt{\frac{M}{N}}\Big]f_M+\lambda(\widetilde{H}_N)g^2_M-\frac{\kappa_2}{M^2}\big( \widetilde{H}_N+\kappa_2 N\big).
\end{split}
\end{equation}
We are going to take the expectation value of last inequality on $\Phi_N$, which is a ground state of $\widetilde{H}_N$ defined in \eqref{eq:gs_tilde}. Hence, by definition, we will have $\langle \widetilde{H}_N\rangle_{\Phi_N}=\lambda(\widetilde{H}_N)$.

Moreover, by Theorem \ref{thm:leading-H}, since $\psi_N^{\operatorname{MF}}=U_N^*\Phi_N$ is a ground state of $H_N^{\operatorname{MF}}$, it exhibits condensation in the sense of \eqref{eq:H-1pdm-CV}. Such property directly implies that
\begin{equation*}
\lim_{N\to\infty}\frac{\langle \cN\rangle_{\Phi_N}}{N}=0.
\end{equation*}
This, together with the fact that the function $g$ satisfies $g^2(x)\leqslant 2x$, yields
\begin{equation*}
\langle g_M^2\rangle_{\Phi_N}\leqslant \frac{ 2 \langle\cN\rangle_{\Phi_N}}{M}\mathop{\longrightarrow}_{N\to\infty}0,
\end{equation*}
provided $M$ is chosen such that $\langle\cN\rangle_{\Phi_N}\ll M\ll N$. Last formula implies in particular that, for $N$ large enough and for $M$ in the chosen regime,
\begin{equation*}
\langle f^2_M\rangle_{\Phi_N}>0.
	\end{equation*}
Hence, after taking the expectation value of \eqref{eq:initial_lower_bound} on $\Phi_N$, we are allowed to divide by $\langle f^2_M\rangle_{\Phi_N}$ and rearrange terms using $f^2+g^2=1$. The result is
\begin{equation} \label{eq:partial_lower_bound}
\begin{split}
\lambda(\widetilde{H}_N)\geqslant &\;\Big(1-C\sqrt{\frac{M}{N}}\Big)\frac{\langle f_M\Phi_N,\bH f_M\Phi_N\rangle}{\langle f^2_M\rangle_{\Phi_N}}\\
&\:-C\sqrt{\frac{M}{N}}-\frac{\kappa_2}{M^2\langle f^2_M\rangle_{\Phi_N}}\big(\lambda(\widetilde{H}_N)+\kappa_2 N\big).
\end{split}
\end{equation}
Now, $\langle g^2_M\rangle_{\Phi_N}\to0$ implies $\langle f^2_M\rangle_{\Phi_N}\to1$, and hence the second summand on the l.h.s. of \eqref{eq:partial_lower_bound} converges to zero. In the first summand in the r.h.s. we can certainly estimate from below the energy of $f_M\Phi_N$ with the ground state energy $\lambda(\bH)$. Finally, thanks to \eqref{eq:diagonal2} and $\langle f^2_M\rangle_{\Phi_N}\to1$, the third summand on the right converges to zero if $M$ is chosen such that 
\begin{equation*}
\max\{\sqrt{N},\langle \cN\rangle_{\Phi_N}\}\ll M \ll N.
\end{equation*}
By the last three remarks, \eqref{eq:partial_lower_bound} produces
\begin{equation}
\lambda(\widetilde{H}_N)\geqslant \lambda(\bH)-\delta_N,
\end{equation}
with $\lim_{N\to\infty}\delta_N=0$.

\bigskip

\noindent
{\bf Ground state convergence.} 	As in the proof of the lower bound, let us consider a ground state $\Phi_N$ of $\widetilde{H}_N$. Then, by the estimate $g^2(x)\leqslant 2x$ and the condensation result \eqref{eq:H-1pdm-CV}, we have
	\begin{equation} \label{eq:vanishing_g}
	\lim_{N\to\infty}g_M\Phi_N=0,
	\end{equation}
	provided we choose $M$ such that $\max\{\sqrt{N},\langle \cN\rangle_{\Phi_N}\}\ll M\ll N$. The convergence we want to prove is
	\begin{equation}
	\lim_{N\to\infty} \Phi_N=\Phi^{\operatorname{gs}},
	\end{equation}
	and, thanks to \eqref{eq:vanishing_g}, it is proven if we show
	\begin{equation}
	\lim_{N\to\infty}f_M\Phi^{}_N=\Phi^{\operatorname{gs}}
	\end{equation}
	with $M$ in the regime we already fixed.
	
	First, due to the upper and lower bounds proven above, we have
	\begin{equation*}
	\lambda(\bH)\leqslant\frac{\langle f_M\Phi_N^{\operatorname{}},\bH f_M\Phi_N^{\operatorname{}}\rangle}{\langle f^2_M\rangle_{\Phi_N^{\operatorname{}}}}\leqslant \lambda(\widetilde{H}_N)+\delta_N\leqslant \lambda(\bH)+\delta_N+CN^{-2/5},
	\end{equation*}
	which, together with $\langle f^2_M \rangle_{\Phi_N^{\operatorname{}}}\to 1$, implies	
	\begin{equation} \label{eq:convengence_lower_eigenvalue}
	\lim_{N\to\infty}\langle f_M\Phi_N^{\operatorname{}},\bH f_M\Phi_N^{\operatorname{}}\rangle=\lambda(\bH).
	\end{equation}
	
	Now, let us decompose $f_M\Phi_N^{\operatorname{}}$ into the component along $\Phi^{\operatorname{gs}}$ and the component along its orthogonal complement, namely
	\begin{equation*}
	f_M\Phi_N= a_N \Phi^{\operatorname{gs}}+\Phi^{\perp}_N,
	\end{equation*}
	for a coefficient $a_N\in\mathbb{C}$ and a vector $\Phi_N^\perp\in\cF_+$ such that $\Phi^{\perp}_N\perp\Phi^{\operatorname{gs}}$. Then, since $\Phi^{\operatorname{gs}}$ is an eigenvector of $\bH$, we obtain
	\begin{equation}
	\begin{split}
	\langle f_M\Phi_N^{\operatorname{}},\bH f_M\Phi_N^{\operatorname{}}\rangle=&\; \;|a_N|^2\langle \Phi^{\operatorname{gs}},\bH \Phi^{\operatorname{gs}}\rangle+\langle \Phi^\perp_N,\bH \Phi^\perp_N\rangle\\
	\geqslant &\; \lambda(\bH)|a_N|^2+\inf\sigma(\bH_{|\{\Phi^{\operatorname{gs}}\}^\perp})\|\Phi^\perp\|^2\\
	=&\;\|f_M\Phi_N\|^2\lambda(\bH)+\big(\inf\sigma(\bH_{|\{\Phi^{\operatorname{gs}}\}^\perp})-\lambda(\bH)\big)\|\Phi^\perp\|^2.
	\end{split}
	\end{equation}
	Due to \eqref{eq:convengence_lower_eigenvalue} and \eqref{eq:gap}, we conclude $\lim_{N\to\infty}\Phi^{\perp}=0$, which is equivalent to
	\begin{equation*}
	\lim_{N\to\infty}\|f_M \Phi_N-a_N\Phi^{\operatorname{gs}}\|=0.
	\end{equation*}
	The latter convergence in Fock space norm is equivalent to the convergence in the $j,k$-th sector, with $j+k\leqslant M$
	\begin{equation} \label{eq:L2_convergence}
	\lim_{N\to\infty}\Big\|a_N^{-1}\big(f_MU_N\psi_N^{\operatorname{MF}}\big)_{jk}-\Phi^{\operatorname{gs}}_{jk}\Big\|_{L^2(\mathbb{R}^3)^{\otimes_{\operatorname{sym}}j}\otimes L^2(\mathbb{R}^3)^{\otimes_{\operatorname{sym}}k}}=0.
	\end{equation}
	Here we have used $|a_N|\to1$.
	
	Now, since $\psi_N^{\operatorname{MF}}$ is the ground state of a Schr\"odinger operator, thanks to the diamagnetic inequality we can fix its phase so as to have $\psi_N^{\operatorname{MF}}\geqslant 0$ pointwise almost everywhere. Hence, the function $(f_MU_N\psi_{N}^{\operatorname{MF}})_{jk}$ is non-negative as well, because it is obtained by integrating $\psi_N^{\operatorname{MF}}$ against the positive functions $u_0$ and $v_0$ ($f_M$ contributes only by a non-negative multiplicative factor). Since the $L^2$-convergence in \eqref{eq:L2_convergence} implies pointwise convergence a.e., we deduce that $a_N$ must have a limit $e^{\mathrm{i}\theta}$. If we include this global phase factor inside $\Phi^{\operatorname{gs}}$, we deduce that
	\begin{equation}
	\lim_{N\to\infty}f_M\Phi_N=\Phi^{\operatorname{gs}},
	\end{equation}
	which, thanks to \eqref{eq:vanishing_g}, implies
	\begin{equation}
	\lim_{N\to\infty}U_N\psi_N^{\operatorname{MF}}=\Phi^{\operatorname{gs}}.
	\end{equation}
	By the definition of $U_N$, the latter is equivalent to the desired convergence \eqref{eq:convergence_gs}. \qed


\appendix

\section{Quantum de Finetti Theorem \ref{thm:deF}} \label{sect:deF}

In this Appendix we sketch a proof of Theorem \ref{thm:deF}, following the strategy in \cite{LewNamRou-14} in the one-component case. 
First, let us start with a finite dimensional version. 

\begin{lemma}[Quantum de Finetti theorem in finite dimensions] \label{thm:deF-finite-dim}Let $\cK$ be a Hilbert space with $\dim \cK=d<\infty$. Let $\Psi_{N}$ be a wave function in $\cK^{\otimes N_1}_{\rm sym} \otimes \cK^{\otimes N_2}_{\rm sym}$. Then there exists a Borel probability measure $\mu_{N}$ supported on the set 
	$$\{(u,v):u,v\in \cK, \|u\|=\|v\|=1\}$$
	such that
	$$
	\Tr \left| \gamma^{(k,\ell)}_{\Psi_{N}} - \int |u^{\otimes k}\rangle \langle u^{\otimes k}| \otimes |v^{\otimes \ell}\rangle \langle v^{\otimes \ell}| \d \mu_{N} (u,v) \right| \leqslant Cd\Big( \frac{k_1}{N_1}+ \frac{k_2}{N_2}\Big) 
	$$ 
	for all $k_1\in \{0,1,2,...,N_1\}, k_2\in \{0,1,...,N_2\}.$
\end{lemma}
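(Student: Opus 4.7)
My plan is to exhibit an explicit Chiribella-type formula for $\mu_N$ and then reduce the desired bound to two successive applications of the one-component quantum de Finetti theorem of Christandl--K\"onig--Mitchison--Renner (CKMR). Writing $\sigma$ for the uniform probability measure on the unit sphere of $\cK$ and $D_{N,d}:=\binom{N+d-1}{d-1}=\dim\cK^{\otimes N}_{\rm sym}$, the measure will be
\begin{equation*}
d\mu_N(u,v):=D_{N_1,d}\,D_{N_2,d}\,|\langle u^{\otimes N_1}\otimes v^{\otimes N_2},\Psi_N\rangle|^2\,d\sigma(u)\,d\sigma(v).
\end{equation*}
The resolution $P^{\rm sym}_{N_i}=D_{N_i,d}\int|u^{\otimes N_i}\rangle\langle u^{\otimes N_i}|\,d\sigma(u)$, applied in both factors, combined with the fact that $\Psi_N\in\cK^{\otimes N_1}_{\rm sym}\otimes\cK^{\otimes N_2}_{\rm sym}$, shows at once that $\mu_N$ is a probability measure on pairs of unit vectors.

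For the trace-norm estimate I first upgrade the classical CKMR bound to an \emph{ancilla version}: for any auxiliary Hilbert space $\mathcal{H}$ and any state $\rho$ on $\cK^{\otimes N}_{\rm sym}\otimes\mathcal{H}$,
\begin{equation*}
\mathrm{Tr}\Bigl|\,\mathrm{Tr}_{k+1,\dots,N}[\rho]-D_{N,d}\!\int\!|u^{\otimes k}\rangle\langle u^{\otimes k}|\otimes\langle u^{\otimes N}|\rho|u^{\otimes N}\rangle_1\,d\sigma(u)\,\Bigr|\leqslant\frac{C\,k\,d}{N}.
\end{equation*}
This is not a new theorem: it follows from the standard pure-state CKMR statement by writing $\rho=\sum_i\lambda_i|\Phi_i\rangle\langle\Phi_i|$ and purifying on an enlarged ancilla, the estimate surviving by linearity, the triangle inequality, and the normalization $\sum_i\lambda_i=1$.

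I then apply this ancilla bound first to $\rho=|\Psi_N\rangle\langle\Psi_N|$ with $\mathcal{H}:=\cK^{\otimes N_2}_{\rm sym}$, $N=N_1$, $k=k_1$; after tracing out the remaining $N_2-k_2$ variables in the second sector and denoting the subnormalised operator $\sigma^{(u)}_2:=\langle u^{\otimes N_1}|\Psi_N\rangle\langle\Psi_N|u^{\otimes N_1}\rangle_1$ on $\cK^{\otimes N_2}_{\rm sym}$, this produces
\begin{equation*}
\mathrm{Tr}\Bigl|\gamma^{(k_1,k_2)}_{\Psi_N}-D_{N_1,d}\!\int\!|u^{\otimes k_1}\rangle\langle u^{\otimes k_1}|\otimes(\sigma^{(u)}_2)^{(k_2)}\,d\sigma(u)\Bigr|\leqslant\frac{C\,k_1\,d}{N_1}.
\end{equation*}
A second application of the ancilla bound to each $\sigma^{(u)}_2$ with $N=N_2$, $k=k_2$ and no extra ancilla, followed by integration of the resulting trace-norm error against $D_{N_1,d}\,d\sigma(u)$ and the mass identity $D_{N_1,d}\int\mathrm{Tr}[\sigma^{(u)}_2]\,d\sigma(u)=1$, replaces the integrand above by $|u^{\otimes k_1}\rangle\langle u^{\otimes k_1}|\otimes|v^{\otimes k_2}\rangle\langle v^{\otimes k_2}|\,d\mu_N(u,v)$ at the cost of an additional error $Ck_2d/N_2$. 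The triangle inequality then closes the argument.

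The only genuine subtlety is that the bipartite symmetry $S_{N_1}\times S_{N_2}$ does \emph{not} render $\Psi_N$ fully symmetric on $\cK^{\otimes(N_1+N_2)}$, so one cannot invoke the one-component CKMR theorem once on the whole wave function; one must instead perform the Chiribella decomposition sector by sector, treating the complementary sector as an ancilla, and arrange the bookkeeping so that both dimensional factors $D_{N_i,d}$ emerge in the correct places in the final formula for $\mu_N$.
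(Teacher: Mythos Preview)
Your proposal is correct and follows essentially the same approach as the paper: you define the very same Husimi-type measure $\mu_N(u,v)=c_{N_1}c_{N_2}|\langle u^{\otimes N_1}\otimes v^{\otimes N_2},\Psi_N\rangle|^2\,du\,dv$ and then reduce the trace-norm bound to two successive applications of the one-component CKMR estimate. The paper merely states that ``the rest is similar to the proof of the one-component case in \cite{ChrKonMitRen-07} or \cite{LewNamRou-15}'' without spelling out the ancilla bookkeeping, so your sector-by-sector argument (first component with the second as ancilla, then the second component alone, combined by the triangle inequality) is exactly the detail the paper omits.
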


This is the two-component analogue of the quantitive quantum de Finetti theorem in \cite{ChrKonMitRen-07} (see also \cite{Gottlieb-05,LewNamRou-15} and the references therein for related results).

\begin{proof} Recall the Schur formula  
	$$
	\int_{\|u\|=1} |u^{\otimes N_1}\rangle \langle u^{\otimes N_1}|  \d u= c_{N_1}\1_{\cK^{\otimes N_1}_{\rm sym}}
	$$
	where $\d u$ is the (normalized) Haar measure on the unit sphere $\{u\in \cK,\|u\|=1\}$ and
	$$ c_{N_1}:=\dim \Big( \cK^{\otimes N_1}_{\rm sym} \Big) = {N_1+d-1 \choose d-1}.$$
	From this and a similar identity for $\cK^{\otimes N_1}_{\rm sym}$, we can write 
	\begin{align*}
	\iint_{\|u\|=1,\|v\|=1} |u^{\otimes N_1}\rangle \langle u^{\otimes N_1}| \otimes |v^{\otimes N_2}\rangle \langle v^{\otimes N_2}| \d u \d v = c_{N_1}c_{N_2} \1_{\cK^{\otimes N_1}_{\rm sym}} \otimes \1_{\cK^{\otimes N_2}_{\rm sym}}.
	\end{align*} 
	The latter representation suggests that a natural candidate for $\mu_{N}$ is the Husimi measure
	$$
	\mu_{N}(u,v)= c_{N_1}c_{N_2} \left|\langle u^{\otimes N_1} \otimes v^{\otimes N_2}, \Psi_{N} \rangle\right|^2 \d u \d v.
	$$
	The rest is similar to the proof of the one-component case in \cite{ChrKonMitRen-07} or \cite{LewNamRou-15}. 
\end{proof}

Now we are ready to give
\begin{proof}[Proof of Theorem \ref{thm:deF}] {\bf Step 1: Finite dimensional case.} First we consider the case in which $\cK$ is finite dimensional. By Lemma \ref{thm:deF-finite-dim}, from the wave function $\Psi_{N}$ we can construct a Borel probability measure $\mu_{N}$ supported on the set 
	$$\{(u,v):u,v\in \cK, \|u\|=\|v\|=1\}$$
	such that
	$$
	\lim_{N\to \infty} \Tr \left| \gamma^{(k,\ell)}_{\Psi_{N}} - \int |u^{\otimes k}\rangle \langle u^{\otimes k}| \otimes |v^{\otimes \ell}\rangle \langle v^{\otimes \ell}| \d \mu_{N} (u,v) \right| =0, \quad \forall k,\ell=0,1,2,...
	$$
	On the other hand, since $\{\mu_{N}\}$ is a sequence of Borel probability measures supported on a compact set,  up to a subsequence, $\mu_N$ converges  to a Borel probability measure $\mu$ on $\{(u,v):u,v\in \cK, \|u\|=\|v\|=1\}$. This ensures that
	$$
	\lim_{N\to \infty} \Tr \left| \gamma^{(k,\ell)}_{\Psi_{N}} - \int |u^{\otimes k}\rangle \langle u^{\otimes k}| \otimes |v^{\otimes \ell}\rangle \langle v^{\otimes \ell}| \d \mu (u,v) \right| =0, \quad \forall k,\ell=0,1,2,...
	$$
	
	\noindent
	{\bf Step 2: Infinite dimensional case.} Let $\{\varphi_n\}_{n=1}^\infty$ be an orthonormal basis of $\cK$. Let $P_n$ be the projection onto the subspace $W_n =  {\rm span}(\varphi_1,...,\varphi_n)$. 
	
	Since the operators $\gamma_{\Psi_{N}}^{(k,\ell)}$ is bounded in trace class uniformly in $N$, up to a subsequence,  we have
	$$
	\gamma_{\Psi_{N}}^{(k,\ell)} \wto \gamma^{(k,\ell)}
	$$
	weakly-* in trace class for all $k,\ell\geqslant 0$.  Consequently, for every $n\in \mathbb{N}$ fixed, we have the strong convergence
	\bq \label{eq:PGammaN-PGamma}
	P_n^{\otimes k+\ell}  \gamma_{\Psi_{N}}^{(k,\ell)} P_n^{\otimes k+\ell} \to P_n^{\otimes k+\ell}  \gamma^{(k,\ell)} P_n^{\otimes k+\ell},\quad \forall k,\ell \geqslant 0.
	\eq

	Now using the geometric localization method in Fock space of \cite{Lewin-11}, we can find a state $\Gamma_{N,n}$ in the Fock space $\cF(W_n) \otimes \cF(W_n)$, located in the sectors of $\leqslant N$ particles, whose reduced density matrices are
	$$
	\Gamma_{N,n}^{(k,\ell)}= P_n^{\otimes k+\ell}  \gamma_{\Psi_{N}}^{(k,\ell)} P_n^{\otimes k+\ell}, \quad \forall \, 0\leqslant k\leqslant N_1, 0\leqslant \ell \leqslant N_2.
	$$
	Since $W_n$ is finite dimensional, we can argue as in Step 1 for the state $\Gamma_{N_,n}$ to find a Borel probability measure $\mu_{n}$ supported on the set 
	$$\{(u,v):u,v\in W_n, \|u\|=\|v\|=1\}$$ 	such that
		\begin{align}\label{eq:PGammaN-dmuN}
	&\lim_{N\to \infty} \Tr \left| P_n^{\otimes k+\ell}  \gamma_{\Psi_{N}}^{(k,\ell)} P_n^{\otimes k+\ell} \right.\nn\\
	&- \left.  \Big( \Tr \Big[ P_n^{\otimes k+\ell}  \gamma_{\Psi_{N}}^{(k,\ell)} P_n^{\otimes k+\ell} \Big] \Big) \int |u^{\otimes k}\rangle \langle u^{\otimes k}| \otimes |v^{\otimes \ell}\rangle \langle v^{\otimes \ell}| \d \mu_{n} (u,v)  \right|=0.
	\end{align}
	From \eqref{eq:PGammaN-PGamma} and \eqref{eq:PGammaN-dmuN}, we deduce that
		\begin{align} \label{eq:PGamma-dmu}
	P_n^{\otimes k+\ell}  \gamma^{(k,\ell)} P_n^{\otimes k+\ell} =  C_{k,\ell,n}\int |u^{\otimes k}\rangle \langle u^{\otimes k}| \otimes |v^{\otimes \ell }\rangle \langle v^{\otimes \ell}| \d \mu_{n} (u,v) , \quad \forall k,\ell\geqslant 0
	\end{align}
	where 
	$$
	C_{k,\ell,n} = \Tr\Big[P_n^{\otimes k+\ell}  \gamma^{(k,\ell)} P_n^{\otimes k+\ell}\Big].
	$$
	
	Next, note that if $m\geqslant n$, then the measure $\mu_{n}$ is the cylindrical projection $(\mu_{m})_{|W_n \oplus W_n}$. Therefore, according to \cite[Lemma 1]{Skorokhod-74}, there exists a Borel probability measure $\mu$ supported on 
	$$\{(u,v):u,v\in \cK, \|u\| \leqslant 1, \|v\| \leqslant 1\}$$
	such that for all $n=1,2,...$, the measure $\mu_n$ coincides with the cylindrical projection $\mu_{|W_n \oplus W_n}$.  Consequently, \eqref{eq:PGamma-dmu} can be rewritten as
		\begin{align*} 
	 P_n^{\otimes k+\ell}  \gamma^{(k,\ell)} P_n^{\otimes k+\ell} &=  \int |(P_nu)^{\otimes k}\rangle \langle (P_n u)^{\otimes k}| \otimes |(P_n v)^{\otimes \ell}\rangle \langle (P_n v)^{\otimes \ell}| \d \mu (u,v) \\
	& =  P_n^{\otimes k+\ell} \Big(  \int |u^{\otimes k}\rangle \langle u^{\otimes k}| \otimes |v^{\otimes \ell}\rangle \langle v^{\otimes \ell}| \d \mu(u,v) \Big) P_n^{\otimes k+\ell}.
	\end{align*}
	Since $P_n\to \1_{\cK}$ as $n\to \infty$, we deduce that 
	\begin{align} \label{eq:Gamma-dmu-final}
	\gamma^{(k,\ell)} =  \int |u^{\otimes k}\rangle \langle u^{\otimes k}| \otimes |v^{\otimes \ell}\rangle \langle v^{\otimes \ell}| \d \mu(u,v) , \quad \forall k,\ell\geqslant 0.
	\end{align} 
	
	\noindent 
	{\bf Step 3: Strong convergence.} If we assume further that $\gamma_{N}^{(1,0)}$ and $\gamma_{N}^{(0,1)}$ converge strongly in trace class, then $\Tr \gamma^{(1,0)}= \Tr \gamma^{(0,1)}=1.$ 	By taking the trace of \eqref{eq:Gamma-dmu-final}, we get
	$$
	\int \|u\|^2 \d \mu(u,v) = \int \|v\|^2 \d \mu(u,v) =1.
	$$
	Thus we can conclude that $\mu$ is supported on 
	$$\{(u,v):u,v\in \cK, \|u\| = \|v\| =1\}.$$
	Moreover, by \eqref{eq:Gamma-dmu-final} again we have $
	\Tr \gamma^{(k,\ell)}=1$, and hence $\gamma_{N}^{(k,\ell)}$ converges  to $\gamma^{(k,\ell)}$ strongly in trace class for all $k,\ell \geqslant 0.$
\end{proof}


\begin{thebibliography}{10}

\bibitem{AmmNie-08}
{\sc Z.~Ammari and F.~Nier}, {\em Mean field limit for bosons and infinite
  dimensional phase-space analysis}, Annales Henri Poincar\'e, 9 (2008),
  pp.~1503--1574.
\newblock 10.1007/s00023-008-0393-5.

\bibitem{Anap-Hott-Hundertmark-2017}
{\sc I.~Anapolitanos, M.~Hott, and D.~Hundertmark}, {\em {Derivation of the
  {H}artree equation for compound {B}ose gases in the mean field limit}}, Rev.
  Math. Phys., 29 (2017), pp.~1750022.
  
\bibitem{Bogoliubov-47} {\sc  N. N. Bogoliubov}, {\em On the theory of superfluidity}, J. Phys. (USSR), 11 (1947), pp. 23

\bibitem{BBCS-17} {\sc C. Boccato, C. Brennecke, S. Cenatiempo, and B. Schlein}, {\em Complete Bose-Einstein condensation
in the Gross-Pitaevskii regime}, Commun. Math. Phys. online first (2017), arXiv:1703.04452.

\bibitem{BBCS-18} {\sc C. Boccato, C. Brennecke, S. Cenatiempo, and B. Schlein}, {\em Bogoliubov Theory in the Gross-Pitaevskii Limit}, Preprint 2018, arXiv:1801.01389.
 

\bibitem{ChrKonMitRen-07}
{\sc M.~Christandl, R.~K{\"o}nig, G.~Mitchison, and R.~Renner}, {\em
  {One-and-a-half quantum de {F}inetti theorems}}, Comm. Math. Phys., 273
  (2007), pp.~473--498.


\bibitem{Derezinski2006_CCR-CAR}
{\sc J.~Derezi{\'n}ski}, {\em {Introduction to representations of the canonical
  commutation and anticommutation relations}}, in {Large {C}oulomb systems},
  vol.~695 of {Lecture Notes in Phys.}, Springer, Berlin, 2006, pp.~63--143.
  
  \bibitem{DerNap-14} {\sc  J. Derezi\'nski and M. Napi\'orkowski}, {\em   Excitation spectrum of interacting bosons in the mean-field infinite-volume limit}, Ann. Henri Poincar\'e 15 (2014), pp. 2409--2439.

\bibitem{Dyson-57} {\sc F.J. Dyson}, {\em {Ground-State Energy of a Hard-Sphere Gas}}, 
Phys. Rev. 106 (1957), pp. 20.



\bibitem{Bohn_PRL1997}
{\sc B.~D. Esry, C.~H. Greene, J.~J.~P. Burke, and J.~L. Bohn}, {\em
  {Hartree-Fock Theory for Double Condensates}}, Phys. Rev. Lett., 78 (1997),
  pp.~3594--3597.
  
\bibitem{ErdSchYau-04} {\sc L. Erd\"os, H.T. Yau, and B. Schlein}, {\em
  {Derivation of the Gross-Pitaevskii Hierarchy for the
Dynamics of Bose-Einstein Condensate}}, Comm. Pure Appl. Math. 59 (2006), 
  pp.~1659--1741.
  
  
  \bibitem{Gottlieb-05}
{\sc A.~D. Gottlieb}, {\em Examples of bosonic de {F}inetti states over finite
  dimensional {H}ilbert spaces}, J. Stat. Phys., 121 (2005), pp.~497--509.

\bibitem{GreSei-13}
{\sc P.~Grech and R.~Seiringer}, {\em {The excitation spectrum for weakly
  interacting bosons in a trap}}, Comm. Math. Phys., 322 (2013), pp.~559--591.

\bibitem{Hall2008_multicompBEC_experiments}
{\sc D.~S. Hall}, {\em {Multi-Component Condensates: Experiment}}, Springer
  Berlin Heidelberg, Berlin, Heidelberg, 2008, pp.~307--327.

\bibitem{HMEWC-1998}
{\sc D.~S. Hall, M.~M. E., J.~R. Enscher, C.~E. Wieman, and E.~A. Cornell},
  {\em {The Dynamics of Component Separation in a Binary Mixture of
  Bose-Einstein Condensates}}, Phys. Rev. Lett., 81 (1998), pp.~1539--1542.

\bibitem{Hall-Matthews-Wieman-Cornell_PRL81-1543}
{\sc D.~S. Hall, M.~R. Matthews, C.~E. Wieman, and E.~A. Cornell}, {\em
  {Measurements of Relative Phase in Two-Component Bose-Einstein Condensates}},
  Phys. Rev. Lett., 81 (1998), pp.~1543--1546.
  
  \bibitem{HudMoo-75}
{\sc R.~L. Hudson and G.~R. Moody}, {\em Locally normal symmetric states and an
  analogue of de {F}inetti's theorem}, Z. Wahrscheinlichkeitstheorie und Verw.
  Gebiete, 33 (1975/76), pp.~343--351.

\bibitem{Lewin-11}
{\sc M.~Lewin}, {\em Geometric methods for nonlinear many-body quantum
  systems}, J. Funct. Anal., 260 (2011), pp.~3535--3595.

\bibitem{LewNamRou-14}
{\sc M.~Lewin, P.~T. Nam, and N.~Rougerie}, {\em {Derivation of {H}artree's
  theory for generic mean-field {B}ose systems}}, Adv. Math., 254 (2014),
  pp.~570--621.

\bibitem{LewNamRou-15}
\leavevmode\vrule height 2pt depth -1.6pt width 23pt, {\em {Remarks on the
  quantum de {F}inetti theorem for bosonic systems}}, Appl. Math. Res. Express.
  AMRX,  (2015), pp.~48--63.

\bibitem{LewNamSerSol-15}
{\sc M.~Lewin, P.~T. Nam, S.~Serfaty, and J.~P. Solovej}, {\em {Bogoliubov
  spectrum of interacting {B}ose gases}}, Comm. Pure Appl. Math., 68 (2015),
  pp.~413--471.

\bibitem{LieLos-01}
{\sc E.~H. Lieb and M.~Loss}, {\em Analysis}, vol.~14 of Graduate Studies in
  Mathematics, American Mathematical Society, Providence, RI, 2nd~ed., 2001.
  
\bibitem{LieSei-02} {\sc E.H. Lieb and R. Seiringer}, {\em Proof of Bose-Einstein Condensation for Dilute Trapped Gases}, Phys. Rev. Lett. 88 (2002), pp. 170409.
  
\bibitem{LieSei-06} {\sc  E. H. Lieb and R. Seiringer}, {\em Derivation of the Gross-Pitaevskii equation for rotating Bose gases}, Commun. Math. Phys., 264 (2006), pp. 505--537.

\bibitem{LieSei-10} {\sc  E. H. Lieb and R. Seiringer}, {\em The Stability
of Matter in Quantum Mechanics}, Cambridge University Press, 2010.

\bibitem{LieSeiYng-00} {\sc E. H. Lieb, R. Seiringer, and J. Yngvason}, {\em Bosons in a trap: A rigorous derivation
of the Gross-Pitaevskii energy functional}, Phys. Rev. A, 61 (2000), p. 043602.


  \bibitem{LieSeiSol-05}
{\sc E.~H. Lieb, R.~Seiringer, and J.~P. Solovej}, {\em Ground-state energy of
  the low-density {F}ermi gas}, Phys. Rev. A, 71 (2005), p.~053605.

\bibitem{LSeSY-ober}
{\sc E.~H. Lieb, R.~Seiringer, J.~P. Solovej, and J.~Yngvason}, {\em {The
  mathematics of the {B}ose gas and its condensation}}, vol.~34 of {Oberwolfach
  Seminars}, Birkh{\"a}user Verlag, Basel, 2005.

\bibitem{LieSol-01}
{\sc E.~H. Lieb and J.~P. Solovej}, {\em {Ground state energy of the
  one-component charged {B}ose gas}}, Comm. Math. Phys., 217 (2001),
  pp.~127--163.

\bibitem{Malomed2008_multicompBECtheory}
{\sc B.~Malomed}, {\em {Multi-Component Bose-Einstein Condensates: Theory}},
  Springer Berlin Heidelberg, Berlin, Heidelberg, 2008, pp.~287--305.

\bibitem{MTCBM-PRL2004_BEC_heteronuclear}
{\sc M.~W. Mancini, G.~D. Telles, A.~R.~L. Caires, V.~S. Bagnato, and L.~G.
  Marcassa}, {\em {Observation of Ultracold Ground-State Heteronuclear
  Molecules}}, Phys. Rev. Lett., 92 (2004), p.~133203.

\bibitem{Matthews_HJEWC_DMStringari_PRL1998}
{\sc M.~R. Matthews, D.~S. Hall, D.~S. Jin, J.~R. Ensher, C.~E. Wieman, E.~A.
  Cornell, F.~Dalfovo, C.~Minniti, and S.~Stringari}, {\em {Dynamical Response
  of a Bose-Einstein Condensate to a Discontinuous Change in Internal State}},
  Phys. Rev. Lett., 81 (1998), pp.~243--247.

\bibitem{MO-pseudospinors-2017}
{\sc A.~Michelangeli and A.~Olgiati}, {\em {Gross-Pitaevskii non-linear
  dynamics for pseudo-spinor condensates}}, Journal of Nonlinear Mathematical
  Physics, 24 (2017), pp.~426--464.

\bibitem{M-Olg-2016_2mixtureMF}
\leavevmode\vrule height 2pt depth -1.6pt width 23pt, {\em {Mean-field quantum
  dynamics for a mixture of Bose--Einstein condensates}}, Analysis and
  Mathematical Physics, 7 (2017), pp.~377--416.

\bibitem{Modugno-Ferrari-Inguscio-etal-Science2001_multicompBEC}
{\sc G.~Modugno, G.~Ferrari, G.~Roati, R.~J. Brecha, A.~Simoni, and
  M.~Inguscio}, {\em {Bose-Einstein Condensation of Potassium Atoms by
  Sympathetic Cooling}}, Science, 294 (2001), pp.~1320--1322.

\bibitem{Modugno-PRL-2002}
{\sc G.~Modugno, M.~Modugno, F.~Riboli, G.~Roati, and M.~Inguscio}, {\em {Two
  Atomic Species Superfluid}}, Phys. Rev. Lett., 89 (2002), p.~190404.

\bibitem{MBGCW-1997}
{\sc C.~J. Myatt, E.~A. Burt, R.~W. Ghrist, E.~A. Cornell, and C.~E. Wieman},
  {\em {Production of Two-overlapping Bose-Einstein Condensates by Sympathetic
  Cooling}}, Phys. Rev. Lett., 78 (1997), pp.~586--589.

\bibitem{Nam-Napiorkowski-Solovej-2016}
{\sc P.~T. Nam, M.~Napi{\'o}rkowski, and J.~P. Solovej}, {\em {Diagonalization
  of bosonic quadratic {H}amiltonians by {B}ogoliubov transformations}}, J.
  Funct. Anal., 270 (2016), pp.~4340--4368.

\bibitem{NamRouSei-16}
{\sc P.~T. Nam, N.~Rougerie, and R.~Seiringer}, {\em {Ground states of large
  bosonic systems: the {G}ross--{P}itaevskii limit revisited}}, Anal. PDE, 9
  (2016), pp.~459--485.
  
\bibitem{NamSei-15}  {\sc P. T. Nam and R. Seiringer}. {\em Collective excitations of Bose gases in the mean-field regime}, Arch. Rational Mech. Anal. 215 (2015), pp.~ 381--417.

\bibitem{AO-GPmixture-2016volume}
{\sc A.~Olgiati}, {\em {Effective Non-linear Dynamics of Binary Condensates and
  Open Problems}}, in {Advances in Quantum Mechanics: Contemporary Trends and
  Open Problems}, G.~Dell'Antonio and A.~Michelangeli, eds., {Springer INdAM
  Series}, Springer International Publishing, 2017, pp.~239--256.
 
 \bibitem{Pizzo-15} {\sc A. Pizzo}, {\em Bose particles in a box I. A convergent expansion of the ground state of a
three-modes Bogoliubov Hamiltonian}, Preprint 2015, arXiv:1511.07022.

\bibitem{Papp-Wieman_PRL2006_heteronuclear_RbRb}
{\sc S.~B. Papp and C.~E. Wieman}, {\em {Observation of Heteronuclear Feshbach
  Molecules from a $^{85}\mathrm{Rb}\char21{}^{87}\mathrm{Rb}$ Gas}}, Phys.
  Rev. Lett., 97 (2006), p.~180404.

\bibitem{pita-stringa-2016}
{\sc L.~Pitaevskii and S.~Stringari}, {\em {Bose-{E}instein {C}ondensation and
  {S}uperfluidity}}, Oxford University Press, 2016.
  
 \bibitem{RouSpe-16} {\sc  N. Rougerie and D. Spehner}, {\em Interacting bosons in a double-well potential : localization regime}, Preprint 2016, 
arXiv:1612.05758.

\bibitem{Seiringer-11}
{\sc R.~Seiringer}, {\em {The excitation spectrum for weakly interacting
  bosons}}, Comm. Math. Phys., 306 (2011), pp.~565--578.

\bibitem{Skorokhod-74}
{\sc A.~Skorokhod}, {\em Integration in {H}ilbert space}, Ergebnisse der
  Mathematik und ihrer Grenzgebiete, Springer-Verlag, 1974.
  
  \bibitem{Stormer-69}
{\sc E.~St{\o}rmer}, {\em Symmetric states of infinite tensor products of
  {$C^{\ast} $}-algebras}, J. Functional Analysis, 3 (1969), pp.~48--68.

\end{thebibliography}
\def\cprime{$'$}

\end{document}